\let\savedegree\bigtimes
\let\bigtimes\relax
\let\bigtimes\savedegree
\newcommand{\la}{\langle}
\newcommand{\ra}{\rangle}
\newcommand{\sgn}{\operatorname{sgn}}
\newcommand{\phys}{\text{phys}}
\newcommand{\vspan}{\operatorname{span}}
\newcommand{\per}{\text{per}}
\newcommand{\qua}{\text{qua}}
\newcommand{\bif}{\text{bif}}
\newcommand{\guess}{\text{guess}}
\newcommand{\td}{\tilde}
\newcommand{\mfq}{\mathfrak{q}}
\newcommand{\Vp}[1]{\mc V_{#1}^\per}
\setlist[enumerate]{labelsep=*, leftmargin=1.5pc}
\setlist[enumerate]{label=\normalfont(\roman*), ref=\roman*}
\newtheorem{theorem}{Theorem}[section]
\newtheorem{lem}[theorem]{Lemma}
\newtheorem{definition}[theorem]{Definition}
\theoremstyle{definition}
\newtheorem{remark}[theorem]{Remark}
\numberwithin{equation}{section}
\newcommand{\IGNORE}[1]{}
\newcommand{\ignore}[1]{}
\newcommand{\veps}{\varepsilon}
\newcommand{\opn}{\operatorname}
\newcommand{\diag}{\operatorname{diag}}
\newcommand{\phm}{\phantom{-}}
\newcommand{\mbb}[1]{\mathbb{#1}}
\newcommand{\mc}[1]{\mathcal{#1}}
\newcommand{\jt}{\textstyle}
\newcommand{\der}[2]{\frac{\partial #1}{\partial #2}}
\newcommand{\pa}{\partial}
\newcommand{\e}[1]{{(#1)}}
\pretocmd\@bibitem{\color{black}\csname keycolor#1\endcsname}{}{\fail}
\newcommand\citecolor[1]{\@namedef{keycolor#1}{\color{blue}}}
\begin{document}

\author[Jon Wilkening and Xinyu Zhao]{Jon Wilkening and Xinyu Zhao}
\address{Department of Mathematics, University of California, Berkeley,
  Berkeley, CA 94720, USA}
\email{wilkening@berkeley.edu}
\address{Department of Mathematics and Statistics, McMaster University,
  Hamilton, Ontario, Canada L8S 4K1}
\email{zhaox171@mcmaster.ca}


\keywords{}
\subjclass[]{}

\title{Spatially quasi-periodic bifurcations from periodic traveling
  water waves and a method for detecting bifurcations using signed
  singular values}

\begin{abstract}
  We present a method of detecting bifurcations by locating zeros of a
  signed version of the smallest singular value of the Jacobian.  This
  enables the use of quadratically convergent root-bracketing
  techniques or Chebyshev interpolation to locate bifurcation points.
  Only positive singular values have to be computed, though the method
  relies on the existence of an analytic or smooth singular value
  decomposition (SVD). The sign of the determinant of the Jacobian,
  computed as part of the bidiagonal reduction in the SVD algorithm,
  eliminates slope discontinuities at the zeros of the smallest
  singular value.  We use the method to search for spatially
  quasi-periodic traveling water waves that bifurcate from
  large-amplitude periodic waves. The water wave equations are
  formulated in a conformal mapping framework to facilitate the
  computation of the quasi-periodic Dirichlet-Neumann operator.  We
  find examples of pure gravity waves with zero surface tension and
  overhanging gravity-capillary waves. In both cases, the waves have
  two spatial quasi-periods whose ratio is irrational.  We follow the
  secondary branches via numerical continuation beyond the realm of
  linearization about solutions on the primary branch to obtain
  traveling water waves that extend over the real line with no two
  crests or troughs of exactly the same shape.  The pure gravity wave
  problem is of relevance to ocean waves, where capillary effects can
  be neglected. Such waves can only exist through secondary
  bifurcation as they do not persist to zero amplitude. The
  gravity-capillary wave problem demonstrates the effectiveness of
  using the signed smallest singular value as a test function for
  multi-parameter bifurcation problems. This test function becomes
  mesh independent once the mesh is fine enough.
\end{abstract}

\maketitle
\markboth{J. WILKENING AND X. ZHAO}{QUASI-PERIODIC TRAVELING WATER WAVES}

\noindent\textbf{Keywords:} Water waves, quasi-periodic solution,
bifurcation detection, numerical continuation, analytic singular value
decomposition, Bloch-Fourier theory


\section{Introduction}
\label{sec:intro}



This paper has the dual purpose of carrying out a detailed
computational study of new families of spatially quasi-periodic
traveling water waves that bifurcate from finite-amplitude periodic
waves and developing a new test function for detecting bifurcations
in general. We begin with a discussion of water waves.

The study of traveling solutions of the free-surface water wave
problem has a long history.  Stokes \cite{stokes1880theory} first
studied two-dimensional periodic traveling solutions in the
gravity-driven case without surface tension.  He constructed an
asymptotic expansion of the solution in powers of an amplitude
parameter and conjectured that the highest-amplitude solution
possesses a wave crest with a sharp $120^\circ$ interior corner angle.  This was proved 100 years
later by Amick, Fraenkel and Toland \cite{amick:82}. Longuet-Higgins
and Fox \cite{lhf:78} carried out a matched asymptotic analysis of the
almost-highest traveling water wave and discovered that interesting
oscillatory structures develop near the wave crest as the wave height
approaches that of the sharply-crested wave of greatest height. They
also showed that the wave speed ceases to increase monotonically as
the wave crest sharpens, and instead possesses an infinite number of
turning points.

The problem of traveling gravity waves in two dimensions can be
studied as a bifurcation problem with a one-dimensional kernel
\cite{buffoni2000regularity, buffoni2000sub,
  toland1996stokes}. In this setting, nonlinear solutions form a
bifurcation branch from the zero amplitude solution. This branch is
called the primary branch. Plotnikov \cite{plotnikov1992nonuniqueness}
proved that there are infinitely many critical points, either turning
points or bifurcation points on this primary branch. In
\cite{buffoni2000sub}, Buffoni, Dancer and Toland showed that for each
sufficiently large value of the integer $m$, there exists a secondary
bifurcation branch of solutions of period $2\pi m$ bifurcating from a
$2\pi$-periodic solution near the highest wave. These results build on
earlier work of Chen and Saffman \cite{chen1980numerical}, who
computed subharmonic bifurcations corresponding to $m=2$ and $m=3$.
Zufiria computed a sequence of 3 bifurcations within an $m=6$
framework, the third being a symmetry-breaking bifurcation that leads
to a branch of non-symmetric traveling gravity waves \cite{zufiria87}.
Vanden-Broeck
\cite{vanden2014periodic} further extended Chen and Saffman's results
to $m = 9$ and provided numerical evidence that the bifurcated
solutions approach non-periodic waves when $m$ approaches infinity.
Our goal in this paper is to study such secondary bifurcations with
infinite spatial period, which requires new techniques beyond the
usual setting of periodic waves.

In \cite{bridges1996spatially}, as an alternative to imposing periodic
boundary conditions, Bridges and Dias propose a quasi-periodic (QP)
framework to study weakly nonlinear traveling solutions of the
gravity-capillary water wave problem. The solution of the linearized
problem can be written as a superposition of two cosine waves whose
wave numbers $k_1$ and $k_2$ are both solutions of the dispersion
relation
\begin{equation}\label{eq:disp:rel}
  c^2=\frac gk + \tau k.
\end{equation}
Here $g$ is the acceleration of gravity, $\tau$ is the surface tension
coefficient, and $c$ is the wave speed, which must be the same for
$k=k_1$ and $k=k_2$. This is a generalization of the Wilton ripple
problem \cite{akers2012wilton, trichtchenko:16, wilton1915} to the
case that $k_1$ and $k_2$ are irrationally related.  In
\cite{quasi:trav}, the present authors propose a conformal mapping
approach \cite{tanveer91, dyachenko1996analytical, dyachenko1996nonlinear,
  choi1999exact, dyachenko2001dynamics, ruban:2005,
  dyachenko2016branch} to generalize the QP framework of Bridges and
Dias to the fully nonlinear water wave regime, numerically confirming
the existence of spatially quasi-periodic traveling water waves. The
conformal map simplifies the computation of the Dirichlet-Neumann
operator to a QP variant of the Hilbert transform.

The solutions in the two-parameter family computed in
\cite{quasi:trav} persist to zero amplitude as QP waves, where they
include both branches of the dispersion relation \eqref{eq:disp:rel}
as special cases. The left and right branches are classified by
Djordjevic and Redekopp \cite{djordjevic77} as gravity waves and
capillary waves, respectively. At the scale of gravity waves in the
ocean, the wave number of capillary waves is typically $10^7$ times
larger than that of gravity waves. Such a large wave number ratio is
computationally out of reach in our framework; moreover, we would not
expect to find interesting nonlinear interactions between waves
of such vastly different length scales.  Thus, if we wish to find QP
traveling waves that might be found in the ocean, we must look beyond
families of solutions that persist to zero-amplitude and consider the
secondary bifurcation problem. In this regime, one may as well set
$\tau=0$, which motivates our study of QP solutions of the pure
gravity wave problem. For simplicity, and for comparison to the
gravity-capillary waves found in \cite{quasi:trav}, we focus on the
case $k_2/k_1=1/\sqrt2$.

A useful feature of the conformal mapping framework is the possibility
of studying overhanging waves \cite{dyachenko:newell:16}, where the
wave profile is not the graph of a single-valued function.  In
\cite{quasi:ivp}, the present authors consider the spatially
quasi-periodic initial value problem for water waves and compute a
solution that begins at $t=0$ with a periodic wave profile but a QP
velocity potential. Each of the infinite number of wave crests evolves
differently as time advances, with some waves overturning and others
flattening out. We now pose the question of whether overhanging
quasi-periodic traveling water waves exist. Clearly surface tension or
hydroelastic forces will be needed to balance the force of gravity in
a steady, overhanging state. Crapper \cite{crapper} discovered a
family of exact overhanging periodic traveling solutions of the pure
capillary wave problem (with no gravity). Kinnersley \cite{kinnersley76}
  found analogues of these exact solutions in finite depth, expressed in
  terms of elliptic functions, again for the pure capillary wave problem.
Schwartz and Vanden-Broeck
\cite{schwartz:79} computed and classified several families of
traveling gravity-capillary waves, providing several examples of
overhanging waves.  Guyenne and Parau \cite{guyenne:12} and Wang
\emph{et~al.} \cite{wang2013two} computed overhanging traveling
solutions of the flexural gravity wave problem of an ice sheet over
deep water. Akers, Ambrose and Wright \cite{akers2014gravity} show that
Crapper's pure capillary wave solution can be perturbed to account for
gravity. In the present work, we search for QP bifurcations from the
two-parameter family of periodic waves referred to by Schwartz and
Vanden-Broack as ``type 1 waves.'' We obtain a new two-parameter
family of QP gravity-capillary waves. The largest-amplitude waves in
this family exhibit an infinite, non-repeating pattern of erratically-spaced
overhanging waves, each with a different shape.

Because QP waves are represented by periodic functions on
higher-dimen\-sional tori, computing large-amplitude QP traveling
waves is a high-dimen\-sional nonlinear optimization problem.  We
build on the basic framework of Wilkening and Yu
\cite{wilkening2012overdetermined}, who formulated the search for
standing waves as an overdetermined nonlinear least squares problem.
In the present setting of QP traveling waves, we optimize over a
two-dimensional array of Fourier coefficients to represent a
one-dimensional QP function. The 1D water wave equations are imposed
in the characteristic direction $(k_1,k_2)$ at each point of a uniform
grid overlaid on the two-dimensional QP torus in physical (as opposed
  to Fourier) space. The problem is overdetermined since we zero-pad
the Fourier representation of the solution so that the nonlinear least
squares solver only has access to the lower-frequency modes of
the solution. This reduces aliasing errors and improves the efficiency
of the computation by reducing the number of degrees of freedom for a
given grid spacing. We wrote a custom Levenberg-Marquardt solver that
employs ScaLapack on a supercomputer to carry out the linear least
squares problems that govern the trust region search steps
\cite{nocedal}.

We also present a new test function\footnote[1]{Here we use
  terminology from the computational dynamical systems literature
  \cite{bindel14,kuznetsov:book}, where a test function changes sign
  at a simple bifurcation; it is not related to test functions from
  the theory of distributions.}  for locating bifurcation points in
finite or infinite dimensional equilibrium problems. We combine the best
features of the singular value decomposition (SVD) approach
\cite{chow:svd:88, shen:bif:97}; the Jacobian determinant approach
\cite{kuznetsov:book, allgower2012numerical, bindel14, pde2path}; the
minimally augmented systems approach \cite{griewank:84, allgower:97,
  kuznetsov:book, govaerts:book}; and the continuation of invariant
subspaces (CIS) approach \cite{demmel01,dieci01,friedman01,beyn01,
  bindel08,bindel14}.  To find branch points in equilibrium problems,
say $f(q)=0$, where $q=(u,s)$ with $u\in\mbb R^n$ and $s\in\mbb R$,
one searches for parameter values $s$ on a path $\mfq(s)$ of solutions
at which the dimension of the kernel of the Jacobian $\mc
J(s)=f_q(\mfq(s))$ increases by one. Often one defines an augmented
(or extended) Jacobian, $\mc J^e(s)$, consisting of $\mc J(s)$ with an
extra row consisting of a multiple of the transpose of $\mfq'(s)$,
which is tangent to the primary branch of solutions.  In bifurcation
problems arising in low-dimensional dynamical systems, the determinant
of $\mc J^e(s)$ can be used as a test function that changes sign at
simple bifurcations. In this case, the method presented here has
  the same theoretical justification; see, e.g., \cite{allgower2012numerical}.
In high dimensions, e.g.~after discretizing a
continuous problem, it is preferable for many reasons (discussed
  in Section~\ref{sec:detect:bif}) to search for
zeros of the smallest singular value $\sigma_\text{min}(s)$ of $\mc
J^e(s)$ rather than using the determinant to locate bifurcation
points.  But singular values are usually computed as non-negative
numbers, which leads to slope discontinuities in
$\sigma_\text{min}(s)$ at its zeros and precludes the use of
root-bracketing methods. As shown by Shen \cite{shen:bif:97}, it is
still possible to devise a Newton-type method, but this involves
approximations of the second derivative operator, $f_{qq}$, applied in
certain directions using finite difference approximations, which we
aim to avoid.

We instead take advantage of the existence of an analytic singular
value decomposition \cite{kato,mehrmann:svd}, or ASVD,
$\mc J^e(s)=U(s)\Sigma(s)V(s)^T$, where the diagonal entries of
$\Sigma(s)$, denoted $\sigma_i(s)$, can change sign and do not
necessarily remain monotonically ordered. A smooth SVD \cite{dieci99}
is sufficient if $\mc J^e(s)$ is not analytic. Sign changes in
$\sigma_\text{min}(s)$ eliminate the slope discontinuities at its
zeros. The standard SVD transfers negative signs in
$\sigma_\text{min}(s)$ to the corresponding column of $U(s)$ or row of
$V(s)^T$, which changes the sign of one of their determinants.  We can
recover the signed version by defining $\chi(s)=\det(U(s))\det(V(s))
\sigma_\text{min}(s) = (\sgn\det \mc J^e(s))\sigma_\text{min}(s)$ as a
test function whose zeros coincide with bifurcation points, where now
the standard SVD with positive singular values is being used. This
opens the door to using quadratically convergent derivative free
methods such as Brent's method \cite{brent:73} to locate the zeros of
$\chi(s)$. We show that once enough Fourier modes are employed to
resolve both the underlying periodic traveling wave and the left and
right singular vectors corresponding to $\sigma_\text{min}(s)$, then
$\chi(s)$ becomes independent of $N$, the Fourier cutoff index. Other
methods we are familiar with make stronger use of the
finite-dimensional status of the truncated problem, leading to
test functions with no infinite-dimensional limit. Having a
globally defined test function $\chi(s)$ that does not change on
refining the mesh is particularly useful in multi-parameter problems,
with $s\in\mbb{R}^d$. We demonstrate this with $d=2$ in
Section~\ref{sec:num:cap}.

Another effective method of locating bifurcations is to border
$\mc J^e(s)$ with a carefully chosen additional row and column to obtain a
matrix $\mc J^{ee}(s)$ and solve
\begin{equation}\label{eq:J:ee}
  \mc J^{ee}(s)\begin{pmatrix} r \\ \psi \end{pmatrix} =
  \begin{pmatrix} 0 \\ 1 \end{pmatrix}, \qquad
  r,0\in\mbb R^{n+1}, \quad \psi,1\in\mbb R.
\end{equation}
The scalar function $\psi(s)$ can then be used as a test function
whose zeros coincide with the desired branch points \cite{griewank:84,
  allgower:97, beyn01, bindel14, kuznetsov:book, govaerts:book}.  We
discuss this approach further in Section~\ref{sec:detect:bif} and
compare the relative merits of $\chi(s)$ and $\psi(s)$, one being that
$\psi(s)$ will change discontinuously if the mesh is refined
adaptively as $s$ changes while $\chi(s)$ will not.

To reduce the cost of searching for quasi-periodic bifurcation points,
we take advantage of Bloch-Fourier theory for diagonalizing linear
operators over periodic potentials \cite{kittel:book}. This technique
has proved useful for studying subharmonic stability of water waves
\cite{longuet:78, mclean:82, mackay:86, oliveras:11, tiron2012linear,
  trichtchenko:16, murashige:20}, but requires reformulation to fit
with our quasi-periodic torus framework. Decomposing the Fr\'echet
derivative of the traveling water wave equations in the QP torus
representation into a direct sum of Bloch-Fourier operators allows us
to focus on a single Bloch frequency when searching for bifurcations.
The dimension of the restricted operator corresponds to points in a
one-dimensional subset of the two-dimensional Fourier lattice, which
makes it possible to locate QP bifurcations from very large-amplitude
periodic traveling waves. Tracking the bifurcation curves beyond
linearization about traveling waves then brings in the full 2D array
of Fourier modes for the torus representation of the solution.

This paper is organized as follows. In Section~\ref{sec:quasi:prelim}
we review the equations governing spatially quasi-periodic traveling
water waves \cite{quasi:trav} and describe the numerical continuation
algorithm we use to compute both periodic and quasi-periodic traveling
waves. In Section~\ref{sec:bif}, we introduce spaces of real-analytic
torus functions, work out the Bloch-Fourier theory of quasi-periodic
bifurcations from traveling waves, define and analyze the test
function $\chi(s)$ for identifying bifurcation points, and show how to
compute the sign of the determinant of a matrix efficiently along
with its singular values.  In Section~\ref{sec:num}, we present
numerical results for the QP gravity wave problem and study a
two-parameter bifurcation problem leading to examples of overhanging
QP traveling gravity-capillary waves. In the appendices, we discuss
the effects of floating-point arithmetic on $\chi(s)$ and prove that
the Fr\'echet derivative for this problem is a bounded operator
between spaces of real analytic torus functions when the parameters of
these spaces are chosen appropriately. Concluding remarks are given in
Section~\ref{sec:conclusion}.

\section{Spatially Quasi-periodic Water Waves}
\label{sec:quasi:prelim}

\subsection{Governing equations for traveling waves}
\label{sec:gov:eqs}
We study the problem of traveling gravity-capillary water waves over a
two-dimensional, irrotational, inviscid fluid of infinite depth.  We
adopt a conformal mapping formulation \cite{quasi:trav, tanveer91, choi1999exact,
  sergey:I, dyachenko2001dynamics} of the problem; specifically, we
consider a conformal map
\begin{equation}
\td z(w) = \td x(w) + i\td y(w), \qquad \qquad w = \alpha + i\beta
\end{equation}
that maps the lower half plane 
\begin{equation}
  \mbb C^- := \{\alpha+i\beta: \quad \alpha\in\mbb R,\quad \beta<0\}
\end{equation}
to the fluid domain in physical space. Here time has been frozen at
$t=0$ and dropped from the notation, and we place a tilde over
functions defined on the real line to simplify the notation for
higher-dimensional torus representation of quasi-periodic
functions. This is the opposite convention of
\cite{quasi:trav,quasi:ivp}, but seems more natural in hindsight. When
the free surface is single-valued in physical space, the fluid domain
has the form
\begin{equation}
\Omega:= \{(x, y): -\infty < y < \td\eta^\phys(x),  \quad x \in \mbb R\},
\end{equation}
where $\td\eta^\phys$ is the free surface elevation. 
To fix the map, we assume that $\td z$ satisfies
\begin{equation}\label{eq:assump:z}
  \lim_{\beta\to-\infty} \td z_w = 1
  \qquad \text{and} \qquad \td z(0) = 0.
\end{equation}
We also assume that $\td z(w)$ can be extended continuously to
$\overline{\mbb C^-}$ and maps the real line $\beta = 0$ to the free
surface.  We introduce the notation $\td\zeta=\td z\vert_{\beta=0}$,
$\td\xi=\td x\vert_{\beta=0}$ and $\td\eta=\td y\vert_{\beta=0}$ so
that the free surface is parameterized by
\begin{equation}\label{eq:zeta:xi:eta}
  \td\zeta (\alpha) = \td\xi(\alpha) + i \td\eta(\alpha),
  \qquad \td\eta(\alpha) = \td\eta^\phys(\td\xi(\alpha)),
  \qquad \alpha\in\mathbb R.
\end{equation}
If the free-surface is not single-valued in physical space, one may
drop the condition that $\td\eta(\alpha) = \td\eta^\phys(\td\xi(\alpha))$ and
simply require that $\td\zeta(\alpha)$ does not self-intersect; see
\cite{quasi:trav}.

In this paper, we focus on the cases where $\td\eta(\alpha)$ is
periodic or quasi-periodic with two quasi-periods. As defined in
\cite{moser1966theory, dynnikov2005topology}, such a function
$\td\eta$ is of the form
\begin{equation}\label{eq:quasi_eta}
  \td\eta(\alpha) = \eta(k_1\alpha, k_2\alpha), \qquad 
  \eta(\alpha_1, \alpha_2) = \sum_{(j_1, j_2)\in\mathbb{Z}^2}
  \hat{\eta}_{j_1, j_2} e^{i(j_1\alpha_1 + j_2\alpha_2)},
\end{equation}
where $\eta$ is defined on the torus $\mathbb{T}^2 =
\mathbb{R}^2/(2\pi\mbb Z)^2$.  After non-dimensionalization, we may
assume that the two basic frequencies of $\td\eta$ are
\begin{equation}
  k_1=1, \qquad\quad k_2=k,
\end{equation}
where $k$ is irrational.  We refer to such functions $\eta$ as
\emph{torus functions} and $\td\eta$ as having been \emph{extracted}
or \emph{reconstructed} from $\eta$.  One can observe that the form
(\ref{eq:quasi_eta}) still applies when $\td\eta$ is periodic; in this
case, the corresponding torus function satisfies
\begin{equation}\label{eq:periodic:eta}
  \eta(\alpha_1, \alpha_2) = \td\eta(\alpha_1), \qquad \qquad
  \alpha_1, \alpha_2 \in \mbb T.
\end{equation}
This allows us to use $\eta(\alpha_1,\alpha_2)$ to represent both
quasi-periodic and periodic functions $\td\eta(\alpha)$.

In \cite{quasi:trav}, quasi-periodic traveling gravity-capillary waves
on deep water are formulated in terms of $\eta(\alpha_1,\alpha_2)$.
The governing equations for $\eta$ read
\begin{equation}\label{eq:govern}
\begin{gathered}
  P\left[\frac{b}{2J} + g\eta -\tau\kappa\right] = 0,
  \qquad b = c^2, \qquad \xi = H[\eta], \\[3pt]
  J = (1+\partial_\alpha\xi)^2 + (\partial_\alpha\eta)^2,
  \qquad
  \kappa = \frac{(1+\partial_\alpha\xi)(\partial_\alpha^2\eta) 
- (\partial_\alpha\eta)(\partial_\alpha^2\xi)}
{J^{3/2}},
\end{gathered}
\end{equation}
where $c$ is the wave speed; $g$ is the gravitational acceleration;
$\tau$ is the surface tension coefficient; and $\xi$, $J$ and $\kappa$
are auxiliary torus functions representing the quasi-periodic part of
the horizontal parameterization of the free surface, the square of the
arclength element, and the curvature, respectively. For gravity waves,
$\tau$ is zero.  The operators $P$, $H$ and $\partial_\alpha$ are
defined by
\begin{equation}\label{eq:hilbert}
\begin{gathered}
  P = \operatorname{id} - P_0, \qquad 
  P_0[f] 
  =  \frac{1}{(2\pi)^2}\int_{\mbb{T}^2}f(
    \alpha_1, \alpha_2)\,d\alpha_1\,d\alpha_2, \\
  H[f](\alpha_1, \alpha_2) = \sum_{j_1, j_2\in\mathbb{Z}}
  (-i)\sgn(j_1 + k j_2) \hat{f}_{j_1, j_2} e^{i(j_1\alpha_1+ j_2\alpha_2)}, \\
  \partial_\alpha f(\alpha_1, \alpha_2) =
          (\partial_{\alpha_1} + k\partial_{\alpha_2})f(\alpha_1, \alpha_2).
\end{gathered}
\end{equation}
Here $\partial_\alpha = (1, k)^T \cdot \nabla$ is the directional
derivative in the $(1,k)$ direction on the torus; $H$ is the
``quasi-periodic Hilbert transform'' \cite{quasi:trav}; and
\begin{equation}\label{eq:sgn:def}
  \sgn(a) = \begin{cases} 1, & a>0, \\[-3pt]
    0, & a=0, \\[-3pt]
    -1, & a<0. \end{cases}
\end{equation}
Note that $\partial_\alpha$ and $H$ act on torus functions in such a
way that extracting the 1d function from the result is equivalent to
first extracting the function and then applying the 1d derivative or
Hilbert transform operators:
\begin{equation}
  \begin{aligned}
  (\pa_\alpha f)(\alpha,k\alpha) &= \pa_\alpha\big[ f(\alpha,k\alpha)\big], \\[-8pt]
  (Hf)(\alpha,k\alpha) &= H\big[f(\cdot,k\cdot)\big](\alpha) =
  \frac1\pi\,PV\!\int_{-\infty}^\infty \frac{f(\beta,k\beta)}{\alpha-\beta}\, d\beta.
  \end{aligned}
\end{equation}
Both $\pa_\alpha$ and $H$ are diagonal in 2D Fourier space, with
Fourier multipliers
\begin{equation}\label{eq:da:H:fmult}
  \widehat{\pa_{j_1, j_2}\!\!}\, = i(j_1 + k j_2), \qquad\quad
  \hat H_{j_1, j_2} = (-i) \sgn(j_1 + k j_2),
\end{equation}
respectively. Different choices of $k$ lead to different operators.

One can check that if $\eta(\alpha_1, \alpha_2)$ is a solution of
(\ref{eq:govern}), then $\eta(-\alpha_1, -\alpha_2)$ is also a
solution.  In this paper we will focus on real-valued traveling
solutions with even symmetry: $\eta(-\alpha_1,
  -\alpha_2)=\eta(\alpha_1, \alpha_2)$. Equivalently, we assume the
Fourier coefficients of $\eta$ satisfy
\begin{equation}\label{eq:assump:eta}
\hat{\eta}_{j_1, j_2} = \hat{\eta}_{-j_1, -j_2} = \overline{\hat{\eta}}_{j_1, j_2},
\qquad \qquad j_1, j_2 \in \mbb{Z}. 
\end{equation}
Since adding a constant to $\eta$ will not change (\ref{eq:govern}),
we assume $P_0[\eta] = 0$ when computing traveling waves.  Under these
assumptions, we reconstruct $\td\xi$ in (\ref{eq:zeta:xi:eta}) from
$\xi=H[\eta]$ using
\begin{equation}\label{eq:xi:def}
\td\xi(\alpha) = \alpha + \xi(\alpha, k\alpha),
\end{equation}
which is an odd function. For most torus functions, adding a tilde
denotes evaluation at $(\alpha,k\alpha)$, but we treat $\xi$ as a
special case and include the linear growth term $\alpha$ in
\eqref{eq:xi:def}. This is why we refer to $\xi$ as the quasi-periodic
part of the horizontal parameterization.

\begin{remark}\label{rmk:hat:eta00}
  It is preferable to report solutions with zero mean in physical
  space rather than in conformal space. Let us briefly use a
  superscript 0 to denote a traveling wave with the above properties,
  which satisfies
\begin{equation}
  \hat{\eta}^\e0_{0,0} = 0.
\end{equation}
The desired solution only requires adjusting the $(0,0)$ Fourier mode:
\begin{equation}
  \hat\eta_{0,0} = -P_0\big[ (\eta^\e0)(1+\xi_\alpha^\e0)\big], \qquad
  \hat\eta_{j_1,j_2} = \hat\eta^\e0_{j_1,j_2},\qquad (j_1,j_2)\ne(0,0),
\end{equation}
where $\lim_{a\to\infty}\frac1{2a}\int_{-a}^a
\td\eta(\alpha)\td\xi_\alpha(\alpha)\,d\alpha=0$ is the zero mean
condition, and we make use of $\td\xi^\e0_\alpha=(1+\xi_\alpha^\e0)$,
from \eqref{eq:xi:def}. Thus, we may assume $\hat\eta_{0,0}=0$ when
computing periodic waves, quasi-periodic waves, and the bifurcation
points where they meet; we can then adjust the mean of each wave
computed as a simple post-processing step.
\end{remark}

\begin{remark} \label{rmk:symmetry}
One can check that if $\eta(\alpha_1, \alpha_2) = \sum_{j_1,
  j_2\in\mbb{Z}}\hat{\eta}_{j_1, j_2} e^{i(j_1\alpha_1 +
    j_2\alpha_2)}$ is a real-valued solution of (\ref{eq:govern}) with
even symmetry, then the following three functions are also real-valued
solutions with even symmetry:
\begin{equation}\label{eq:sym:trans}
\begin{gathered}
  \eta(\alpha_1 + \pi, \alpha_2) = \sum_{j_1, j_2\in\mbb{Z}}
   (-1)^{j_1}\hat{\eta}_{j_1, j_2} e^{i(j_1\alpha_1 + j_2\alpha_2)}, \\
   \eta(\alpha_1, \alpha_2 + \pi) = \sum_{j_1, j_2\in\mbb{Z}}
   (-1)^{j_2}\hat{\eta}_{j_1, j_2} e^{i(j_1\alpha_1 + j_2\alpha_2)},\\
   \eta(\alpha_1 + \pi, \alpha_2 + \pi) =
   \sum_{j_1, j_2\in\mbb{Z}}(-1)^{j_1 + j_2}\hat{\eta}_{j_1, j_2}
   e^{i(j_1\alpha_1 + j_2\alpha_2)}.
\end{gathered}
\end{equation}
\end{remark}

\subsection{Numerical Algorithm}
\label{sec:num:alg}

Following \cite{quasi:trav}, we formulate (\ref{eq:govern}) as a
nonlinear least-squares problem and define objective and residual
functions
\begin{equation}\label{eq:FR:def}
  \mc{F}[\eta, \tau, b] = \frac{1}{8\pi^2}
  \int_{\mbb{T}^2} \mc{R}^2[\eta,\tau,b]\,d\alpha_1\, d \alpha_2,
  \qquad
  \mc{R}[\eta,\tau,b] = P\left[\frac{b}{2J} + g\eta
    -\tau\kappa\right].
\end{equation}
We use square brackets for the functional $\mc F$ and operator $\mc R$
so that $\mc R(\alpha_1,\alpha_2)$ can be short-hand for $\mc
R[\eta,\tau,b](\alpha_1,\alpha_2)$.  We represent a torus function $f$
in two ways, either through its values on a uniform $M_1\times M_2$
grid on $\mbb T^2$, or via the fast Fourier transform coefficients of
these sampled values:
\begin{equation}\label{eq:fft:2d}
  \begin{aligned}
    f_{m_1, m_2} &= f(2\pi m_1/M_1, 2\pi m_2/M_2)
    = \sum_{j_2=0}^{M_2-1}\sum_{j_1=0}^{M_1-1} \breve f_{j_1,j_2}
    e^{2\pi i(j_1m_1/M_1 + j_2m_2/M_2)}, \\
    \breve{f}_{j_1, j_2} &=
    \frac{1}{M_1M_2} \sum_{m_2 = 0}^{M_2-1} \sum_{m_1 = 0}^{M_1-1}
    f_{m_1, m_2} e^{-2\pi i (j_1m_1/M_1 + j_2m_2/M_2)} =
    \sum_{n_1,n_2\in\mbb Z} \hat f_{j_1+n_1M_1,j_2+n_2M_2}.
  \end{aligned}
\end{equation}
Here $\hat f_{j_1,j_2}=(4\pi^2)^{-1}\iint_{\mbb T^2}
f(\alpha_1,\alpha_2)e^{-i(j_1\alpha_1+j_2\alpha_2)}d\alpha_1\,d\alpha_2$
are the actual Fourier modes of $f(\alpha_1,\alpha_2)$, which are
related to the FFT modes $\breve f_{j_1,j_2}$ by the above aliasing
formula. We only store the values of the periodic arrays
$\{f_{m_1,m_2}\}$ and $\{\breve f_{j_1,j_2}\}$ with indices
\begin{equation}\label{eq:mj:ranges}
  0\le m_1<M_1, \qquad
  0\le m_2<M_2, \qquad
  0\le j_1\le M_1/2, \qquad
  0\le j_2<M_2,
\end{equation}
where we take advantage of $\breve f_{-j_1,-j_2}=\overline{\breve
  f_{j_1,j_2}}$ when $f(\alpha_1,\alpha_2)$ is real-valued to avoid
having to store modes with index $j_1<0$. We assume $M_1$ and
$M_2$ are sufficiently large and $|\hat f_{j_1,j_2}|$
decays sufficiently fast as $|j_1|+|j_2|\to\infty$ that
\begin{equation}\label{eq:dft:approx}
  \breve f_{j_1,j_2} \approx \begin{cases}
    \hat f_{j_1,j_2} & 0\le j_2\le M_2/2, \\
    \hat f_{j_1,j_2-M_2} & M_2/2<j_2<M_2.
  \end{cases}
\end{equation}
When evaluating $\mc R[\eta,\tau,b]$, we only vary $b$, $\tau$
and the leading Fourier coefficients of $\eta$,
\begin{equation}\label{eq:N1N2:range}
\hat{\eta}_{j_1, j_2}, \qquad 
\left(-N_1\leq j_1\leq N_1,  \quad -N_2\leq j_2\leq N_2\right),
\end{equation}
where $N_1$ and $N_2$ are cutoff frequencies typically taken to be
around $M_1/3$ and $M_2/3$, respectively. The higher-frequency Fourier
modes $\hat\eta_{j_1,j_2}$ with $|j_1|>N_1$ or $|j_2|>N_2$ are
set to zero. This means that the FFT modes $\breve\eta_{j_1,j_2}$ in the
range \eqref{eq:mj:ranges} with $j_1>N_1$ or $N_2<j_2<M_2-N_2$ are set
to zero.  Since $\breve\eta_{0,0}$ is also set to zero at this stage
of the computation (and later adjusted via
  Remark~\ref{rmk:hat:eta00}), and since $\eta$ is real-valued and
even, satisfying \eqref{eq:assump:eta}, the number of independent
FFT coefficients $\breve\eta_{j_1,j_2}$ is
\begin{equation}\label{eq:N:tot}
  N_{\text{tot}} = N_1(2N_2+1) + N_2.
\end{equation}

\begin{remark}\label{rmk:hat:breve}
For simplicity, from now on we focus on the Fourier modes of the torus
functions that arise (e.g., $\hat\eta_{j_1,j_2}$), with the
understanding that in the numerical implementation they map to FFT
modes (e.g., $\breve\eta_{j_1,j_2}$) with indices in the ranges
\eqref{eq:mj:ranges} via the assumption \eqref{eq:dft:approx}.
\end{remark}

We evaluate $R[\eta,\tau,b](\alpha_1,\alpha_2)$ on the $M_1\times M_2$
grid. Using the trapezoidal rule on the integral \eqref{eq:FR:def},
which is a spectrally accurate approximation, we obtain an
overdetermined nonlinear least-squares problem from $\mbb
R^{N_\text{tot}}$ to $\mbb R^{M_1M_2}$, namely
\begin{equation}\label{eq:min:F}
  \text{minimize} \; F[\{\hat\eta_{j_1,j_2}\};\tau,b] = \frac12r^Tr, \qquad
  r_{m_2M_1+m_1} = \frac{\mc R(2\pi m_1/M_1,2\pi m_2/M_2)}{\sqrt{M_1M_2}}.
\end{equation}
Here we have written the objective function to suggest that $\tau$ and
$b=c^2$ are prescribed parameters and the independent Fourier modes
$\{\hat\eta_{j_1,j_2}\}$ are the unknowns over which the objective
function is minimized. For small-amplitude traveling waves, as
explained in \cite{quasi:trav}, it is better to prescribe two Fourier
modes, say $\hat\eta_{1,0}$ and $\hat\eta_{0,1}$, and include $\tau$
and $b$ among the unknowns to be determined by solving
\eqref{eq:min:F}.  In the present work, as explained below, we study
large-amplitude waves and use a hybrid choice in which $\tau$ and
either $\hat\eta_{1,0}$, $\hat\eta_{0,1}$ or the wave height (defined
  below) are prescribed while $b$ and the other Fourier modes of
$\eta$ are found by the solver.

One of the examples presented in \cite{quasi:trav} involves computing
a two-parameter family of quasi-periodic traveling waves with
$k=1/\sqrt2$ held fixed and $\hat\eta_{1,0}$ and $\hat\eta_{0,1}$ prescribed
to vary over the interval $I=[-0.01,0.01]$.  Each of these amplitude
parameters is assigned values on a 16-point Chebyshev-Lobatto grid
over $I$, and polynomial interpolation is used to express
the surface tension coefficient $\tau$ in the form
\begin{equation}\label{eq:tau:c:expand}
  \tau(\hat\eta_{1,0},\hat\eta_{0,1})
  = \sum_{m=0}^{15}\sum_{n=0}^{15}
  \check\tau_{mn}
  T_m(100\hat\eta_{1,0})T_n(100\hat\eta_{0,1}),
  \qquad\quad
  (\hat\eta_{1,0},\hat\eta_{0,1})\in I^2,
\end{equation}
where $\{T_m(x)\}_{m=0}^\infty$ are the Chebyshev polynomials.
The wave speed $c$ is similarly interpolated.  The tensor product
Chebyshev coefficients $\check\tau_{mn}$ and $\check c_{mn}$ are found
to decay below $10^{-15}$ in amplitude for $m+n\ge10$, suggesting that
\eqref{eq:tau:c:expand} is accurate to double-precision accuracy
throughout the parameter region $I^2$. Setting both amplitude
parameters to zero gives $\tau=\tau_\text{lin}=g/(k_1k_2)$ and
$c^2=c_\text{lin}^2 = (k_1+k_2)\tau$, as predicted by linear theory,
with $g=1$, $k_1=1$, $k_2=k=1/\sqrt2$ in this case.

The above approach sidesteps the difficulty of finding bifurcation
points within the family of periodic traveling waves. Instead, all the
waves in the polynomial interpolation leading to
\eqref{eq:tau:c:expand} are genuinely quasi-periodic, with non-zero
values of both $\hat\eta_{1,0}$ and $\hat\eta_{0,1}$. Indeed, by using
an even number of Chebyshev-Lobatto nodes, 0 is not among the
interpolation points in either direction. After the expansion
coefficients $\check\tau_{mn}$ or $\check c_{mn}$ have been identified
via interpolation, we can set $\hat\eta_{1,0}$ or $\hat\eta_{0,1}$ to
zero to find the curves $\tau(\hat\eta_{1,0},0)$ or
$\tau(0,\hat\eta_{0,1})$ where a bifurcation exists from periodic
traveling waves of wave number 1 or $k=1/\sqrt2$ to quasi-periodic
waves of wave numbers $\vec k=(1,k)$. These curves are the
intersection of a two-parameter family of quasi-periodic solutions
with a two-parameter family of periodic solutions.

In the present paper, we address the difficulty sidestepped above.  We
begin by computing families of large-amplitude periodic traveling
waves, which can be parameterized by surface tension and one amplitude
parameter.  We may assume without loss of generality that
$\eta(\alpha_1,\alpha_2)=\td\eta(\alpha_1)$ is independent of
$\alpha_2$, as in \eqref{eq:periodic:eta}. All the Fourier modes
$\hat\eta_{j_1,j_2}$ with $j_2\ne0$ are then zero, so we may drop
the $j_2$ subscript and view $\{\hat\eta_j\}_{j\in\mbb Z}$ as the
coefficients of the 1D Fourier expansion of $\td\eta(\alpha)$.
The unknowns for the periodic problem are then
\begin{equation}\label{eq:p:trav:1d}
  \vec p=(b\,,\,\hat\eta_{1}\,,\,\hat\eta_{2}\,,
    \,\hat\eta_{3}\,,\,\cdots\,,\,\hat\eta_{N_1-1}\,,\,\hat\eta_{N_1}),
\end{equation}
where $\hat\eta_{0}$ is set to 0 as discussed in
Remark~\ref{rmk:hat:eta00}, $\hat\eta_{-j}=\hat\eta_{j}$ due to
\eqref{eq:assump:eta}, and $\tau$ is a prescribed parameter.  We also
define a wave amplitude by introducing coefficients $\nu_i$ and
setting
\begin{equation}\label{eq:amp:mu}
  \vec\nu\cdot\vec p = \nu_0b+\sum_{j=1}^{N_1}\nu_jp_j = \mu,
\end{equation}
where $\mu$ is the prescribed amplitude parameter. The two cases we consider
are
\begin{equation}\label{eq:nu:mu}
  \begin{alignedat}{2}
    \text{case 1: } \; \vec\nu &= (0,1,0,\dots,0), &\quad \mu&=\hat\eta_{1}
    \quad (=\hat\eta_{1,0}), \\
    \text{case 2: } \; \nu_j&=0 \; (j\text{ even})\;,\;\; \nu_j=4 \; (j\text{ odd}), &
    \quad \mu &= h =\td\eta(0)-\td\eta(\pi).
  \end{alignedat}
\end{equation}
In case 2, the reason for $\nu_j=4$ when $j$ is odd is that
$\hat\eta_{-j}=\hat\eta_j$  and together they contribute $4\hat\eta_{j}$
to the wave height, $h=\td\eta(0)-\td\eta(\pi)$, when $j$ is odd. Note that
$b$ ($j=0$) and the even modes with $j\ge2$ have no effect on
the wave height due to cancellation. For this periodic sub-problem, both
$\eta(\alpha_1, \alpha_2)$ and $\mc R[\eta,\tau,b](\alpha_1,\alpha_2)$
are independent of $\alpha_2$, so one can simplify \eqref{eq:min:F} to
\begin{equation}\label{eq:min:F:1d}
  \text{minimize} \; F[\vec p;\mu,\tau] = \frac12r^Tr, \qquad
  r_{m} = \begin{cases}
    \frac{\mc R(2\pi m/M_1,0)}{\sqrt{M_1}}, & \; 0\le m< M_1, \\
    \vec\nu\cdot\vec p - \mu, & \; m=M_1,
  \end{cases}
\end{equation}
where we have added an equation to enforce \eqref{eq:amp:mu} and
re-organized the argument list of $F$ to separate the prescribed
parameters from the unknowns.

\begin{remark}\label{rmk:eta10}
  When solutions of the periodic problem are embedded in the 2D torus
  representation, it is often preferable to employ the double-index
  Fourier notation, $\hat\eta_{j_1,j_2}=\hat\eta_{j_1}\delta_{j_2,0}$,
  where $\delta_{ij}$ is the Kronecker delta. The amplitude parameter
  in case 1 of \eqref{eq:nu:mu} is then 
  $\hat\eta_{1,0}$.
\end{remark}

\begin{remark}\label{rmk:drop:mu}
In case 1 of \eqref{eq:nu:mu}, we can alternatively drop the last
component of $\vec r$ in \eqref{eq:min:F:1d} and remove
$\hat\eta_{1}$ from the vector $\vec p$ of unknowns over which the
minimization is performed. This mode is set equal to $\mu$ externally
and not varied by the solver.
\end{remark}

Given an initial guess $(\vec p^0;\mu,\tau)$ for a periodic traveling
wave, we use the Levenberg-Marquardt method \cite{nocedal} to minimize
$F$ in \eqref{eq:min:F:1d} over the unknowns $\vec p$ holding
$(\mu,\tau)$ fixed.  We employ the delayed Jacobian update strategy
proposed by Wilkening and Yu \cite{wilkening2012overdetermined} in the
context of computing standing water waves.  For the initial guess, we
use linear theory to get started on one or several straight-line paths
through parameter space, i.e., through the $(\mu,\tau)$-plane. Once
two solution on such a path have been computed, we use linear
extrapolation for the starting guesses of successive solutions on the
numerical continuation path. We increase $N_1$ and $M_1$ adaptively as
we go to maintain spectral accuracy of the computed traveling waves.

We search for bifurcation points along the numerical continuation path
of periodic traveling waves using the methods of Section~\ref{sec:bif}
below. When a bifurcation branch is found, we follow it using the same
strategy as for periodic traveling waves, but with $\hat\eta_{0,1}$
replacing $\mu$ as the first numerical continuation parameter.  We use
$\tau$ as the second parameter in both cases.  On this branch,
$\hat\eta_{0,1}=0$ corresponds to the periodic traveling wave, and the
list of unknowns, $\vec p$, is expanded to include the modes
$\hat\eta_{j_1,j_2}$ with $j_2\ne0$:
\begin{equation}\label{eq:p:trav:2d}
  \vec p = (b,\{\hat\eta_{j_1,j_2}\}), \qquad
  \left(
    \begin{gathered}
      1\le j_1\le N_1 \\
      -N_2\le j_2\le N_2
    \end{gathered}
    \right) \; \text{or} \;
  \left(
    \begin{gathered}
      j_1=0 \\
      2\le j_2\le N_2
    \end{gathered}
    \right).
\end{equation}
Here we follow the strategy of Remark~\ref{rmk:drop:mu} and remove
$\hat\eta_{0,1}$ from the list $\vec p$ rather than add a component to
$\vec r$ to govern the amplitude.  Since $b$ has replaced
$\hat\eta_{0,1}$ in the list, we see that in the quasi-periodic
problem, the number of degrees of freedom of the nonlinear least
squares problem \eqref{eq:min:F} is $N_\text{tot}$ from
\eqref{eq:N:tot}. For the numerical continuation path, we hold $\tau$
fixed with its value at the bifurcation point and vary
$\hat\eta_{0,1}$ with progressively larger values, increasing $N_1$,
$N_2$, $M_1$ and $M_2$ as needed to maintain spectral accuracy. We
stop when we run out of computational resources to further increase
the problem size. Details will be given in Sections~\ref{sec:num:grav}
and~\ref{sec:num:cap} below.

The Levenberg-Marquardt algorithm requires the evaluation of the
Jacobian $\mc J_{ij} = \pa r_i/\pa p_j$, which can be carried out
analytically or with finite differences.  We take the analytical
approach. Let us denote the Fr\'echet derivative of $\mc{R}$ by
\begin{equation}\label{eq:R:frechet}
  D_q\mc{R} = (D_\eta\mc{R}, D_\tau\mc{R}, D_b\mc{R}), \qquad\quad
  q = (\eta,\tau,b),
\end{equation}
and employ ``dot notation'' \cite{benj1,wilkening2012overdetermined}
for the variational derivative of a quantity at $q$ in the $\dot q$
direction:
\begin{equation}\label{eq:dot:R:def}
  \dot{ \mc R}(q, \dot q) = D_q\mc{R}[q]\dot q =
  \frac{d}{d\veps}\bigg\vert_{\veps=0}\mc R[q+\veps\dot q].
\end{equation}
We will not use a dot for time derivatives in this paper.
Explicitly, we have
\begin{equation}\label{eq:lin:R}
\begin{gathered}
  \dot {\mc{R}} = 
  P\bigg[ \frac{1}{2J} \dot b - 
    \frac{b}{2J^2} \dot {J} + g \dot{\eta}
    -\tau \dot{\kappa} -\kappa\dot \tau
    \bigg], \\[3pt]
  \dot{\xi} = H\big[\dot{\eta}\big],\qquad \qquad
  \dot{J} = 2\Big\{\big(1+\partial_\alpha
    \xi\big)\big(\partial_\alpha \dot{\xi}\big) + 
  \big(\partial_\alpha \eta\big)\big(
    \partial_\alpha\dot{ \eta}\big)\Big\}, \\[3pt]
  \dot{\kappa} = -\frac{3\kappa}{2J}\, \dot{J} + 
  \frac{1}{J^{3/2}}\Big\{\big(\partial_\alpha^2 \eta\big)
  \big(\partial_\alpha \dot{ \xi}\big)
  + \big(1+\partial_\alpha \xi\big)\big(
    \partial_\alpha^2 \dot{ \eta}\big)
  - \big(\partial_\alpha^2 \xi\big)\big(
    \partial_\alpha \dot{\eta}\big)
  - \big(\partial_\alpha \eta\big)\big(
    \partial_\alpha^2 \dot{\xi}\big)\Big\},
\end{gathered}
\end{equation}
where $\eta$, $\xi$, $J$, $\kappa$, $\dot\eta$, $\dot\xi$, $\dot J$
and $\dot\kappa$ are torus functions; $b$, $\tau$, $\dot b$ and
$\dot\tau$ are scalars; and $\xi$ represents only the quasi-periodic
part of $\td\xi$, via \eqref{eq:xi:def}. With these formulas, it
is easy to evaluate the entries of the Jacobian
\begin{equation}\label{eq:Jij:levmar}
  \mc J_{i,j} = \der{r_i}{p_j} =
  \frac{\dot R(2\pi m_1/M_1,2\pi m_2/M_2)}{\sqrt{M_1M_2}}, \qquad
  0\le i = m_2M_1+m_1 < M_1M_2,
\end{equation}
where $m_1,m_2$ are in the ranges \eqref{eq:mj:ranges} and $j$
enumerates the entries of $\vec p$ in \eqref{eq:p:trav:2d}.  For
example, $j=0$ corresponds to $b$, so one sets $\dot
q=(\dot\eta,\dot\tau,\dot b)=(0,0,1)$ in \eqref{eq:lin:R} to compute
the zeroth column of $\mc J$ via \eqref{eq:Jij:levmar}. Since $\tau$ is
treated as a fixed parameter, we set $\dot\tau=0$ for each column of
the Jacobian in the present work; however, in \cite{quasi:trav},
$\tau$ is computed by the solver, just like $b$, so one of the
Jacobian columns corresponds to $\dot q=(\dot\eta,\dot\tau,\dot
  b)=(0,1,0)$. Each of the remaining columns corresponds to varying
one of the Fourier mode degrees of freedom. Since we make use of the
symmetry \eqref{eq:assump:eta}, these columns correspond to
variations of the form
\begin{equation}\label{eq:q:j1j2}
  \dot q_{j_1,j_2}=\big(
    e^{i(j_1\alpha_1+j_2\alpha_2)}+e^{-i(j_1\alpha_1+j_2\alpha_2)},0,0\big),
\end{equation}
where $j_1$ and $j_2$ range over the values listed in
\eqref{eq:p:trav:2d} to enumerate columns 1 through $(N_\text{tot}-1)$
of $\mc J$ in a zero-based numbering convention.  More details on the
form of $\dot R$ for variations of the form \eqref{eq:q:j1j2} will be
given in Section~\ref{sec:bif} below.  In the periodic sub-problem,
$\mc J_{i,j}$ in \eqref{eq:Jij:levmar} is modified in the obvious way
to account for the change from \eqref{eq:min:F} to \eqref{eq:min:F:1d}
and \eqref{eq:p:trav:2d} to \eqref{eq:p:trav:1d}.

In the process of finding a bifurcation point $q^\bif$ from periodic
to quasi-periodic traveling waves in Section~\ref{sec:bif} below, we
will obtain a null vector $\dot q^\qua$ of $D_q\mc R[q^\bif]$ that is
transverse to the family of traveling waves. In the numerical
continuation algorithm, we take the bifurcation point $q^\bif$ as the
zeroth point on the path. The first point on the path, which, unlike
the zeroth point, will be genuinely quasi-periodic, is obtained using
the Levenberg-Marquardt algorithm with initial guess $q^\bif+\veps\dot
q^\qua$.  Here $\veps$ is a suitably small number that we choose by
trial and error to make progress in escaping the family of periodic
waves while still resembling the zeroth solution. In the minimization,
$\tau$ is held fixed with its value at the zeroth solution and
$\hat\eta_{0,1}$ is held fixed with the value
$\veps\dot{\hat\eta}_{0,1}^\qua$ (since
  $\hat\eta_{0,1}^\bif=0$). After the zeroth and first solution on the
path are computed, we continue along a straight line through parameter
space (the $(\hat\eta_{0,1},\tau)$-plane) using linear extrapolation
for the initial guess for the next quasi-periodic solution. The
straight line involves holding $\tau$ fixed and incrementing
$\hat\eta_{0,1}$ for successive solutions. The increment is initially
$\veps\dot{\hat\eta}_{0,1}$, but can be changed adaptively, if
needed. This numerical continuation strategy is easy to implement and
requires only a few iterations to find quasi-periodic solutions that
deviate significantly from the periodic solution at the bifurcation
point as long as $\veps$ is chosen to be large enough to make progress
along the path (but small enough for linear extrapolation to be
  effective).

\section{Quasi-periodic bifurcations from periodic traveling waves}
\label{sec:bif}
In \cite{quasi:trav}, we compute small-amplitude quasi-periodic
traveling waves that bifurcate from the zero-amplitude wave.
In this section, we consider quasi-periodic bifurcations
from finite-amplitude periodic traveling waves that can be far
beyond the linear regime of the zero solution. In particular,
we wish to study genuinely quasi-periodic traveling waves with zero
surface tension, which do not exist at small amplitude.

Before discussing bifurcation theory, it is convenient to frame the
problem in a Hilbert space setting.  Recall that a torus function
$\eta:\mbb T^2\to\mbb C$ is real-analytic if and only if (iff)
it has a convergent power series in a neighborhood of each
$\vec\alpha\in\mbb T^2$. Equivalently \cite{quasi:ivp,broer:book},
  $\eta$ is real-analytic iff its Fourier modes $\hat \eta_{j_1,j_2}$
  in \eqref{eq:quasi_eta} decay exponentially,
  i.e., there exist positive constants $C$ and $\sigma$ such that
  $|\hat\eta_{j_1,j_2}|\le C e^{-\sigma{(|j_1|+|j_2|)}}$ for all
  $(j_1,j_2)\in\mbb Z^2$. We follow
the standard convention \cite{krantz} that real-analytic functions can
be complex-valued.  Although $\eta$ in \eqref{eq:quasi_eta} must be
real-valued for \eqref{eq:zeta:xi:eta} to make sense, it is useful to
allow complex-valued torus functions when considering the effect of
perturbations in Fourier space. Ultimately, linear combinations will
be taken to keep the result real-valued. Similarly, while $b=c^2$
must be positive and $\tau$ must be non-negative, perturbations of
these quantities can have either sign, and can even be complex as
long as linear combinations are eventually taken to make them real.

\begin{definition}\label{def:V:sig}
  For $\sigma\ge0$, let $\mc V_\sigma$ be the Hilbert space of
  real-analytic torus functions of finite norm induced by the inner
  product
  \begin{equation}\label{eq:sigma:norm}
    \la f,g \ra = \sum_{(j_1,j_2)\,\in\mbb Z^2}
    \hat f_{j_1,j_2} \,\, \overline{\hat g_{j_1,j_2}} \,\, e^{2\sigma(|j_1|+|j_2|)}.
  \end{equation}
  We also define the subspaces
  \begin{equation}
    \begin{aligned}
      \mc V_\sigma^\e{l} &= \{\, f\in\mc V_\sigma\;:\;
      \hat f_{j_1,j_2}=0 \;\; \text{if} \;\; j_2\ne l\,\},
      \qquad (l \in \mbb Z), \\
      \mc V_\sigma^\per &= \mc V_\sigma^\e{0}, \qquad
      \mc V_\sigma^\qua = (\mc V_\sigma^\per)^\perp =
      \jt\bigoplus_{l\ne0}\mc V_\sigma^\e{l}
    \end{aligned}
  \end{equation}
  and write, e.g.,
  $\big(\mc V_\sigma^\e1,0,\mbb C\big)$ and
  $\big(\mc V_\sigma^\per,\mbb C,\mbb C\big)$ as shorthand for
  $\big\{(f,0,b)\,:\, f\in\mc V_\sigma^\e1\,,\,b\in\mbb C\big\}$ 
  and $\mc V_\sigma^\per\times\mbb C^2$, respectively, with the
  product Hilbert space norms.
\end{definition}

Note that $\mc V_\sigma^\per$ consists precisely of the torus
functions $f(\alpha_1,\alpha_2)$ in $\mc V_\sigma$ that do not depend
on $\alpha_2$.  We think of functions in $\mc V_\sigma\setminus\mc
V_\sigma^\per$ as being genuinely quasi-periodic even though this set
includes functions $f(\alpha_1,\alpha_2)$ that are independent of
$\alpha_1$. We adopt this viewpoint as our focus is on bifurcations
from $2\pi$-periodic traveling waves. The case of bifurcations from
$(2\pi/k)$-periodic traveling waves can be investigated within this
framework by rescaling space by a factor of $k$ to make the wavelength
of these waves $2\pi$, and then replacing $k$ by $1/k$ as the second
basic frequency.

\subsection{Linearization about periodic traveling waves}
\label{sec:lin:per}

Recall from \eqref{eq:FR:def}
that the governing equations \eqref{eq:govern} for traveling water
waves are equivalent to solving
\begin{equation}\label{eq:Rq}
  \mc{R}[q] = P\left[\frac{b}{2J} + g\eta
    -\tau\kappa\right] = 0, \qquad \qquad q = (\eta, \tau, b),
\end{equation}
where $J$ and $\kappa$ depend on $\eta$ via \eqref{eq:govern}.  We
computed the Fr\'echet derivative of $\mc R$ in \eqref{eq:lin:R} using
``dot notation,'' defined in \eqref{eq:dot:R:def}. The following
theorem is proved in Appendix~\ref{sec:proof}:

\begin{theorem}\label{thm:bdd}
  Suppose $q^\per=(\eta,\tau,b)$ with $\eta\in\Vp{\sigma}$ for some
  $\sigma>0$.  Suppose also that $\eta$ is real-valued and the
  resulting $J(\alpha_1,\alpha_2)$ in \eqref{eq:govern}, which is
  independent of $\alpha_2$, is non-zero for every $\alpha_1\in\mbb
  T$. Then there exists $\rho\in(0,\sigma)$ such that $D_q\mc
  R[q^\per]$ is a bounded operator from $\big(\mc V_\sigma,\mbb C,\mbb
    C\big)$ to $\mc V_\rho$.
\end{theorem}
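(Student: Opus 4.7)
The plan is to decompose $\dot{\mc R}$ in \eqref{eq:lin:R} into elementary building blocks and bound each piece between appropriate $\mc V_\sigma$-spaces, absorbing the unavoidable loss of analyticity radius into the gap between $\sigma$ and the eventual $\rho$. The elementary pieces are the Fourier multipliers $P$, $H$, $\pa_\alpha$, $\pa_\alpha^2$ and pointwise multiplication. The multipliers $P$ and $H$ are bounded on every $\mc V_\sigma$ since their symbols in \eqref{eq:da:H:fmult} have modulus at most one and therefore preserve the weighted $\ell^2$ norm. The derivative operators $\pa_\alpha^m$ have symbols growing polynomially in $|j_1|+|j_2|$, and a polynomial factor is dominated by $e^{(\sigma-\sigma_1)(|j_1|+|j_2|)}$, so $\pa_\alpha^m\colon\mc V_\sigma\to\mc V_{\sigma_1}$ is bounded for any $\sigma_1<\sigma$, with norm depending on $m$ and $\sigma-\sigma_1$.

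For pointwise multiplication I would prove a Banach-algebra-with-loss estimate: for $0<\rho<\sigma$,
\begin{equation*}
\|fg\|_\rho \;\le\; C(\sigma-\rho)\,\|f\|_\sigma\|g\|_\sigma,
\qquad
C(\sigma-\rho) = \Bigl(\sum_{\mb m\in\mbb Z^2} e^{-2(\sigma-\rho)(|m_1|+|m_2|)}\Bigr)^{1/2}.
\end{equation*}
The proof is the standard combination of Young's convolution inequality applied to $\widehat{fg}=\hat f*\hat g$ together with Cauchy--Schwarz: the exponential surplus $e^{-(\sigma-\rho)(|\cdot|+|\cdot|)}$ upgrades an $\ell^2$ datum to an $\ell^1$ one at a finite cost. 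Consequently, multiplication by any fixed $f\in\mc V_{\sigma'}$ is a bounded operator $\mc V_\sigma\to\mc V_\rho$ for any $\rho<\min(\sigma,\sigma')$.

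The substantive step, and the one I expect to be the main obstacle, is to show that the coefficients appearing in \eqref{eq:lin:R}, namely $\pa_\alpha\xi$, $\pa_\alpha\eta$, $\pa_\alpha^2\xi$, $\pa_\alpha^2\eta$, $1/J$, $1/J^2$, $\kappa$ and $1/J^{3/2}$, all belong to $\Vp{\sigma''}$ for some $\sigma''\in(0,\sigma]$. The polynomial expressions follow from the boundedness of $\pa_\alpha^m$ and $H$ and from the algebra-with-loss estimate applied to $\eta\in\Vp{\sigma}$, at the price of a small radius reduction. The reciprocals and fractional power of $J$ require more care. Because $\eta$ is real-analytic and independent of $\alpha_2$, the function $J(\alpha_1)$ extends holomorphically to the strip $\{|\im\alpha_1|<\sigma\}$, and by hypothesis $J>0$ on the compact circle $\mbb T$. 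Continuity and compactness then give a smaller strip $\{|\im\alpha_1|<\sigma''\}$ on which $J$ remains in a neighborhood of $\mbb R^+$ bounded away from $0$. The principal branch of $\log J$ is well defined there, so $J^{-1}$, $J^{-2}$ and $J^{-3/2}=\exp(-\tfrac32\log J)$ are all holomorphic and bounded on that strip, hence, by standard Paley--Wiener-type estimates on Fourier coefficients, they lie in $\Vp{\sigma'}$ for every $\sigma'<\sigma''$. The curvature $\kappa$ then lies in $\Vp{\sigma'}$ by another application of the algebra property.

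Finally I would assemble these pieces. Given $(\dot\eta,\dot\tau,\dot b)\in(\mc V_\sigma,\mbb C,\mbb C)$, the terms $\dot b/(2J)$ and $-\kappa\,\dot\tau$ are bounded into $\mc V_{\sigma'}$ since their coefficients lie in $\Vp{\sigma'}$; the remaining terms are obtained by applying $H$ and at most two derivatives $\pa_\alpha^k$ to $\dot\eta$, multiplying by coefficients in $\Vp{\sigma'}$, summing, and finally projecting by $P$. Each derivative and each multiplication causes a small loss of radius, but one has total freedom to choose how much loss to apportion; taking, for instance, $\rho=\tfrac12\min(\sigma,\sigma'')$ leaves ample room to absorb the finitely many losses, yielding the claimed bounded map $D_q\mc R[q^\per]\colon(\mc V_\sigma,\mbb C,\mbb C)\to\mc V_\rho$.
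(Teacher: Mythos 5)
Your proof is correct and its overall strategy mirrors the paper's: reduce $D_q\mc R[q^\per]$ to compositions of Fourier multipliers, which lose a controlled amount of analyticity radius, with multiplications by fixed coefficients coming from the background periodic wave, and use real-analyticity of those coefficients (here granted by $\eta\in\Vp\sigma$ and $J\ne0$) to place them in some $\Vp{\sigma''}$ with exponentially decaying modes. The two main technical ingredients are executed a bit differently. For the coefficient regularity, you work it out directly via holomorphic extension of $J$ to a strip, positivity of $J$ forcing $J$ to stay in a simply connected domain avoiding zero on a slightly thinner strip, and Paley--Wiener; the paper simply cites a lemma giving exponential Fourier decay for real-analytic periodic functions. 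Both are sound and essentially the same fact. For multiplication, you prove a general two-dimensional algebra-with-loss estimate $\|fg\|_\rho\le C(\sigma-\rho)\|f\|_\sigma\|g\|_\sigma$, whereas the paper proves a more tailored lemma: multiplication by a fixed \emph{one-dimensional} function $\td\gamma(\alpha_1)$ whose modes decay at a rate $\rho_1$ strictly larger than $\rho$ is bounded on $\mc V_\rho$ \emph{without} loss (the proof convolves along horizontal slices $l_2=\text{const}$). Your version is more general and a touch cleaner to state; the paper's exploits that the background is periodic and avoids any further radius bookkeeping in the assembly step. Either route yields the theorem, and your final bookkeeping (take $\rho=\tfrac12\min(\sigma,\sigma'')$ to absorb the finitely many losses from the derivatives and products appearing in \eqref{eq:lin:R}) is adequate.
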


We do not assume $q^\per$ is a solution of $\mc R[q]=0$ in this
theorem, though that is the case of interest. Our next goal is to show
that when linearized about a periodic solution, variations in
$(\tau,b)$ and periodic perturbations of $\eta$ lead to periodic
changes in $\mc R$ while quasi-periodic perturbations lead to
quasi-periodic changes in $\mc R$. We combine the discussion of the
numerical computation with the derivation since the only difference is
whether infinite Fourier series of real analytic functions are
considered or whether these functions are approximated via the FFT on
a uniform grid.

Let $q^\per=(\eta,\tau,b)$ satisfy the hypotheses of
Theorem~\eqref{thm:bdd}.  Then clearly
\begin{equation}\label{eq:dot:R:tau:b}
  D_q \mc R \big[q^\per \big]\big(0,\dot \tau, \dot b \big) =
  P\Big[ \dot b/\big(2J\big)  - \kappa \dot \tau
    \Big]
\end{equation}
is periodic, i.e., a torus function independent of $\alpha_2$.
Moreover, if $\dot q$ is of the form
\begin{equation}\label{eq:dot:q:form}
  \dot q^{(l_1, l_2)}= \big(e^{il_1\alpha_1}e^{il_2 \alpha_2},0,0\big),
  \qquad\qquad   l_1,\,  l_2\in\mbb Z,
\end{equation}
then \eqref{eq:lin:R} simplifies to
\begin{equation}\label{eq:lin:l1l2}
  \begin{aligned}
    \dot{\mc R} &= P\bigg[
      -\frac{b}{2J^2} \dot J + g \dot{\eta}
      -\tau \dot{\kappa}\bigg], \quad
    \dot{\eta} = e^{il_1\alpha_1}e^{il_2\alpha_2}, \quad
    \dot{\xi} = -i\opn{sgn}(l_1+kl_2)e^{il_1\alpha_1}e^{il_2\alpha_2}, \\
    \dot{J} &= 2\Big\{\,\big|l_1+kl_2\big|\big( 1 + \pa_\alpha\xi\, \big)
    + i\big(l_1+kl_2\big)\,\pa_\alpha\eta\,\Big\}e^{il_1\alpha_1}e^{il_2\alpha_2}, \\
    \dot{\kappa} &= -\frac{3\kappa}{2J} \dot{J}
    + \frac{1}{J^{3/2}}\Big\{
    \big|l_1+kl_2\big|\pa_\alpha^2\eta - \big(l_1+kl_2\big)^2
    \big( 1 + \pa_\alpha \xi \,\big) \\
    & \hspace*{1.5in}-i\big(l_1+kl_2\big)\big(
      \pa_\alpha^2\xi + \big|l_1+kl_2\big|\,\pa_\alpha\eta\big)\Big\}
    e^{il_1\alpha_1}e^{il_2\alpha_2}.
  \end{aligned}
\end{equation}
The terms in braces, which we denote by $\td A^\e1(\alpha_1)$ and $\td
A^\e2(\alpha_1)$, respectively, are independent of $\alpha_2$. Next we
expand $\td\eta(\alpha)$ in \eqref{eq:periodic:eta} as a 1d Fourier
series, $\sum_j\hat\eta_je^{ij\alpha}$, which gives
\begin{equation}\label{eq:fourier:xi:eta}
  \big(1+\pa_\alpha\xi\,\big)=1+\sum_{j}|j|\hat\eta_{j}e^{ij\alpha_1}, \quad\;
  \pa_\alpha^r\eta = \sum_{j}(ij)^r\hat\eta_{j}e^{ij\alpha_1}, \quad\;
  \pa_\alpha^2\xi = \sum_{j}ij|j|\hat\eta_{j}e^{ij\alpha_1},
\end{equation}
where $r\in\{1,2\}$. The 1d inverse FFT can then be used to compute
$\td A^\e1(\alpha_1)$ and $\td A^\e2(\alpha_1)$ on a uniform grid in the
$\alpha_1$ variable that is fine enough to resolve the Fourier modes
to the desired accuracy. We then write
\begin{equation}\label{eq:A34}
  \begin{aligned}
    \dot\kappa(\alpha_1,\alpha_2) &=
    \td A^\e3(\alpha_1)e^{il_1\alpha_1}e^{il_2\alpha_2}, & \quad
    \td A^\e3 &= -3\frac\kappa J \td A^\e1 + \frac1{J^{3/2}}\td A^\e2, \\
    -\frac b{2J^2}\dot J + g\dot\eta - \tau\dot\kappa &=
    \td A^\e4(\alpha_1)e^{il_1\alpha_1}e^{il_2\alpha_2}, &
    \td A^\e4 &= -\frac b{J^2}\td A^\e1 + g - \tau \td A^\e3,
  \end{aligned}
\end{equation}
where $\td A^\e{3}(\alpha_1)$ and $\td A^\e4(\alpha_1)$ are computed
pointwise on the grid. For each $m\in\{1,2,3,4\}$, we note that
$\td A^\e{m}(\alpha_1)$ depends on $(l_1,l_2)$, and will be written
$\td A^\e{l_1,l_2,m}(\alpha_1)$ when the dependence needs to be shown
explicitly. Finally, we obtain
\begin{equation}\label{eq:dot:R:pqua}
  \big(D_q\mc R\big[q^\per\big]\dot q^{(l_1, l_2)}\big)(\alpha_1,\alpha_2) =
  P\big[ \td A^\e{l_1,l_2,4}(\alpha_1)e^{il_1\alpha_1}
    e^{il_2\alpha_2} \big] = \td u^{(l_1, l_2)}(\alpha_1)e^{il_2\alpha_2},
\end{equation}
where the projection $P$ was defined in \eqref{eq:hilbert} above.
The Fourier expansion
\begin{equation}\label{eq:uhat:l1l2}
  \td u^\e{l_1,l_2}(\alpha) = \sum_j
  \hat u^\e{l_1,l_2}_j e^{ij\alpha}, \qquad\quad
  \hat u^\e{l_1,l_2}_j =
  \begin{cases}
    \hat A^\e{l_1,l_2,4}_{j-l_1}, & (j,l_2)\ne(0,0), \\
    0, & (j,l_2)=(0,0),
    \end{cases}
\end{equation}
is easily read off from the FFT of $\td A^\e{l_1,l_2,4}(\alpha)$,
where we used the fact that multiplication by
$e^{il_1\alpha_1}$ in \eqref{eq:dot:R:pqua} simply shifts the Fourier
index by $l_1$.  Of course, by Remark~\ref{rmk:hat:breve}, $\hat
  A^\e{l_1,l_2,4}_{j-l_1}$ will be computed via a one-dimensional
  de-aliasing formula analogous to \eqref{eq:dft:approx}.  Since
$\tau$ and $b$ are real and
$\eta(\alpha_1,\alpha_2)=\td\eta(\alpha_1)$ is real-valued, inspection
of \eqref{eq:lin:l1l2}--\eqref{eq:dot:R:pqua} shows that
\begin{equation}\label{eq:Au:neg:l}
  \td A^\e{-l_1,-l_2,m}(\alpha)=\overline{\td A^\e{l_1,l_2,m}(\alpha)}, \qquad
  \td u^\e{-l_1,-l_2}(\alpha)=\overline{\td u^\e{l_1,l_2}(\alpha)}, \qquad
  \left(\begin{gathered} m=1,2,3,4 \\[-3pt] l_1,l_2 \in \mbb Z
      \end{gathered}\right).
\end{equation}
This shows that $D_q\mc R\big[q^\per\big]\overline{\dot q^{(l_1,
      l_2)}}= \overline{D_q\mc R\big[q^\per\big]\dot q^{(l_1, l_2)}},$
which is also evident from \eqref{eq:lin:l1l2}.  If, moreover, $\eta$ has even
symmetry, then
\begin{equation}
  \td A^\e{l_1,l_2,m}(-\alpha)=\overline{\td A^\e{l_1,l_2,m}(\alpha)}, \qquad
  \td u^\e{l_1,l_2}(-\alpha)=\overline{\td u^\e{l_1,l_2}(\alpha)}, \qquad
  \left(\begin{gathered} m=1,2,3,4 \\[-3pt] l_1,l_2 \in \mbb Z
      \end{gathered}\right),
\end{equation}
which implies that the Fourier coefficients of these complex-valued
functions are real.

\begin{remark}
  Evaluation of \eqref{eq:lin:l1l2} and \eqref{eq:dot:R:pqua} along
  the characteristic line $\alpha_1=\alpha$, $\alpha_2=k\alpha$ gives
  the real-line version of these equations, which can be derived
  directly via a Fourier-Bloch analysis commonly used in
  the study of subharmonic stability of traveling waves
  \cite{oliveras:11, tiron2012linear}. However, by posing the problem
  in a quasi-periodic torus framework, it becomes possible to follow
  bifurcation branches beyond linearization about periodic traveling
  waves.
\end{remark}

\begin{remark}\label{rmk:block:diag}
  In summary, we have shown that $D_q\mc R[q^\per]$ has a block
  structure, mapping $(\dot\eta,\dot\tau,\dot b)\in
  \big(\Vp{\sigma},\mbb C,\mbb C\big)$ to $\Vp{\rho}$ and
  $(\dot\eta,0,0) \in \big( \mc V_{\sigma}^\e{l_2},0,0
    \big)$ to $\mc V_{\rho}^\e{l_2}$ for $l_2\in\mbb
  Z\setminus\{0\}$.
\end{remark}

\begin{remark} \label{rmk:crandall}
  The spaces $\mc V_\sigma$ are convenient for identifying the
  block structure of $D_q\mc R[q^\per]$, which leads us to a
  numerical algorithm for computing quasi-periodic bifurcation
  points and perturbation directions to switch to the new branch;
  however, we are not able to apply rigorous bifurcation theorems
  such as the Crandall-Rabinowitz theorem \cite{Crandall:bifur:1971}
  to prove existence of genuinely spatially quasi-periodic water
  waves in this framework since $D_q\mc R[q^\per]$ is not a Fredholm
  operator from $\big(\mc V_\sigma,\mbb C,\mbb C\big)$ to $\mc
  V_\rho$.  Indeed, when $\rho$ and $\sigma$ are chosen as in the
  proof of Theorem B.1 in Appendix B, the algebraic codimension is
  infinite since $D_q\mc R[q^\per]$ is also bounded if the range is
  decreased slightly to $\mc V_{\rho+\veps}$ for sufficiently small
  $\veps$, and the embedding of $\mc V_{\rho+\veps}$ into $\mc
  V_\rho$ has infinite algebraic co-dimenison.  Proofs of existence
  \cite{toland1996stokes,buffoni2000sub,buffoni2000regularity} of
  bifurcations from $2\pi$-periodic traveling waves to $2\pi
  m$-periodic traveling waves for sufficiently large integers $m$
  employ a variant of Nekrasov's equation \cite{nekrasov1921steady}
  instead of \eqref{eq:Rq} for the governing equations. Adapting
  these proofs to the quasi-periodic case is an interesting avenue
  of future research, and may require employing Nash-Moser theory
  \cite{plotnikov01,berti2016quasi,baldi2018time,berti2020traveling}
  to overcome small divisors, whose effects can be seen in the
  numerical results presented in Section~\ref{sec:num:grav} below.
\end{remark}

It is convenient at this point to introduce alternative basis
functions and subspaces that more clearly exhibit the even,
real-valued nature of the solutions we seek. Let
\begin{equation}\label{eq:varphi:def}
  \begin{aligned}
    \varphi_{\vec l}(\vec\alpha) &= 2\cos(\vec l\cdot\vec\alpha), \\
    \psi_{\vec l}(\vec\alpha) &= -2\sin(\vec l\cdot\vec\alpha),
  \end{aligned}
  \qquad
  \vec l\in \Lambda = \Big\{(l_1,l_2)\in\mbb Z^2\,:\, l_2>0
  \text{\, or\, } \big(l_2=0 \text{ and } l_1>0\big)\Big\},
\end{equation}
where $\vec\alpha=(\alpha_1,\alpha_2)$. We also define
$\varphi_{0,0}(\vec\alpha)=1$. Then
\begin{equation}
  \Big( \varphi_{\vec l}(\vec\alpha)\,,\,
    \psi_{\vec l}(\vec\alpha) \Big) =
  \Big( e^{i\vec l\cdot \vec\alpha}\,,\, e^{-i\vec l\cdot \vec\alpha} \Big)
  \begin{pmatrix} 1 & i \\ 1 & -i \end{pmatrix}, \qquad
  (\vec l\in \Lambda)
\end{equation}
and the torus function expansions of an arbitrary function
\begin{equation}
  f(\vec\alpha)=\sum_{j_1,j_2}\hat f_{j_1,j_2}e^{i(j_1\alpha_1+j_2\alpha_2)} =
 a_{\vec 0}\varphi_{\vec 0}(\vec\alpha) +
  \sum_{\vec l\in\Lambda} \big( a_{\vec l}\varphi_{\vec l}(\vec\alpha)
    + b_{\vec l}\psi_{\vec l}(\vec\alpha)\big)
\end{equation}
are related by
\begin{equation}
  a_{\vec0} = \hat f_{\vec0},
  \quad\;\;
  \begin{pmatrix} a_{\vec l} \\ b_{\vec l} \end{pmatrix} =
  \begin{pmatrix} \frac12 & \frac12 \\[4pt] \frac1{2i} & \frac{-1}{2i}
  \end{pmatrix}
  \begin{pmatrix} \hat f_{\vec l} \\ \hat f_{-\vec l} \end{pmatrix},
  \quad\;\;
  \begin{pmatrix} \hat f_{\vec l} \\ \hat f_{-\vec l} \end{pmatrix} =
  \begin{pmatrix} 1 & i \\ 1 & -i \end{pmatrix}
  \begin{pmatrix} a_{\vec l} \\ b_{\vec l} \end{pmatrix},
  \quad\;\; (\vec l\in\Lambda).
\end{equation}
Note that $f(\vec\alpha)$ is real-valued precisely when all the
$a_{\vec l}$ and $b_{\vec l}$ are real. In this case, these
coefficients are the real and imaginary parts of $\hat f_{\vec l}$ for
$\vec l\in\Lambda$, and $\hat f_{-\vec l}= \overline{\hat f_{\vec
    l}}$. Similarly, $f(\vec\alpha)$ is even with respect to
$\vec\alpha\in\mbb T^2$ precisely when all the $b_{\vec l}$ are zero.
We also define the subspaces
\begin{equation}\label{eq:XY:spaces}
  \begin{aligned}
    &\mc X_\sigma^\text{const} = \vspan_\sigma\{\varphi_{0,0}\} =
    \{\text{constant functions on $\mbb T^2$}\}, \\
    &\mc X_\sigma^\per = \mc X_\sigma^\e0 =
    \vspan_\sigma\{\varphi_{l_1,0} \,:\, l_1\ge1\}, \qquad
    \mc Y_\sigma^\per = \mc Y_\sigma^\e0 =
    \vspan_\sigma\{\psi_{l_1,0} \,:\, l_1\ge1\}, \\
    &\mc X_\sigma^\e{l_2} =
    \vspan_\sigma\{\varphi_{l_1,l_2} \,:\, l_1\in\mbb Z\}, \qquad
    \mc Y_\sigma^\e{l_2} =
    \vspan_\sigma\{\psi_{l_1,l_2} \,:\, l_1\in\mbb Z\}, \qquad
        (l_2\ge1),
  \end{aligned}
\end{equation}
where $\vspan_\sigma$ of a list of functions is the closure
of the set of finite linear combinations of the functions with respect
to the $\mc V_\sigma$ norm from \eqref{eq:sigma:norm}. We note that
\begin{align}
  \Vp{\sigma} &= \mc X_\sigma^\text{const} \oplus
  \mc X_\sigma^\per\oplus\mc Y_\sigma^\per, \qquad
  \mc V_\sigma^\e{l_2}\oplus \mc V_\sigma^\e{-l_2} =
  \mc X_\sigma^\e{l_2}\oplus \mc Y_\sigma^\e{l_2}, \quad (l_2\ge1), \\
  \label{eq:VXY}
  \mc V_\sigma &= \mc X_\sigma^\text{const} \oplus
  \mc X_\sigma^\per \oplus
  \mc Y_\sigma^\per\oplus \mc X_\sigma^\qua \oplus \mc Y_\sigma^\qua,
  \qquad \mc X_\sigma^\qua = \bigoplus_{l_2=1}^\infty \mc X_\sigma^\e{l_2},
\end{align}
with a similar formula for $\mc Y_\sigma^\qua$.  Since the basis
functions $\varphi_{\vec l}(\vec\alpha)$ and $\psi_{\vec
  l}(\vec\alpha)$ are real-valued, the spaces in \eqref{eq:VXY} may be
regarded as complex or real Hilbert spaces.  It is also useful to
define
\begin{equation}\label{eq:Xsig:def}
  \mc X_\sigma = \mc X_\sigma^\per \oplus \mc X_\sigma^\qua,
\end{equation}
which contains the even functions of $\mc V_\sigma$ of zero mean on
$\mbb T^2$.

As explained above, since we assume in \eqref{eq:assump:eta} that
$\eta$ is real-valued and even, the Fourier coefficients of
$\td u^\e{l_1,l_2}(\alpha)$ in \eqref{eq:uhat:l1l2} are real. Thus, from
\eqref{eq:dot:R:pqua} and \eqref{eq:Au:neg:l} we have
\begin{equation}\label{eq:DqR:phi:psi}
  \begin{aligned}
    \jt
    D_q \mc R\big[q^\per\big]\big(\varphi_{\vec l},0,0\big) =
    \sum_{j_1\in\mbb Z} \hat u_{j_1}^\e{l_1,l_2}\varphi_{j_1,l_2}(\vec\alpha), \\
    \jt
    D_q \mc R\big[q^\per\big]\big(\psi_{\vec l},0,0\big) =
    \sum_{j_1\in\mbb Z} \hat u_{j_1}^\e{l_1,l_2}\psi_{j_1,l_2}(\vec\alpha),
  \end{aligned} \qquad\quad
  \big(\,\vec l = (l_1,l_2) \in \Lambda\,\big).
\end{equation}
The $j_1=0$ term can be omitted from these sums when $l_2=0$ since
$\hat u^\e{l_1,0}_0=0$, and must be omitted in the second formula as
  $\psi_{0,0}(\vec\alpha)$ is not defined.  We also see directly from
\eqref{eq:dot:R:tau:b} and \eqref{eq:lin:l1l2} that $D_q \mc
R\big[q^\per\big]\big(0, \dot\tau,\dot b\big)$ is an even function and
$D_q \mc R\big[q^\per\big]\big(\varphi_{0,0},0,0\big)=0$.

\begin{remark} \label{rmk:X:block}
  Since $\eta$ is assumed real-valued and even and $D_q \mc
  R\big[q^\per\big]$ maps even perturbations to even functions and odd
  perturbations to odd functions, we can restrict attention to even
  functions and perturbations. In light of Remark~\ref{rmk:hat:eta00}
  and the presence of the projection $P$ in the definition
  \eqref{eq:Rq} of $\mc R$, we may further restrict attention to
  functions and perturbations of zero mean.  It follows from
  \eqref{eq:DqR:phi:psi} that $D_q \mc R\big[q^\per\big]$ has a block
  structure with respect to the decomposition \eqref{eq:Xsig:def},
  mapping $(\dot\eta,\dot\tau,\dot b) \in \big(\mc X_\sigma^\per,\mbb R,
    \mbb R\big)$ to $\mc X_\rho^\per$ and $(\dot\eta,0,0) \in
  \big(\mc X_\sigma^\e{l_2},0,0\big)$ to $\mc
  X_\rho^\e{l_2}$ for $l_2\ge1$.
\end{remark}

\begin{figure}
  \begin{center}
    \includegraphics[width=0.54\textwidth]{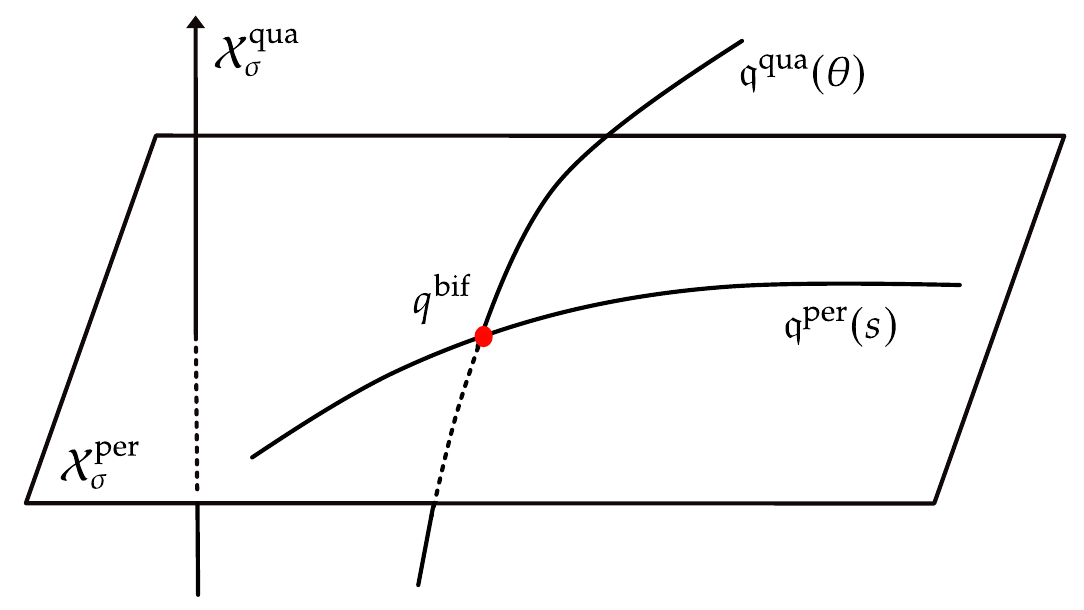}
  \end{center}
  \caption{\label{fig:simple:bifur}  Simple quasi-periodic bifurcation diagram}
\end{figure}

If the surface tension $\tau$ is held fixed, which will be the case
for the pure gravity wave problem in section~\ref{sec:num:grav} below,
then there will be a one-parameter ``primary'' branch of periodic
traveling waves, which we denote by $q=\mfq^\per(s)\in (\mc
X_\sigma^\per,\mbb R,\mbb R\big).$ Here $s$ is any convenient amplitude
parameter such as $\hat\eta_{1,0}$ or the crest-to-trough height of
the wave, $h=\td\eta(0)-\td\eta(\pi)$.  Following \cite{Crandall:bifur:1971,
  chen1980numerical, chen1979steady, zeidlerAFA, allgower2012numerical}, we are
interested in finding simple bifurcation points $q^\bif$ where a
second solution curve $q=\mfq^\qua(\theta)\in\big(\mc X_\sigma,\mbb R,
\mbb R\big)$ intersects the first non-tangentially, with $\theta$ another
amplitude parameter such as $\hat\eta_{0,1}$; see Figure
\ref{fig:simple:bifur}. Suppose such an intersection occurs at $s=s_0$
and $\theta=\theta_0$. Differentiating $\mc R\big[\mfq^\per(s)\big] =
0$ and $\mc R\big[\mfq^\qua(\theta)\big] = 0$, we obtain
\begin{equation}
D_q\mc R\big[q^\bif\big]\Big(\big(\mfq^\per\big)'(s_0)\Big) = 0 =
D_q\mc R\big[q^\bif\big]\Big(\big(\mfq^\qua\big)'(\theta_0)\Big).
\end{equation}
At a simple bifurcation \cite{Crandall:bifur:1971}, these null vectors
span the kernel of $D_q\mc R\big[q^\bif\big]$,
\begin{equation}\label{eq:kDq:span}
  \ker D_q\mc R\big[q^\bif\big]  =
  \vspan\Big\{\,
  \big(\mfq^\per\big)'(s_0)\;,\, \big(\mfq^\qua\big)'(\theta_0)\,
  \Big\}.
\end{equation}
Because $\tau$ is frozen and we restrict attention to even
perturbations of zero mean, the domain of $D_q\mc R\big[q^\bif\big]$
is taken to be $\mc D_\sigma = \big(X_\sigma,0,\mbb R\big)$
when computing the kernel. We decompose
\begin{equation}\label{eq:mcD}
  \mc D_\sigma=\mc D_\sigma^\per\oplus\left(\bigoplus_{l=1}^\infty
    \mc D_\sigma^\e{l}\right), \qquad
  \begin{aligned}
    \mc D_\sigma^\per &= \{(\dot\eta,0,\dot b)\,:\, \dot\eta\in\mc
    X_\sigma^\per,\dot b\in\mbb R\}, \\
    \mc D_\sigma^\e{l} &=
    \{(\dot\eta,0,0)\,:\, \dot\eta\in\mc X_\sigma^\e{l}\}.
    \end{aligned}
\end{equation}
Let $\dot q^\per = \big(\mfq^\per\big)'(s_0)$, which belongs to $\mc
D_\sigma^\per$ since solutions on this branch are periodic.  By
\eqref{eq:mcD}, we can decompose $\big(\mfq^\qua\big)'(\theta_0)= \dot
q_1^\per + \sum_{l=1}^\infty \dot q^\e{l}$ with $\dot q_1^\per\in\mc
D_\sigma^\per$ and $\dot q^\e{l}\in \mc D_\sigma^\e{l}$. Our
assumption that the bifurcation is simple implies that precisely one
of the $\dot q^\e l$ is non-zero. In more detail, because $D_q\mc
R\big[q^\bif\big]$ has a block structure, each non-zero component
$\dot q_1^\per$ and $\dot q^\e{l}$ will also be in the kernel. At
least one of the $\dot q^\e{l}$, say with $l=l_0$, must be non-zero
since solutions on the path $\mfq^\qua(\theta)$ are supposed to be
genuinely quasi-periodic, and we require that this occurs at linear
order in the perturbation.  But then all the other $\dot q^\e{l}$ must
be zero and $\dot q_1^\per$ must be a multiple of $\dot q^\per$, or
else the dimension of $\ker D_q\mc R\big[q^\bif\big]$ would be greater
than two. For simplicity, and without loss of generality, we may
assume $l_0=1$. Indeed, observing the way $k$ and $l_2$ appear in the
formulas of \eqref{eq:lin:l1l2}, we see that if $\dot q^\e{l}$ belongs
to the kernel for $l>1$, then $\dot q^\e1$ will be in the kernel for
an auxiliary problem with $k$ replaced by $kl$. Renaming $\dot q^\e1$
by $\dot q^\qua$, we have shown that the kernel should take the form
\begin{equation}\label{eq:span:dot:q}
  \ker D_q\mc R\big[q^\bif\big]  =
  \vspan\big\{\, \dot q^\per\,,\, \dot q^\qua \big\}, \qquad
  \dot q^\per \in \mc D_\sigma^\per, \qquad
  \dot q^\qua \in \mc D_\sigma^\e1,
\end{equation}
where $\big(\mfq^\per\big)'(s_0)=\dot q^\per$ and
$\big(\mfq^\qua\big)'(\theta_0) = C\dot q^\per + \dot q^\qua$ for some
$C\in\mbb R$. As explained in Section~\ref{sec:num:grav} below, a
non-zero value of $C$ would break a symmetry that arises on the
solution branches we have found. Thus, $C$ turns out to be zero and
$q^\bif + \veps \dot q^\qua$ can be used as a natural initial
guess to switch from the periodic branch to the quasi-periodic branch,
where $\veps$ is a suitably small number chosen empirically.

When the surface tension is allowed to vary, which will be the case in
the gravity-capillary wave problem of section~\ref{sec:num:cap}, there
will be a two-parameter family of periodic traveling waves that
contains one-dimensional bifurcation curves where the periodic waves
intersect with two-dimensional sheets of quasi-periodic traveling
waves.  In this formulation, one can increase the domain of $D_q\mc
R\big[q^\bif\big]$ from \eqref{eq:mcD} to $\mc D_\sigma= \big(\mc
  X_\sigma,\mbb R,\mbb R\big)$ and the dimension of the kernel will
increase from two to three. But computationally, we can sweep through
these manifolds of solutions with $\tau$ or an amplitude parameter
fixed, which reduces the problem to searching for isolated bifurcation
points along one-parameter solution curves, as described above.  The
general theory of multi-parameter bifurcation theory is presented in
\cite{antman:81,antman:book}, for example.

\begin{remark}\label{rmk:eta:mean}
  If we had not separated $\mc X_\sigma^\text{const}$ from $\mc
  X_\sigma^\per$ in \eqref{eq:XY:spaces}, the dimension of the
  manifolds of solutions would increase by one, as would the kernel,
  but in a trivial way: one can add a constant to $\eta$ for any of
  the solutions to obtain another solution.  One could add an equation
  to the nonlinear system $\mc R[q]=0$ to select the physical solution
  with zero mean in physical space, but it is simpler to just hold
  $\hat\eta_{0,0}=0$ in this search phase of the problem and compute
  the correction afterwards, as explained in
  Remark~\ref{rmk:hat:eta00}.
\end{remark}

\subsection{Detecting Quasi-Periodic Bifurcation Points}
\label{sec:detect:bif}
In this section, we discuss how to detect quasi-periodic bifurcation
points on $\mfq^\per(s)$ and compute the corresponding bifurcation
directions. Because we seek bifurcation points $q^\bif$ such that
$D_q\mc R\big[q^\bif\big]$ has a null vector $\dot q^\qua\in\mc
D_\sigma^\e{1}$, it suffices to compute the restriction of $D_q\mc
R\big[\mfq^\per(s)\big]$ to $\mc D_\sigma^\e{1}=\big(\mc
  X_\sigma^\e1,0,0\big)$ and search for values of $s$ for which this
operator has a non-trivial kernel. By Remark~\ref{rmk:X:block}, the
range of this restriction may be taken to be $\mc X_\rho^\e1$.

Recall from Section~\ref{sec:num:alg} that we compute periodic
traveling waves numerically by specifying $\hat\eta_{1,0}$ and $\tau$
as given parameters and minimizing the objective
function~\eqref{eq:min:F:1d} to find the square of the wave speed,
$b=c^2$, and the remaining leading Fourier modes
$(\hat\eta_{2,0},\dots,\hat\eta_{N_1,0})$. The computations are done
on a uniform $M_1$-point grid in the $\alpha_1$ variable, where
$M_1\approx 3N_1$ is generally sufficient to achieve spectrally
accurate solutions with minimal effects of aliasing errors. Fourier
modes $\hat\eta_{j_1,0}$ with $|j_1|>N_1$ are taken to be zero, and
$\eta$ is assumed real and even so that
$\hat\eta_{-j_1,0}=\hat\eta_{j_1,0}\in\mbb R$.  $N_1$ is chosen large
enough that the Fourier modes decay to machine precision by the time
$|j_1|$ reaches $N_1$.

For each of these computed periodic solutions, $q^\per=\mfq^\per(s)$,
we form the matrix $\mc J^\qua\big[q^\per\big]$ representing the
restriction of $D_q\mc R\big[q^\per\big]$ to $\mc D_\sigma^\e{1}$,
using the $\{\varphi_{l_1,1}\}$ basis in both the domain and range
of the restricted operator, up to a cutoff frequency
$N$. We order the basis functions via
$l_1=(0,1,-1,2,-2,3,-3,\dots,N,-N)$ and use \eqref{eq:DqR:phi:psi} to
obtain
\begin{equation}\label{eq:J:qua}
  \renewcommand{\arraystretch}{1.6}
  \mc J^\qua \big[q^\per\big] =
  \begin{pmatrix}
    \hat{u}^{(0,1)}_0 & \hat{u}^{(1,1)}_0 & \hat{u}^{(-1,1)}_0 &
    \cdots &\hat{u}^{(N,1)}_0 
    & \hat{u}^{(-N, 1)}_0 \\
    \hat{u}^{(0,1)}_1 & \hat{u}^{(1,1)}_1 & \hat{u}^{(-1,1)}_1 &
    \cdots &\hat{u}^{(N,1)}_1 
    & \hat{u}^{(-N, 1)}_1 \\
    \hat{u}^{(0,1)}_{-1} & \hat{u}^{(1,1)}_{-1} & \hat{u}^{(-1,1)}_{-1} &
    \cdots &\hat{u}^{(N,1)}_{-1} 
    & \hat{u}^{(-N, 1)}_{-1}\\
    \vdots & \vdots & \vdots & & \vdots &\vdots \\
    \hat{u}^{(0,1)}_{N} & \hat{u}^{(1, 1)}_{N} &
    \hat{u}^{(-1,1)}_{N} & \cdots 
    &\hat{u}^{(N,1)}_{N} & \hat{u}^{(-N, 1)}_{N}\\
    \hat{u}^{(0,1)}_{-N} & \hat{u}^{(1,1)}_{-N} &
    \hat{u}^{(-1,1)}_{-N} 
    & \cdots &\hat{u}^{(N, 1)}_{-N} & \hat{u}^{(-N, 1)}_{-N}
  \end{pmatrix},
\end{equation}
where $\td u^\e{l_1,l_2}(\alpha_1)$ is defined above, in
\eqref{eq:dot:R:pqua}, and a formula for its Fourier modes $\hat
u^\e{l_1,l_2}_j$ is given in \eqref{eq:uhat:l1l2}. We generally choose
$N$ in the range $N_1\le N\le (3/2)N_1$. The goal is to have enough
rows and columns in $\mc J^\qua$ that when the singular value
decomposition
\begin{equation}\label{eq:J:svd}
  \mc J^\qua[\mfq^\per(s)] = U(s)\Sigma(s) V(s)^T, \qquad
  \Sigma(s)=\diag\Big( \sigma_1(s)\,,\,\sigma_2(s)\,,\,
    \dots\,,\,\sigma_{2N+1}(s) \Big)
\end{equation}
is computed, the left and right singular vectors corresponding to the
smallest singular value have expansions in the $\{\varphi_{l,1}\}$
basis with coefficients $\{a_{l,1}\}$ that decay in amplitude to
machine precision by the time $|l|$ reaches $N$. Here the
  singular values $\sigma_i(s)$ are not related to the parameter
  $\sigma$ in the weighted spaces $\mc V_\sigma$, $\mc D_\sigma$ and
  $\mc X_\sigma$. Our reasons for using the unweighted Fourier basis
  $\{\varphi_{l_1,1}\}$ when computing the matrix representation of
  $D_q\mc R\big[q^\per\big]$ from $\mc D_\sigma^\e1$ to $\mc
  X_\rho^\e1$ are explained in Remark~\ref{rmk:J:unweighted} below.

Because this is a discretization of an infinite-dimensional problem,
the smallest singular values are of physical interest while the
largest are under-resolved and inaccurately computed. Thus, we order
the singular values in \emph{ascending order}
\begin{equation}\label{eq:sigma:ord}
  0\le \sigma_1(s) \le \sigma_2(s) \le \cdots \le \sigma_{2N+1}(s).
\end{equation}
Increasing $N$ further does not change the smallest singular values
and corresponding singular vectors (up to floating-point arithmetic
  effects) as they are already fully resolved without using the
high-frequency columns and rows of $\mc J^\qua$ that are added.
Standard computational routines, of course, return them in descending
order. But for this discussion, we reverse the columns of $U(s)$ and
the rows of $V(s)^T$ from the computation to match the convention
\eqref{eq:sigma:ord}. In the end, the matrices $U(s)$ and $V(s)^T$ do
not have to be computed at all; see Section~\ref{sec:svd:det}.

As the parameter $s$ changes, we compute $\Sigma(s)$ in
\eqref{eq:J:svd} and search for zeros $s_0$ of the smallest singular
value, $\sigma_{1}(s_0)=0$.  Since singular values are
returned as non-negative quantities, $\sigma_{1}(s)$ has a slope
discontinuity at each of its zeros. This is a challenge for
root-finding algorithms that rely on bracketing or polynomial
approximation. But if the matrix entries of $\mc J^\qua$ depend
analytically on $s$, there is an analytic SVD in which the singular
values and vectors are analytic functions of $s$; see Bunse-Gerstner
\emph{et al.} \cite{mehrmann:svd} and Kato (\cite{kato}, pp.~120--122,
  392--393). If $\mc J^\qua$ is not analytic with respect to $s$,
a smooth SVD \cite{dieci99}
often exists and can be used instead. When the SVD is computed
numerically, instead of changing the sign of $\sigma_{1}(s)$ when $s$
crosses $s_0$, the corresponding left or right singular vector will
change sign. If we transfer this sign change to $\sigma_{1}(s)$, the
slope discontinuity is eliminated and $\sigma_{1}(s)$ becomes
analytic or smooth.

One idea we experimented with is to multiply $\sigma_{1}(s)$ by
$\opn{sgn}(\la u_{1},v_{1}\ra)$, where $u_{1}$ and $v_{1}$ are the
first columns of $U$ and $V$ in \eqref{eq:J:svd} under the convention
\eqref{eq:sigma:ord}; $\sgn(a)$ is defined above in
\eqref{eq:sgn:def}; and $\la u,v\ra=\sum_j u^jv^j$, where we use
superscripts for the components of a column vector that has been
extracted from a matrix. If $u_{1}(s_0)$ and $v_{1}(s_0)$ are nearly
orthogonal, one can instead use $\opn{sgn}(\la Tu_{1},v_{1}\ra)$,
where $T$ is a Householder reflection that aligns $u_{1}(s_1)$ with
$v_{1}(s_1)$ at some point $s_1$ near $s_0$. Although this works fine
in the present problem (even without introducing $T$), it is clear
that in general, $\la Tu_{1},v_{1}\ra$ might change discontinuously if
$\sigma_{1}(s)$ and $\sigma_{2}(s)$ ever cross, possibly leading to a
sign change that does not correspond to a zero crossing of
$\sigma_{1}(s)$.

We propose, instead, to use the signs of the determinants of $U$ and
$V$ to track orientation changes in the singular vectors when $s$
crosses a zero of $\sigma_{1}(s)$. So we define
\begin{equation}\label{eq:chi}
  \begin{aligned}
    \chi(s) &= \sgn\big( \det U(s) \big)\sgn\big( \det V(s) \big) \sigma_{1}(s) \\
    &= \big( \sgn\det \mc J^\qua\big[\mfq^\per(s)\big]\, \big)\,
    \sigma_{1}(s).
    \end{aligned}
\end{equation}
The zeros $s_0$ of $\chi(s)$ will be used to identify bifurcation
points $q^\bif= \mfq^\per(s_0)$. Since $U(s)$ and $V(s)$ are
orthogonal, their determinants are equal to 1 or $-1$ and can be
computed accurately by $LU$ or $QR$ factorization to determine which.
(Including $\sgn$ just rounds the numerical result to the exact
  value).  The second formula of \eqref{eq:chi} has to be treated with
care, but can be computed faster than the first formula as an
intermediate step of computing $\sigma_{1}(s)$, without having
to actually form the matrices $U(s)$ or $V(s)$ or compute their
determinants; see Section~\ref{sec:svd:det}.

To find a zero $s_0$ of $\chi(s)$, we can use a root bracketing
technique such as Brent's method \cite{brent:73} to reduce $|\chi(s)|$
to the point that floating-point errors corrupt the smallest singular
value of the SVD algorithm, which is typically below $10^{-13}$ in
double-precision. Alternatively, one can compute $\chi(s)$ at a set of
Chebyshev nodes on an interval $[s_1,s_2]$ for which $\chi(s_1)$ and
$\chi(s_2)$ have opposite signs. One can then use Newton's method or
Brent's method on the Chebyshev interpolation polynomial to find a
zero $s_0$ of $\chi(s)$. We demonstrate both techniques in
Section~\ref{sec:num} below.  Once $q^\bif=\mfq^\per(s_0)$ has been
found, the null vector $\dot q^\qua$ in \eqref{eq:span:dot:q} is given
by
\begin{equation}\label{eq:q:qua:from:v}
  \dot q^\qua = (\dot\eta^\qua,0,0), \qquad
  \dot \eta^\qua = v^1\varphi_{0,1} + \sum_{j=1}^N
  \big[ v^{2j}\varphi_{j,1} + v^{2j+1}\varphi_{-j,1} \big] =
  \sum_{j=-N}^N a_{j,1}\varphi_{j,1},
\end{equation}
where $v=v_{1}$ is the first column of $V$ and we make use of the row
and column ordering of $\mc J^\qua$ in \eqref{eq:J:qua}. Here
$a_{0,1}=v^1$, $a_{j,1}=v^{2j}$ and $a_{-j,1}=v^{2j+1}$ for $1\le
j\le N$. Since $V$ is orthogonal, $\|v_{1}\|=1$, so $\dot q^\qua$ is
already normalized sensibly and $q^\bif + \veps\dot q^\qua$ serves as
a useful initial guess for computing solutions on the secondary branch
$\mfq^\qua(\theta)$. The null vector only needs to be computed when
$s=s_0$ is a zero of $\chi(s)$, and can be computed efficiently
without forming $U(s)$ or the other columns of $V(s)$ if the problem
size is large enough to make these calculations expensive; see
Section~\ref{sec:svd:det}.

\begin{remark} \label{rmk:chi}
An important feature of $\chi(s)$ is that once $N$ is large enough
that the left and right singular vectors $u_{1}$ and $v_{1}$ have
entries that decay to zero in floating-point arithmetic, further
increases in $N$ do not change the numerical value of $\chi(s)$.
\end{remark}

To explain Remark~\ref{rmk:chi}, we first consider the zero-amplitude
case. Setting $\eta=0$, $\xi=0$, $J=1$ and $\kappa=0$ in
\eqref{eq:lin:l1l2}, we find that
\begin{equation}\label{eq:uhat:zro:amp}
  \hat u^\e{l_1,l_2}_j = \begin{cases}
    -b|l_1+kl_2|+g+\tau(l_1+kl_2)^2, & j=l_1, \\
    0, & \text{otherwise.}
  \end{cases}
\end{equation}
Thus, $\mc J^\qua$ in \eqref{eq:J:qua} is diagonal, and the last two
diagonal entries are
\begin{equation}
  \begin{aligned}
  \mc J^\qua_{2N,2N} &= (1-\tau|N+k|)(1-|N+k|), \\
  \mc J^\qua_{2N+1,2N+1} &= (1-\tau|-N+k|)(1-|-N+k|),
  \end{aligned}
\end{equation}
where we used $b=c^2=(g/k_1)+\tau k_1$ for the square of the wave
speed of the zero-amplitude traveling wave of dimensionless wave
number $k_1=1$ and gravitational acceleration $g=1$.  Assuming $k>0$
and $N>k+1$, we have $1-|\pm N+k|\le 1-(N-k)<0$, so both of these
final diagonal entries are negative if $\tau=0$.  If $\tau>0$, then
both diagonal entries will be positive once $\tau N>\tau k+1$ and
$N>k+1$. Thus, with or without surface tension, once $N$ is large
enough, increasing $N$ by one does not change the sign of the
determinant of $\mc J^\qua$ in the linearization about $\eta=0$. With
$N$ fixed, because an analytic SVD of the form \eqref{eq:J:svd}
exists, the sign of $\det\mc J^\qua[\mfq^\per(s)]$ will only change
when $s$ passes through a zero of $\chi(s)$, which is the signed
version of the smallest singular value $\sigma_{1}(s)$.
If such zero crossings correspond to well-resolved singular vectors in
the kernel and $N_1$ and $N_2$ are large enough,
then upon replacing $N=N_1$ by $N=N_2$, the same crossings will be
encountered and the sign of the determinant at a given $s$ will not
change.  This argument would break down if truncating the matrix leads
to a spurious null vector at some $s$ for either $N_1$ or $N_2$, but
we find that only the large singular values are sensitive to where the
matrix is truncated. When a null vector is found, it is easy to check
\emph{a-posteriori} that the entries $v^{2j}$ and $v^{2j+1}$ of
$v=v_{1}$ in \eqref{eq:q:qua:from:v} decay to machine precision by
the time $j$ reaches $N$.

\begin{remark}\label{rmk:bad:det}
  In finite-dimensional bifurcation problems, say $f(q)=0\in\mbb R^n$
  with $q=(u,s)\in\mbb R^{n+1}$ and primary branch parameterized by
  $q=\mfq(s)$, one can use
  \begin{equation}\label{eq:det:test:fcn}
    \det \mc J^e(s), \qquad \mc J^e(s) =
    \left( \begin{array}{c} f_q[\mfq(s)] \\
        \mfq'(s)^T \end{array} \right)
  \end{equation}
  as a test function that changes sign at simple bifurcations.  This
  is the numerical approach advocated in \cite{allgower2012numerical},
  for example, and is one of the test functions implemented in
  \textsc{Matcont} \cite{dhooge:03,bindel14}. In our case, using
  Bloch's theorem, we only have to consider quasi-periodic
  perturbations $\dot q\in\mc D_\sigma^\e{l_2}$ with $l_2=1$, so we
  can replace $\det \mc J^e(s)$ above by $\det \mc
  J^\qua[\mfq^\per(s)]$ in \eqref{eq:J:qua}. The most common way to
  compute the determinant is as the product of the diagonal entries of
  the LU factorization.  But these products can be very large or very
  small, potentially leading to overflow or underflow in
  floating-point arithmetic, and it is difficult to know what order of
  magnitude of the determinant constitutes a zero crossing. One can
  look for zeros among the diagonal entries of the LU factorization,
  but there are many cases where a nearly singular matrix has diagonal
  entries all bounded away from zero. For example, the bidiagonal
  matrix with $1$'s on the diagonal and $2$'s on the superdiagonal has
  unit determinant but is effectively singular once the matrix size
  exceeds 50. Moreover, unlike our $\chi(s)$ function, the numerical
  value of the determinant will change when the matrix truncation
  parameter $N$ changes. For all these reasons, the determinant itself
  is not a suitable function to identify bifurcation points in this
  problem, though its sign is effective at removing the slope
  discontinuities of $\sigma_{1}(s)$, enabling the use of root-finding
  algorithms to rapidly locate its zeros.
\end{remark}

\begin{remark}\label{rmk:dyn:sys} In the context of dynamical
  systems, $du/dt=f(u,s)$, many test functions have been devised to
  identify fold points, Hopf points, and branch points
  \cite{friedman01,dhooge:03, bindel08,bindel14,govaerts:book}. For
  large-scale equilibrium problems arising from discretized PDEs,
  Bindel \emph{et al.} \cite{bindel14} reached the same conclusion we
  did above in Remark~\ref{rmk:bad:det} on the unsuitability of
  \eqref{eq:det:test:fcn} as a test function.  Instead, in
  \cite{bindel14}, minimally augmented systems \cite{griewank:84,
    allgower:97, beyn01,bindel08, govaerts:book} are used together
  with Newton's method to locate branch points. Studying the details
  of this approach, e.g., Algorithm~5 of \cite{bindel14}, the Newton
  iteration involves solving $f=0$ simultaneously with driving
  $\psi(s)$ in \eqref{eq:J:ee} to zero. As a result, intermediate
  Newton iterations will not involve states $u$ that lie precisely on
  the primary bifurcation curve. This causes a problem for us as we
  have to linearize about a periodic solution to use Bloch-Fourier
  theory. The assumption that $\mc J^{ee}$ in \eqref{eq:J:ee} is
  square is also incompatible with our formulation of the traveling
  wave problem as an overdetermined nonlinear least squares problem,
  where $\mc J$ in \eqref{eq:Jij:levmar} has more rows than columns to
  reduce aliasing errors and improve the accuracy of the computed
  periodic or quasi-periodic traveling waves. We prefer to treat the
  two stages of finding traveling waves and studying their behavior
  under perturbation as separate infinite dimensional problems that we
  solve with spectral methods using as many modes as necessary to
  achieve double-precision accuracy. One could still devise a
  minimally augmented systems approach within this philosophy to
  search for changes in the dimension of the kernel of $\mc
  J^\qua[\mfq^\per(s)]$, but it would require a custom implementation
  and the resulting test function $\psi(s)$ analogous to
  \eqref{eq:J:ee} would not be much cheaper to compute than our
  $\chi(s)$. Moreover, $\psi(s)$ is only locally defined near each
  bifurcation point due to various choices of vectors that are made
  when augmenting the Jacobian. It also does not have the mesh
  independence feature of $\chi(s)$, so $\psi(s)$ will change
  discontinuously if the mesh is refined adaptively as $s$ changes.
\end{remark}

\begin{remark}
  The closest test function we have found in the literature to
  \eqref{eq:chi} is a signed version of the magnitude of the smallest
  eigenvalue of $f_u$, denoted $|\lambda_\text{min}(s)|$, (see
    equation (67) of \cite{bindel14}), which is proposed as an
  alternative to $\det(f_u(s))$ for detecting zero-Hopf points. We use
  $\sigma_\text{min}$ instead of $|\lambda_\text{min}|$ and broaden
  the scope of the test function to search for branch points. This has
  the advantage that it can be applied to an equation $g(u,s)=0$ that
  is equivalent to $f(u,s)=0$ but is no longer in dynamical systems
  form. For example, one does not obtain the dynamic water wave
  equations \cite{choi1999exact, dyachenko1996analytical,
    dyachenko1996nonlinear, dyachenko2001dynamics, ruban:2005,
    quasi:ivp} by setting $\eta_t=\mc R[q]$ in \eqref{eq:Rq}, since
  the velocity potential has been eliminated in the traveling wave
  equations. While the eigenvalues of $f_u$ at an equilibrium point
  give information about the dynamics of $u$ under perturbation, only
  the kernel of $g_u$ (and changes in its dimension) are relevant,
  making $\sigma_\text{min}(s)$ more natural than
  $|\lambda_\text{min}(s)|$ in a test function based on solving $g=0$.
\end{remark}

\begin{remark}\label{rmk:J:unweighted}
    Since $D_q\mc R\big[q^\per\big]$ maps $\big(\mc
      X_\sigma^\e1,0,0\big)$ to $\mc X_\rho^\e1$ with $0<\rho<\sigma$,
    it would be natural to use
    $\big\{2^{-1/2}e^{-\sigma(|l_1|+1)}\varphi_{l_1,1}\big\}_{l_1\in\mbb Z}$
    and
    $\big\{2^{-1/2}e^{-\rho(|j_1|+1)}\varphi_{j_1,1}\big\}_{j_1\in\mbb Z}$
    as orthonormal bases for $\mc X_\sigma^\e1$ and $\mc X_\rho^\e1$
    in the domain and range.  This would cause the rows and columns of
    $\mc J^\qua\big[q^\per\big]$ to be rescaled so that entry $\hat
    u^\e{l_1,1}_{j_1}$ in \eqref{eq:J:qua} is multiplied by
    $e^{-\sigma|l_1|}e^{\rho|j_1|}$. This yields a two-parameter
    family of matrices $\mc J^\qua\big[q^\per\big]$, parameterized by
    $\rho$ and $\sigma$, that ultimately predict the same bifurcation
    points and perturbation directions to switch branches. Recall that
    $\sigma$ is defined by the requirement that $\eta\in\mc
    V_\sigma^\per$, so any smaller positive value can also be used
    without violating the hypotheses. If $\sigma$ is small, the
    corresponding $\rho$ will also be small since
    $\rho\in(0,\sigma)$. So we are effectively considering the
    $\sigma\to0^+$ limit, with
    $e^{-\sigma|l_1|}e^{\rho|j_1|}\approx1$, in the formula
    \eqref{eq:J:qua} for $\mc J^\qua\big[q^\per\big]$.

    This is the most suitable choice for the numerical algorithm
    for three reasons. First, if $\rho N$ were large, floating-point
    errors would be amplified in the matrix entries in the lower-left
    corner of $\mc J^\qua\big[q^\per\big]$, where $|j_1|\gg|l_1|$,
    possibly reducing the accuracy of $\sigma_1(s)$ in
    \eqref{eq:sigma:ord} and the corresponding right singular vector,
    $v_1$, which is the desired null vector predicting the bifurcation
    direction at the zeros of $\sigma_1(s)$. Second, $N$ may need to
    be increased to fully resolve this null vector for the rescaled
    version of $\mc J^\qua\big[q^\per\big]$. Indeed, if $v_1^j$ are
    the entries of $v_1$ for the unscaled version of $\mc
    J^\qua\big[q^\per\big]$, then $Ce^{\sigma|j|}v_j$ will be the
    entries of $v_1$ for the rescaled version, which decay slower and
    therefore need a larger $N$ to decay below the roundoff-error
    threshold. ($C$ is a normalizing constant.)  And third, rescaling
    the matrix $\mc J^\qua\big[q^\per\big]$ will change its singular
    values, possibly leading to new small singular values that do not
    correspond to bifurcation directions but instead to high-frequency
    inputs to $D_q\mc R\big[q^\per\big]$ that are compressed due to
    the change in norm from the input space $(\mc X^\e1_\sigma,0,0)$
    to the output space $\mc X^\e1_\rho$. By considering the
    $\sigma\to0^+$ limit, only the well-resolved singular values are
    small.
\end{remark}

\subsection{Computing the sign of the determinant of a matrix
  along with its singular values}
\label{sec:svd:det}

For simplicity, since the results of this section are not tied to the
water wave problem, we revert to standard numerical linear algebra
notation: $\mc J^\qua$ will be denoted by $A$; its dimension $2N+1$
will be denoted by $n$; and the singular values will be ordered so
that $\sigma_1\ge\sigma_2\ge\cdots\sigma_n\ge0$. In floating-point
arithmetic, we \emph{define} the sign of the determinant as a single
function (without computing $\det A$ as an intermediate result) to be
\begin{equation}\label{eq:sgn:det:A}
  \sgn\det A = (\det U)(\det V), \qquad
  A = U\Sigma V^T, \qquad \Sigma = \diag(\sigma_1,\dots,\sigma_n).
\end{equation}
This is a procedural definition: compute the SVD of $A$ numerically to
obtain $U$ and $V$, which are orthogonal. Then compute their
determinants by LU or QR factorization, round to 1 or $-1$, and
multiply them together.  As explained in Section~\ref{sec:detect:bif},
if $A$ depends analytically on a parameter $s$, then $\chi(s)=(\det
  U(s))(\det V(s))\sigma_n(s)$ will be a real analytic function that
does not have slope discontinuities at the zeros of
$\sigma_n(s)$. This conclusion relies on the \emph{existence} of an
analytic SVD, but it is only necessary to compute the standard SVD
with non-negative singular values. Our goal now is to show how to
compute $(\det U)(\det V)$ without actually forming the matrices $U$
and $V$ or computing their determinants explicitly.

Recall that the first step of the SVD algorithm is to
compute a bidiagonal reduction, e.g., using the
`dgebrd' routine in the LAPACK library:
\begin{equation}\label{eq:householder}
  U_0^T A V_0 = B_0.
\end{equation}
Here $U_0$ and $V_0$ are orthogonal matrices and $B_0$ is upper
bidiagonal. We will show that
\begin{equation}\label{eq:sgn:det}
  \sgn \det A = (\det U_0)(\det V_0)\sgn(\det B_0),
\end{equation}
where $\sgn(\det B_0)=\prod_{j=1}^n \sgn\big((B_0)_{jj}\big)$.
Here $(\det U_0)$ and $(\det V_0)$ are $\pm1$ with parity matching the
number of left and right Householder transformations performed in the
bidiagonal reduction, which are easy to count from the output of
`dgebrd'. The left-hand side of \eqref{eq:sgn:det} is still defined as
$(\det U)(\det V)$, but we wish to use \eqref{eq:sgn:det} as a cheaper
alternative.

Our task is now to analyze what would happen if we were to continue
with the standard algorithm to compute $U$ and $V$ along with
$\Sigma$. The next step of this standard algorithm is to call `dbdsqr'
to compute a sequence of upper bidiagonal matrices
\begin{equation}\label{eq:Bk}
  B_k=U_k^TB_{k-1}V_k, \qquad (k=1,2,3,\dots)
\end{equation}
that converge rapidly to a diagonal matrix $\td\Sigma$. In the initial
iterations, while searching for the smallest singular values, `dbdsqr'
employs a zero-shift in the implicit QR algorithm \cite{demmel:kahan}
or the mathematically equivalent `dqds' algorithm
\cite{fernando:parlett,demmel:book}. This leads to high relative
accuracy in all the computed singular values. Moreover, each iteration
of \eqref{eq:Bk} in floating-point arithmetic is equivalent to
introducing a small relative perturbation of each non-zero matrix
entry of $B_{k-1}$, performing a ``bulge-chasing'' sequence of Givens
rotations in exact arithmetic \cite{demmel:kahan, demmel:book}, and
then perturbing each non-zero entry of the result by a small relative
amount to obtain $B_k$. These perturbations of the diagonal and
superdiagonal entries of $B_{k-1}$ and $B_k$ do not affect the signs
of their determinants, and the Givens rotations all have unit
determinant, so $\sgn(\det B_k) = \sgn(\det B_{k-1})$.  On each
iteration, super-diagonal entries of $B_k$ that are sufficiently small
relative to their neighboring diagonal entries are zeroed out, which
does not affect $\sgn(\det B_k)$. The algorithm terminates and
$\td\Sigma$ is set to $B_k$ when the last super-diagonal entry is
zeroed out. Thus, $\sgn(\det\td\Sigma)=\sgn(\det B_0)$. At this point
we have
\begin{equation}\label{eq:A:td:UV}
  A = \td U\td\Sigma\td V^T, \qquad
  \td U = U_0U_1\cdots U_K, \qquad  \td V^T = V_K^T\cdots V_1^TV_0^T,
\end{equation}
where $K$ is the number of iterations required for convergence.  We
finally obtain
\begin{equation}\label{eq:A:USVT}
  A = U\Sigma V^T = (\td UP^T)(P\td\Sigma DP^T)(PD\td V^T), \qquad
  D = \diag\big(\sgn(\td\sigma_1),\dots,\sgn(\td\sigma_n)\big),
\end{equation}
where multiplying $\td\Sigma$ by $D$ takes the absolute values of
the diagonal entries and $P$ is a permutation matrix such that
$\Sigma=P\td\Sigma DP^T$ contains the singular values on the
diagonal in non-increasing order. Note that we have transferred the
signs on the diagonal of $\td\Sigma$ to the rows of $\td V^T$
via $D$. We conclude that
\begin{equation}
  \begin{aligned}
    (\det U_0)(\det V_0)\sgn(\det B_0) &=
    (\det U_0)(\det V_0)\sgn(\det\td\Sigma) = 
  (\det \td U)(\det\td V)\det(D) \\
  &=
  \det(\td UP^T)\det(PD\td V^T) =
      (\det U)(\det V),
  \end{aligned}
\end{equation}
where we used $\det(\td U)=\det(U_0)$ and $\det(\td V)=\det(V_0)$
since the matrices $U_1,\dots,U_K$ and $V_K^T,\dots,V_1^T$ are
comprised of Givens rotations of unit determinant.  This shows that we
can stop at \eqref{eq:sgn:det} and get the same result as continuing
to the completed computation of $(\det U)(\det V)$. Again, this is due
to the `dbdsqr' and `dqds' algorithms maintaining high relative
accuracy on the entries of successive bidiagonal matrices $B_k$.  The
smallest entries on the diagonal cannot jump across zero as this would
entail a large relative change.

Since $\sgn\det A$ is known already after the initial bidiagonal
reduction, it is not necessary to accumulate the Givens rotations to
form $\td U$ and $\td V^T$ in \eqref{eq:A:td:UV}, apply the
permutations to form $U$ and $V^T$ in \eqref{eq:A:USVT}, or compute
the determinants of $U$ and $V$ explicitly. The initial bidiagonal
reduction involves $(8/3)n^3+O(n^2)$ flops \cite{demmel:book} while
forming $U$ and $V$ and computing their determinants involves another
$(16/3)n^3+O(n^2)$ flops, which triples the running time.  Once a zero
of $\chi(s)$ has been found, we can form $V$ to find the null vector
of $A$ in $2n^3+O(n^2)$ flops, without also forming $U$ or computing
the determinants of $U$ and $V$.

\section{Numerical results}
\label{sec:num}
We now present two examples of quasi-periodic traveling waves that
bifurcate from finite-amplitude periodic traveling waves:
quasi-periodic gravity waves and overturning quasi-periodic
gravity-capillary waves.

\subsection{Quasi-periodic gravity waves}
\label{sec:num:grav}

As noted in the introduction, typical wave numbers for capillary waves
in the ocean are $10^7$ times greater than those of gravity waves, and
one does not expect to observe interesting nonlinear interaction
between component waves of such different length scales.  For ocean
waves, it is appropriate to set the surface tension coefficient to
zero, which removes the capillary wave branch \cite{djordjevic77} from
the dispersion equation \eqref{eq:disp:rel}. We are interested in
quasi-periodic waves in which the two wavelengths are comparable, so
we use $k_1=1$ and $k_2=k=1/\sqrt2$ for comparison with several of the
examples in \cite{quasi:trav}. Whereas the quasi-periodic solutions
computed in \cite{quasi:trav} persist to zero amplitude, the pure
gravity wave problem does not support genuinely quasi-periodic solutions in the
linearization about the zero solution since there is only one wave
number $k$ for a given wave speed $c$ in the dispersion relation
\eqref{eq:disp:rel}. Thus, we must search for secondary bifurcations.

Since surface tension is held fixed at $\tau=0$, this is a
one-parameter bifurcation problem. We use the wave height,
$h=\td\eta(0)-\td\eta(\pi)$, as the amplitude parameter as it
increases monotonically from the zero solution to the sharply crested
$120^\circ$ corner wave \cite{lhf:78}.  The blue curves in
Figure~\ref{fig:grav:main} show wave speed $c$ versus wave height $h$
(left panel and inset) and versus $\hat\eta_{1,0}$ (right panel and
  inset).  Here we follow the convention of Remark~\ref{rmk:eta10} and
write $\hat\eta_{1,0}$ rather than $\hat\eta_1$, even though
$\eta(\alpha_1,\alpha_2)=\td\eta(\alpha_1)$ is independent of
$\alpha_2$.  Each blue marker corresponds to a computed periodic
traveling wave. Note that both the wave speed $c$ and the first
Fourier mode $\hat\eta_{1,0}$ possess turning points beyond which they
no longer increase monotonically as one progresses further along the
primary bifurcation branch. The black markers correspond to
quasi-periodic solutions, and will be discussed below.  We increase
the Fourier cutoff $N_1$ in \eqref{eq:p:trav:1d} as needed to maintain
spectral accuracy in the computed traveling waves.

\begin{figure}
  \includegraphics[width=.92\textwidth]{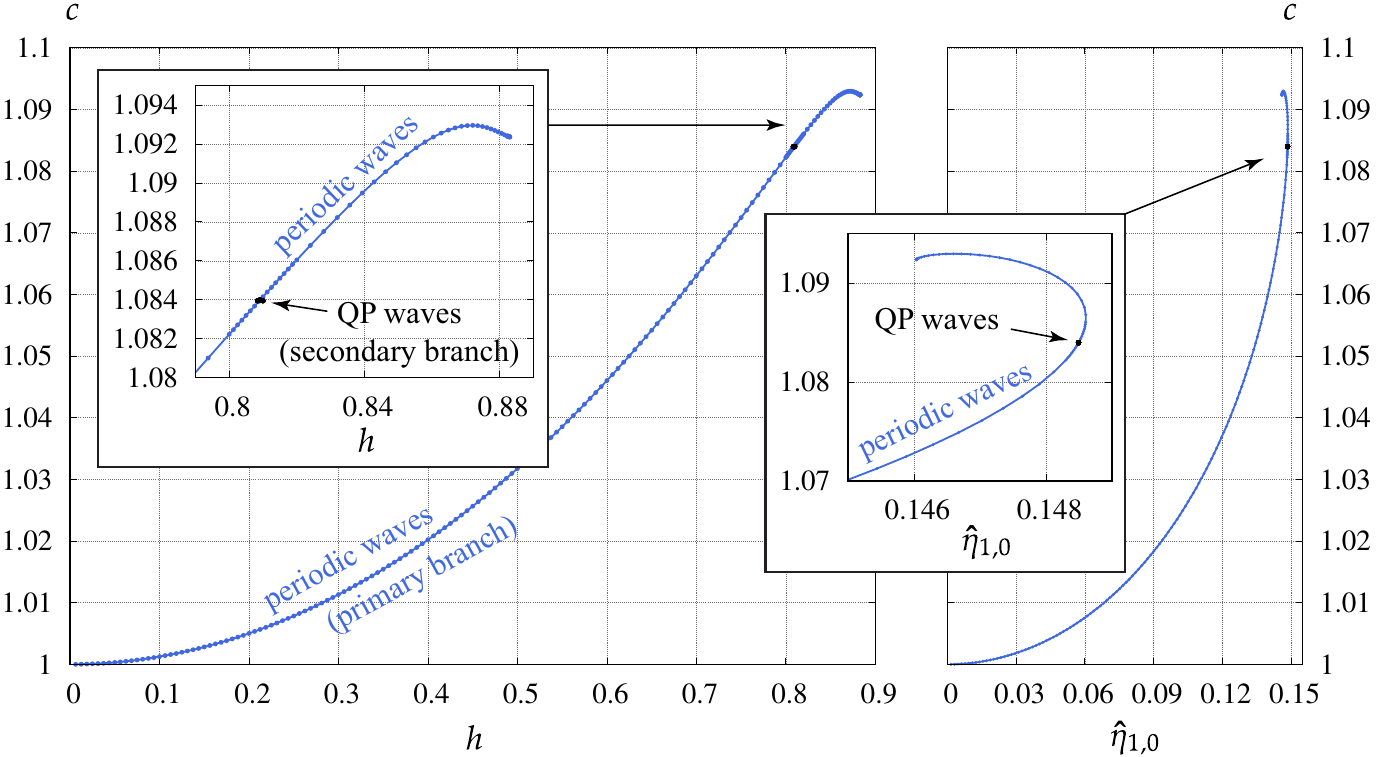}
  \caption{\label{fig:grav:main} Bifurcation diagrams of the primary
    branch (blue) and a secondary branch (black) for traveling water
    waves with zero surface tension. The wave height
    $h=\td\eta(0)-\td\eta(\pi)$ is preferable as an amplitude
    parameter to the wave speed $c$ or the first Fourier mode
    $\hat\eta_{1,0}$ on the primary branch as it increases
    monotonically all the way to the extreme $120^\circ$ corner
    wave. }
\end{figure}

Table~\ref{tbl:h:N1} gives the sequence of Fourier cutoff values $N_1$
used in the data of Figure~\ref{fig:grav:main} as well as the largest
wave height $h$ for which $N_1$ was used. We used $M_1=3N_1$
gridpoints in the pseudo-spectral computation of the products and
quotients in \eqref{eq:govern} and \eqref{eq:Jij:levmar} and for the
rows of the residual function $r_m$ in \eqref{eq:min:F:1d}. Also shown
in the table are the Fourier cutoff values $N$ used to truncate $\mc
J^\qua$ in \eqref{eq:J:qua} to $2N+1$ rows and columns. This
calculation also requires a grid on which to evaluate the products and
quotients in \eqref{eq:lin:l1l2}--\eqref{eq:uhat:l1l2}. For this we
used $M=3N$ gridpoints. In the last column of Table~\ref{tbl:h:N1},
with $N=N_1=32000$, we reduced $M_1$ to $65536$ and $M$ to $78732$ to
reduce the memory cost and running time, so aliasing errors may be
slightly higher in this final batch of results.

\begin{figure}
  \includegraphics[width=\textwidth]{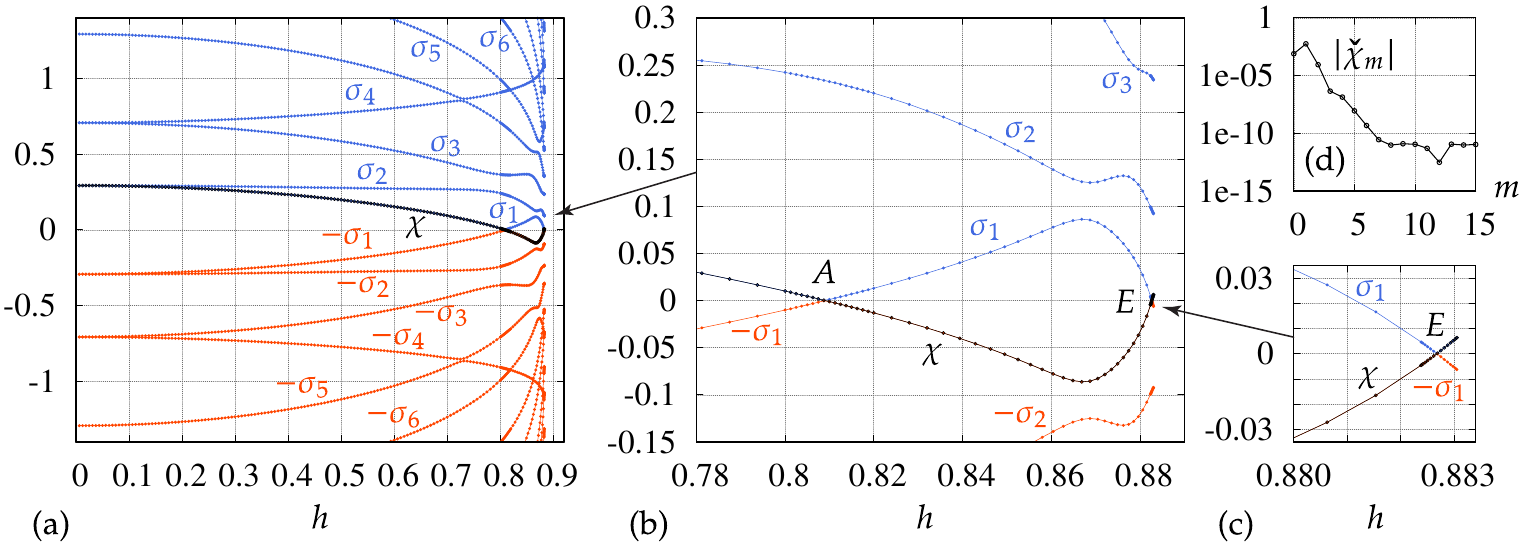}
  \caption{\label{fig:bif:svd} Plots of the smallest singular values
    of $\mc J^\qua$ versus $h$ (blue), their negations (red), the test
    function $\chi(h)$ (black), the bifurcation points $A$ and $E$,
    and the Chebyshev amplitudes $|\check\chi_m|$ obtained from the
    interpolation points shown in panel (c). Note that $\chi(h)$
    is a smooth function in spite of the many changes in Fourier
    cutoff $N_1$ and mesh size $M_1$ from Table~\ref{tbl:h:N1}
    represented in the graphs in panels (a)--(c).}
\end{figure}

\begin{table}[t]
\begin{equation*}
  \begin{array}{c|c|c|c|c|c|c|c|c|c|c}
  h & 0.075 & 0.2 & 0.3 & 0.4 & 0.575 & 0.65 & 0.725 & 0.8 & 0.82 &
  0.832 \\
  N_1 & 22 & 30 & 48 & 60 & 128 & 180 & 300 & 600 & 768 & 864 \\
  N  & 48 & 54 & 64 & 80 & 160 & 192 & 324 & 640 & 768 & 864 \\
  \hline
  h & 0.8394 & 0.8429 & 0.8525 & 0.8627 & 0.8687 & 0.8721 & 0.8763 &
  0.8786 & 0.8815 & 0.88305 \\
  N_1 & 1024 & 1350 & 1800 & 2400 & 4096 & 6144 & 8192 & 12288 &
  16384 & 32000
  \end{array}
\end{equation*}
\caption{\label{tbl:h:N1} Largest wave height $h$ for which the
  Fourier cutoff $N_1$ was used to compute the periodic waves on the
  primary branch in Figure~\ref{fig:grav:main}, and the cutoff $N$ that was
  used to truncate $\mc J^\per$ to $2N+1$ rows and columns. In the
  bottom row, $N=N_1$.}
\end{table}

Panels (a)--(c) of Figure~\ref{fig:bif:svd} show the first several
singular values $\sigma_j$ (blue) as well as $-\sigma_j$ (red),
keeping in mind the convention \eqref{eq:sigma:ord} that
$\sigma_{j+1}\ge\sigma_{j}\ge0$. These are plotted as functions of the
wave height $h$ for each of the periodic solutions corresponding to
the blue markers in Figure~\ref{fig:grav:main}. We find two
bifurcation points in the range $0\le h\le 0.88305$, which we label
\begin{equation}\label{eq:hA:hE}
  h_A = 0.8090707936918, \qquad\quad
  h_E = 0.882674234631.
\end{equation}
The corresponding wave speeds are
\begin{equation}\label{eq:cA:cE}
  c_A = 1.083977046908, \qquad\quad
  c_E = 1.09238325132.
\end{equation}
Here $E$ stands for ``extreme,'' as the wave profile of this
bifurcation is getting close to the limiting $120^\circ$ corner wave,
which has been computed accurately by Gandzha and Lukomsky
\cite{gandzha:07} and has a wave height of
$h_\text{max}=0.88632800992$.  The black curve in
Figure~\ref{fig:bif:svd} shows $\chi(h)$, which turns out to satisfy
\begin{equation}
  \chi(h) = \begin{cases} \phm\sigma_1(h), & h\le h_A \;\text{ or }\;
    h\in[h_E,0.88305], \\
    -\sigma_1(h), & h_A \le h\le h_E.
  \end{cases}
\end{equation}
There may be additional zero-crossings with $h>0.88305$, but we ran
out of computational resources to search for them.

We compute $h_A$ using Brent's method \cite{brent:73} starting with
the bracket $\chi(0.8)=0.0100259>0$ and
$\chi(0.82)=-0.0130748<0$. Brent's method uses a combination of linear
interpolation, inverse quadratic interpolation, and bisection to
rapidly shrink the bracket to a zero of the function without
derivative evaluations.  In this example, only 7 additional function
evaluations were needed to converge, with $\chi(h)$ taking on the
values $4.5\times10^{-4}$,
$-4.7\times10^{-9}$, $7.2\times10^{-12}$, $1.5\times10^{-13}$,
$-2.3\times10^{-13}$, $1.1\times10^{-14}$, $-2.9\times10^{-15}$. The
last value corresponds to $h=h_A$ reported in \eqref{eq:hA:hE}. We
used $N_1=768$ and $M_1=2304$ in the traveling wave calculation and
$2N+1=1537$ for the dimension of $\mc J^\qua$ in the SVD calculation
when computing $\chi(h)$ inside Brent's method. The total running time
of the 9 function evaluations was 21.7 seconds on a workstation with
two 12-core 3.0 GHz Intel Xeon Gold 6136 processors.

To demonstrate an alternative approach, we compute $h_E$ using
polynomial interpolation. First, we identify a bracket with
$\chi(0.88238)=-0.00467<0$ and $\chi(0.88305)=0.00638>0$. We then
evaluate $\chi$ on a 16-point Chebyshev-Gauss grid over the interval
$[0.88238,0.88305]$. Unlike Brent's method, this can be done in
parallel, though we did not have the computational resources to
do this. From these values, we obtain the expansion
\begin{equation}\label{eq:cheb:expand}
  \chi(h) \approx \acute\chi(h) = \sum_{m=0}^{15} \check\chi_m T_m\Big(
    2\frac{h-0.88238}{0.88305-0.88238} - 1 \Big).
\end{equation}
Panel (d) of Figure~\ref{fig:bif:svd} shows the Chebyshev mode
amplitudes $|\check\chi_m|$. It appears that the modes decay rapidly
up to $m=7$, and then start to be corrupted by floating-point
arithmetic errors. So we truncate the series \eqref{eq:cheb:expand} by
reducing the upper limit from 15 to 7 and then use Newton's method on
$\acute\chi(h)$ to obtain $h_E$ in \eqref{eq:hA:hE}. A final
evaluation of the original function $\chi$ (not its polynomial
  approximation) yields $\chi(h_E)=2.1\times10^{-12}$. The relatively
large floating-point errors visible in the high-frequency Chebyshev
modes and the larger value of $|\chi(h_E)|$ relative to $|\chi(h_A)|$
are due to the increase in problem size.  For the $h_E$ calculation,
we used $N_1=N=32000$, $M_1=65536$ and $M=78732$. We discuss
floating-point errors in the smallest singular value (and hence in
  $\chi$) in Appendix~\ref{sec:fp:err}.

\begin{remark}
  It would have been better (though not worth redoing) to use a nested
  set of Chebyshev-Lobatto grids with $2^n+1$ points. We could have
  stopped at $n=3$ rather than guessing that 16 points would be enough
  to resolve $\chi(h)$ with spectral accuracy, which turned out to be
  overkill. Each $\chi(h)$ evaluation involves computing a traveling
  wave and then computing the SVD of $\mc J^\qua$. At this problem
  size, each traveling wave calculation takes 45 minutes on one large
  memory node of the Lawrencium cluster at Lawrence Berkeley National
  Laboratory (LBNL) while the SVD takes 50 minutes on 15 standard
  memory nodes, using ScaLapack. Each node has 32 cores (2.3 GHz) and
  either 96 GB or 1.6 TB of memory.
\end{remark}

\begin{remark}\label{rmk:large:chebyshev}
  If the problem is so large that even the bidiagonalization phase of
  the SVD is prohibitively expensive, one can compute only the
  smallest singular values using bisection and inverse iteration
  \cite{demmel:book,schwetlick:03}. In this case, one can still use
  Chebyshev polynomials to represent $\chi(s)$ and locate bifurcation
  points, but a sign has to be added by hand to some of the values of
  $\sigma_\text{min}(s_i)$ to convert them into $\chi(s_i)$, where
  $s_i$ are the Chebyshev-Lobatto gridpoints. This is usually easy by
  plotting both $\sigma_\text{min}(s_i)$ and $-\sigma_\text{min}(s_i)$
  on the same plot, as in panel (c) of Figure~\ref{fig:bif:svd}, and
  looking at the graphs to determine where they cross zero. Usually at
  most one point has an ambiguous sign, and it is easy to tell which
  sign is correct since the Chebyshev modes will only decay rapidly
  when $\chi(s)$ has been sampled with the correct signs.  One could
  potentially automate this process using continuous invariant
  subspace (CIS) methods \cite{demmel01,dieci01,friedman01,beyn01,
    bindel08} to avoid sign flips in the corresponding singular
  vectors, or by introducing signs from one end of the list of
  $\chi(s_i)$ values until the Chebyshev modes suddenly decay rapidly.
\end{remark}

\begin{figure}
  \begin{center}
    \includegraphics[width=.865\textwidth]{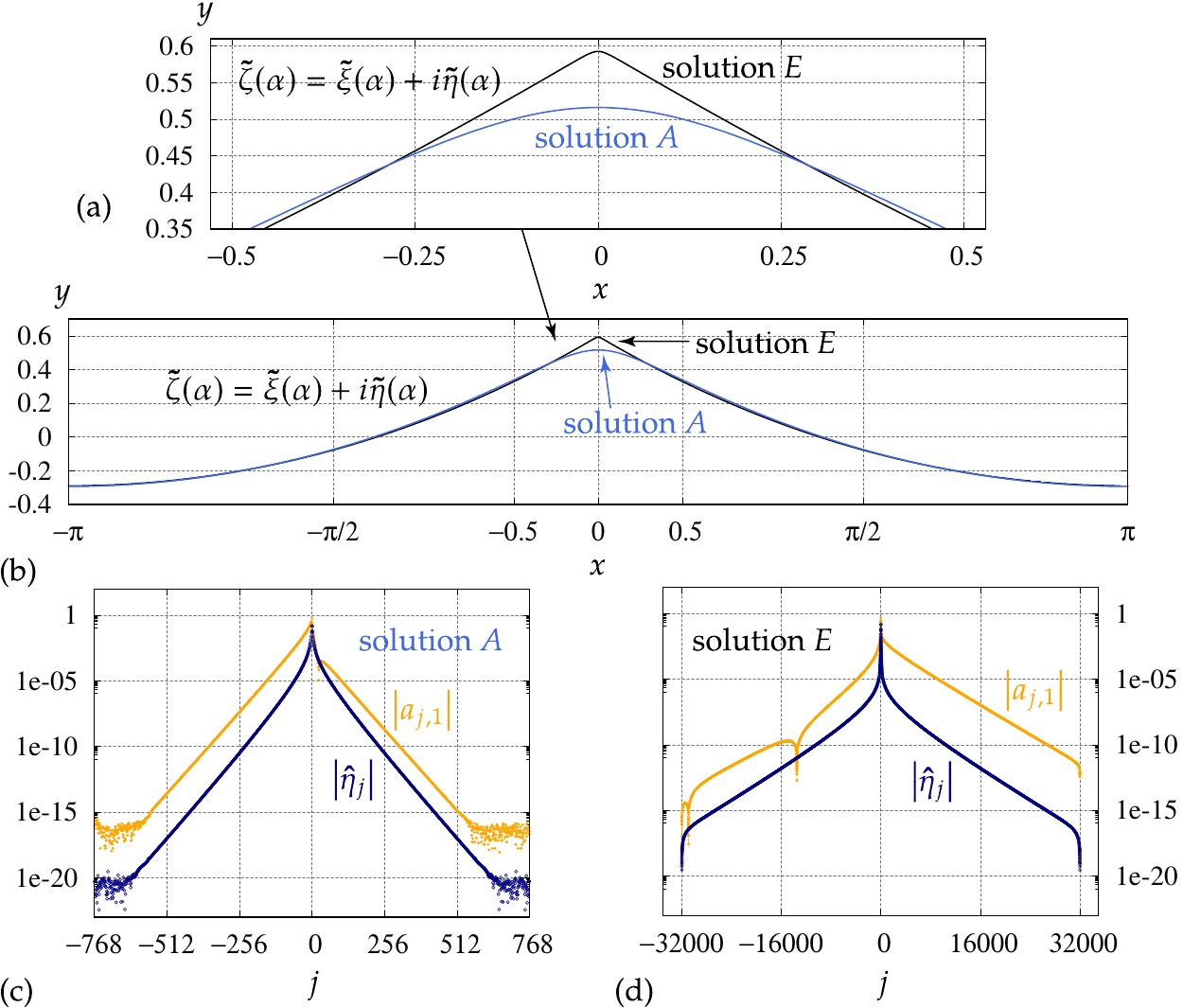}
  \end{center}
  \caption{\label{fig:grav:bif:soln} Wave profiles of solutions $A$
    and $E$, where quasi-periodic bifurcations with $k=1/\sqrt2$ were
    detected in Figure~\ref{fig:bif:svd}, along with their Fourier
    mode amplitudes $\big|\hat\eta_{j}\big|$ and the amplitudes of
    the components $a_{j,1}$ of the null vector $\dot q^\qua$ in
    \eqref{eq:q:qua:from:v}.}
\end{figure}

Panels (a) and (b) of Figure~\ref{fig:grav:bif:soln} show the wave
profiles $\td\zeta(\alpha)=\td\xi(\alpha)+i\td\eta(\alpha)$ of the
periodic traveling waves with wave heights $h_A$ and $h_E$ in
\eqref{eq:hA:hE}. Panel (b) shows a full period of the wave while
panel (a) shows a closer view at the crest tip. The aspect ratio of
both plots is~$1$ to demonstrate how close solution $E$ is to the
$120^\circ$ corner wave. The mean fluid height of both waves is 0,
with solution $E$ higher close to the crest tip and solution $A$
higher beyond the crossing points visible in panel (a).  Panels (c)
and (d) show the Fourier mode amplitudes $\big|\hat\eta_{j}\big|$
of these two solutions as well as the components $a_{j,1}$ of the null
vector $\dot q^\qua$ expressed in the $\varphi_{j,1}$ basis, as in
\eqref{eq:q:qua:from:v}. Solution $A$ is resolved to double-precision
accuracy using $N_1=N=768$. Indeed, the mode amplitudes in panel (c)
decay through at least 15 orders of magnitude with floating-point
errors evident in the highest-frequency modes. Solution $E$ is nearly
fully resolved with $N_1=N=32000$, but the mode amplitudes in panel
(d) have not decayed all the way to the point that roundoff effects
become visible. We did not attempt to increase $N_1$ and $N$ further
due to the computational expense.

We next search for solutions on the quasi-periodic branch
$\mfq^\qua(\theta)$ that intersects $\mfq^\per(s)$ at solution $A$. We
have been using $s=h$, the wave height, as the amplitude parameter on
the primary branch and will switch to $\theta=\hat\eta_{0,1}$ on the
secondary branch. Note that $\theta=0$ for all the solutions on the
primary branch. Another mode $\hat\eta_{j,1}$ with $j\ne0$ could have
been used instead, but $j=0$ turns out to maximize $|a_{j,1}|$ in the
expansion of $\dot q^\qua$ in \eqref{eq:q:qua:from:v}. The lowest-frequency
coefficients, normalized so that $\sum_{|j|\le N} |a_{j,1}|^2=1$, are
\begin{equation}\label{eq:aj1:A}
  \begin{array}{c||c|c|c|c|c|c|c|c|c}
    j & -4 & -3 & -2 & -1 & 0 & 1 & 2 & 3 & 4  \\ \hline
    a_{j,1} & 0.266 & 0.306 & 0.340 & -0.239  & -0.565  & -0.193  &
    -0.102 & -0.062 & -0.042
    \end{array}.
\end{equation}
Note that $a_{j,1}$ is not symmetric about $j=0$, which is also
evident in panel (c) of Figure~\ref{fig:grav:bif:soln}. As explained
at the end of Section~\ref{sec:num:alg}, we use
\begin{equation}\label{eq:q:guess}
  \mfq^\qua(\theta) \approx q^\guess = q^\bif + \veps \dot q^\qua, \qquad\quad
  \veps = \frac\theta{a_{0,1}}, \qquad \theta=\hat\eta_{0,1},
\end{equation}
as an initial guess for the first point on the bifurcation path, where
$q^\bif=(\eta_A,\tau_A,b_A)$ with $\tau_A=0$ and $b_A=c_A^2$ from
\eqref{eq:cA:cE}, and $\dot q^\qua=(\dot\eta^\qua,0,0)$. The leading
2D Fourier modes of $\eta^\guess=\eta_A+\veps\dot\eta^\qua$ are given
by
\begin{equation}\label{eq:eta:hat:guess}
  \Big( \hat\eta^\guess_{j_1,j_2} \Big) = 
  \begin{array}{r||c|c|c|c|c|c}
    j_1   & 0 & 1 & 2 & 3 & 4 & \dots \\ \hline
    j_2\ge2 & 0 & 0 & 0 & 0 & 0 & \dots \\
    \raisebox{-1pt}{$j_2=1$} & \veps a_{0,1}  & \veps a_{1,1}  & \veps a_{2,1}  &
    \veps a_{3,1}  & \veps a_{4,1} & \dots \\[3pt]
    \raisebox{1pt}{$j_2=0$} & -0.0631 & 0.1485  & 0.0496 & 0.0253 & 0.0155 & \dots \\[-1pt]
    j_2=-1 & \veps a_{0,1}  & \veps a_{-1,1}  & \veps a_{-2,1}  & \veps a_{-3,1}  &
    \veps a_{-4,1} & \dots \\
    j_2\le-2 & 0 & 0 & 0 & 0 & 0 & \dots
  \end{array}
\end{equation}
where we make use of \eqref{eq:assump:eta} to obtain
$\hat\eta^\guess_{j,-1}=\hat\eta^\guess_{-j,1}=\veps a_{-j,1}$ for
$j\ge0$. We use the `r2c' and `c2r' routines of FFTW, which take
advantage of \eqref{eq:assump:eta} to avoid storing Fourier modes with
$j_1<0$.

Since the kernel of $D_q\mc R\big[q^\bif\big]$ in
\eqref{eq:span:dot:q} is two-dimensional, it might have been necessary
to include a term $\veps_2\dot q^\per$ in \eqref{eq:q:guess} with
$\veps_2$ depending linearly on $\theta$ and $\veps$.  But solutions
$\mfq^\qua(\theta)=\big(\eta^\qua(\theta),0, b^\qua(\theta)\big)$ on
this path turn out to have the symmetry property
\begin{equation}\label{eq:symmetry:alpha2}
  \eta^\qua(-\theta)(\alpha_1,\alpha_2) =
  \eta^\qua(\theta)(\alpha_1,\alpha_2+\pi), \qquad
  b^\qua(-\theta) = b^\qua(\theta).
\end{equation}
Since all the nonzero Fourier modes
$\dot{\hat\eta}^\qua_{j_1,j_2}$ have $j_2\in\{1,-1\}$, we see that
\begin{equation}
  (-\veps)\dot \eta^\qua(\alpha_1,\alpha_2) =
  \veps \dot \eta^\qua(\alpha_1,\alpha_2+\pi),
\end{equation}
so changing the sign of the perturbation in \eqref{eq:q:guess} is
equivalent to shifting by $\pi$ in the $\alpha_2$ direction. Including
$\veps_2\dot q^\per$ in the formula for $q^\guess$ would break this
symmetry if $\veps_2$ were a non-zero multiple of $\veps$.

In Figure~\ref{fig:grav:bif:branch}, we plot 3 bifurcation curves
showing different aspects of how the secondary branch, plotted in
black, splits from the primary branch, plotted in blue. The black
markers between the bifurcation point $A$ and the point labeled $B$
correspond to solutions computed by minimizing the objective function
\eqref{eq:min:F} holding $\theta=\hat\eta_{0,1}$ at fixed values. The
markers on the other side, between $A$ and $B'$, were obtained from
these solutions using the symmetry \eqref{eq:symmetry:alpha2} rather
than carrying out the minimization again. The decision to explore
negative values of $\theta$ first was arbitrary. Exactly the same
results would have been obtained in the other direction (aside from
  swapping the labels $B$ and $B'$). We define the wave height $h$
plotted in panels (a) and (b) as $\eta(0,0)-\eta(\pi,0)$. This choice
will be discussed and justified in Remark~\ref{rmk:hQP} below.

\begin{figure}
  \includegraphics[width=\textwidth]{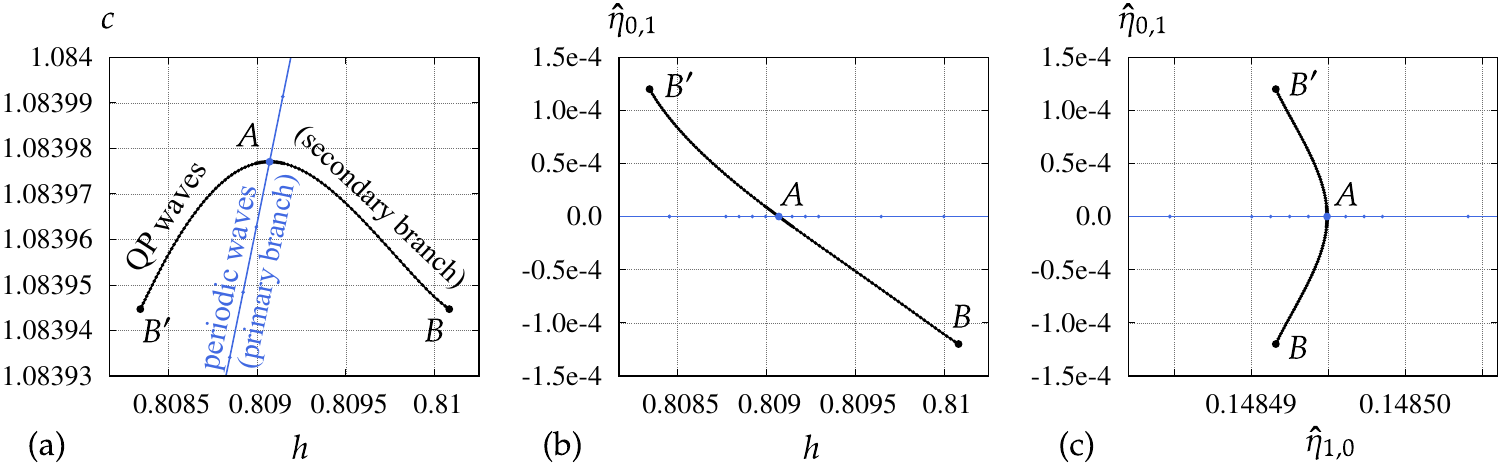}
  \caption{\label{fig:grav:bif:branch} The secondary branch of
    quasi-periodic traveling waves (black) bifurcates from the primary
    branch of periodic waves (blue) at solution $A$. The solutions
    between $A$ and $B'$ are related to those between $A$ and $B$ by
    the symmetry \eqref{eq:symmetry:alpha2}. Panel (a) shows a highly
    magnified view of the plots in the left panel of
    Figure~\ref{fig:grav:main}.}
\end{figure}

For the first point on the secondary branch, we set
$\veps=1.768\times10^{-7}$, which corresponds to
$\theta=\hat\eta_{0,1}=-1.0\times10^{-7}$. We then minimize the
objective function \eqref{eq:min:F} holding $\hat\eta_{0,1}$ fixed.
The initial guess for the second point can be computed in the same
way, by doubling $\veps$, or from linear extrapolation from the zeroth
point ($q^\bif$) and the first point. Both methods work well.  The
starting guess for each additional solution on the path is computed
via linear extrapolation from the previous two solutions. As we
progress along the path, we monitor the amplitudes of the 2D Fourier
modes and increase $N_2$ as necessary to maintain spectral
accuracy. The first several solutions are still very close to the
periodic traveling wave, so $N_2=3$ and $M_2=8$ are sufficient.  The
traveling wave requires $N_1=550$, $M_1=1200$ to achieve spectral
accuracy, so we held these fixed in the quasi-periodic
calculation. Presumably $N_1$ and $M_1$ will need to be increased if
one proceeds far enough along the path, but we always ran out of
resolution in $N_2$ and $M_2$ first.  The mesh parameters used for
different values of $\theta=\hat\eta_{0,1}$ in our calculation are as
follows:
\begin{equation*}
  \begin{array}{r|cccccccc|c}
    N_2 & 3 & 6 & 6 & 8 & 12 & 18 & 20 & 48 & \\
    M_2 & 8 & 16 & 16 & 24 & 36 & 48 & 48 & 108 & \\ \hline
    \theta_0 & -0.1 & -1 & -10 & -20 & -40 & -70 & -100 & -120 &  \times 10^{-6} \\
    \theta_1 & -1 & -10 & -20 & -40 & -70 & -100 & -120 & -120 & \times 10^{-6} \\
    \Delta\theta & -0.1 & -1 & -2.5 & -2.5 & -2.5 & -2.5 & -2.5 & \text{---} &  \times 10^{-6}
  \end{array}
\end{equation*}
Each column corresponds to a batch of solutions computed by numerical
continuation by taking equal steps of size $\Delta\theta$ from
$\theta=\theta_0$ to $\theta=\theta_1$. The factors of $10^{-6}$ apply
to all the entries in the bottom 3 rows. For each new column,
$\theta_0$ is the same as $\theta_1$ from the previous column, which
means we recompute the solution with larger values of $N_2$ and $M_2$
using the previous solution as a starting guess.  In the last column
(with $N_2=48$), we simply refine the solution at
$\theta=-1.2\times10^{-4}$ without progressing further along the path.

In Figure~\ref{fig:fourier:big}, we plot the leading Fourier mode
amplitudes $\big|\hat\eta_{j_1,j_2}\big|$ for solution $B$. The same
data is plotted from three viewpoints in panels (a), (b) and (c).  The
grid is $M_1\times M_2=1200\times108$, and the solver searches for
modes $\hat\eta_{j_1,j_2}$ with $0\le j_1\le N_1=550$ and $|j_2|\le
N_2=48$. The remaining modes with $551\le j_1\le600$ and
$49\le|j_2|\le 54$ are set to zero in the nonlinear least squares
solver but retained in the FFT calculations to reduce aliasing errors.
The solver does not control $\hat\eta_{0,1}$, which is held fixed at
$-1.2\times10^{-4}$. The modes with $j_1<0$ are assumed to satisfy
\eqref{eq:assump:eta}. In the figure, we truncate $j_1$ at $50$, but
the modes are non-zero out to $j_1=550$, and continue to decay along
slices of constant $j_2$ in a similar way to the modes of the periodic
wave $A$, shown in Figure~\ref{fig:grav:bif:soln}. The objective
function $F$ in \eqref{eq:min:F} has been minimized to
$1.9\times10^{-25}$ for this solution. We did not try to compute more
than one solution at this resolution as each calculation is quite
expensive. The Jacobian $\mc J$ in \eqref{eq:Jij:levmar} has 129600
rows and 53398 columns and has to be factored at least once via the
SVD in our implementation of the Levenberg-Marquardt method. We used
ScaLapack for this purpose, and ran the code on 18 nodes (432 cores)
of the Savio cluster at UC Berkeley. The running time was 70 minutes.
We also did not attempt to follow the bifurcation branches at
solution $E$ as the periodic problem already involves 32000 active
Fourier modes.

\begin{figure}
  \includegraphics[width=\textwidth,trim=12pt 4pt 17pt 0pt,clip]{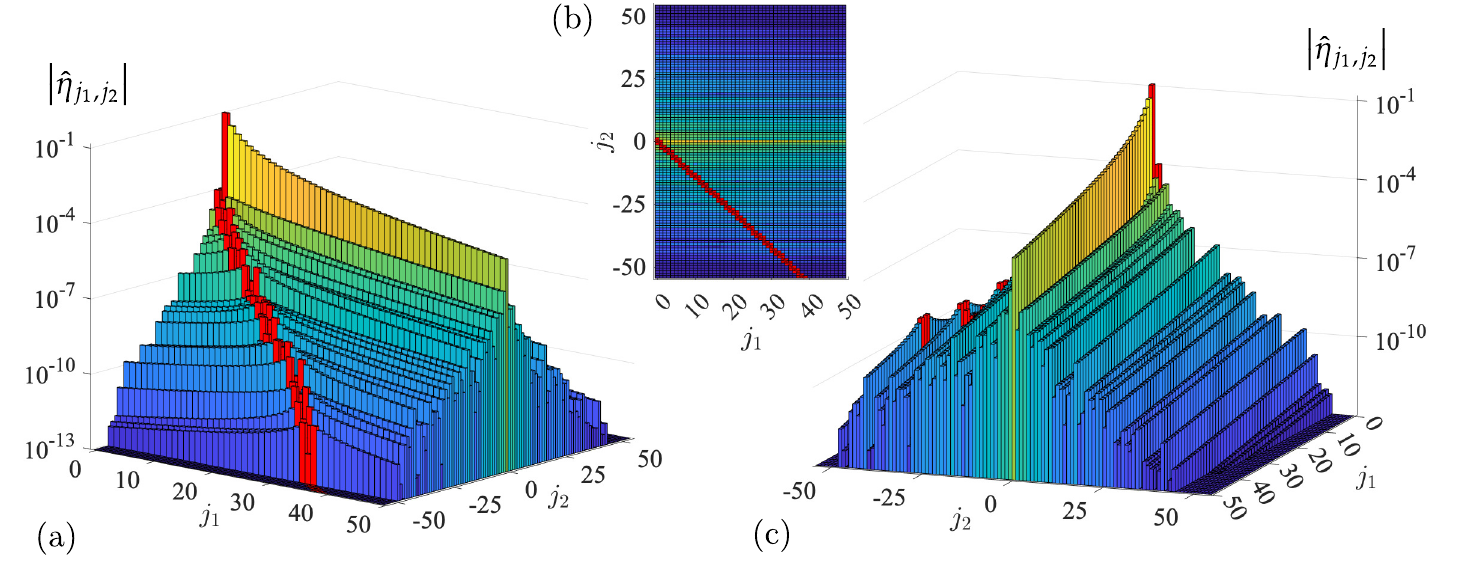}
  \caption{\label{fig:fourier:big} Fourier mode amplitudes
    $\big|\hat\eta_{j_1,j_2}\big|$ for solution $B$ of
    Figure~\ref{fig:grav:bif:branch}. The three plots show the same
    data from different viewpoints. The modes were truncated at
    $j_1=50$ in the plots, but extend to $j_1=550$ in the
    calculation. The resonant line $j_1+kj_2=0$ is plotted in
    dark red in panel (b). The closest lattice points to
    this line are plotted in red in all three panels. }
\end{figure}

Panel (b) of Figure~\ref{fig:fourier:big} shows the Fourier amplitude
data in a two-dimensional view using the same colormap as panels (a)
and (c). Also plotted, in dark red, is the resonant line
$j_1+j_2k=0$. Since $k=1/\sqrt2$ is irrational, the only lattice point
lying precisely on this line is $(j_1,j_2)=(0,0)$. As discussed in
\cite{quasi:trav}, lattice points $(j_1,j_2)$ close to this line
correspond to plane waves $e^{i(j_1+j_2k)\alpha}$ of long
wavelength. The residual function $\mc R$ in \eqref{eq:Rq} is
unchanged if $\eta$ is perturbed by a constant function, and
changes little for long wavelength perturbations. Thus, the above
plane waves are approximate
null vectors of the Jacobian \eqref{eq:Jij:levmar}. Such ``small
divisors'' have been overcome in similar problems
\cite{berti2016quasi,baldi2018time,berti2020traveling,feola2020trav}
building on Nash-Moser theory \cite{plotnikov01,iooss05}, though so
far always at small-amplitude, near the zero solution. Adapting these
rigorous techniques to the spatially quasi-periodic setting is a
challenging open problem, especially in the present case of
bifurcations from finite-amplitude periodic waves.

We avoid running into small divisors in our search for bifurcations by
restricting $\ker D_q\mc R[q^\bif]$ to $\mc D_\sigma^\e{1}$ in
\eqref{eq:J:qua}.  But in fully nonlinear calculations such as
solution $B$ of Figures~\ref{fig:grav:bif:branch}
and~\ref{fig:fourier:big}, which has many active modes in both the
$j_1$ and $j_2$ directions, one can see some effects of the small
divisors on the Fourier modes corresponding to lattice points near the
resonant line, which we plotted in red in all three panels of
Figure~\ref{fig:fourier:big}. In panel (a) we see that the modes
$\hat\eta_{j_1,j_2}$ with $j_2<0$ grow in amplitude as $j_1$ increases
to the resonant line (holding $j_2$ fixed), and then decay
afterward. By contrast, we see in panel (c) that on the ``back side''
(with $j_2>0$), the modes appear to generally decay monotonically
right away as $j_1$ increases.  If we instead decrease $j_1$ through
negative values with $j_2$ held fixed, the modes with $j_2>0$ are the
ones that increase in magnitude until $j_1$ crosses the resonant line
while the modes with $j_2<0$ will generally decay right away as
$|j_1|$ increases.  It is not necessary to plot this as it follows
from the data shown in Figure~\ref{fig:fourier:big} and the symmetry
\eqref{eq:assump:eta}, namely
$\hat\eta_{-j_1,-j_2}=\hat\eta_{j_1,j_2}$.

\begin{figure}
  \includegraphics[width=\textwidth]{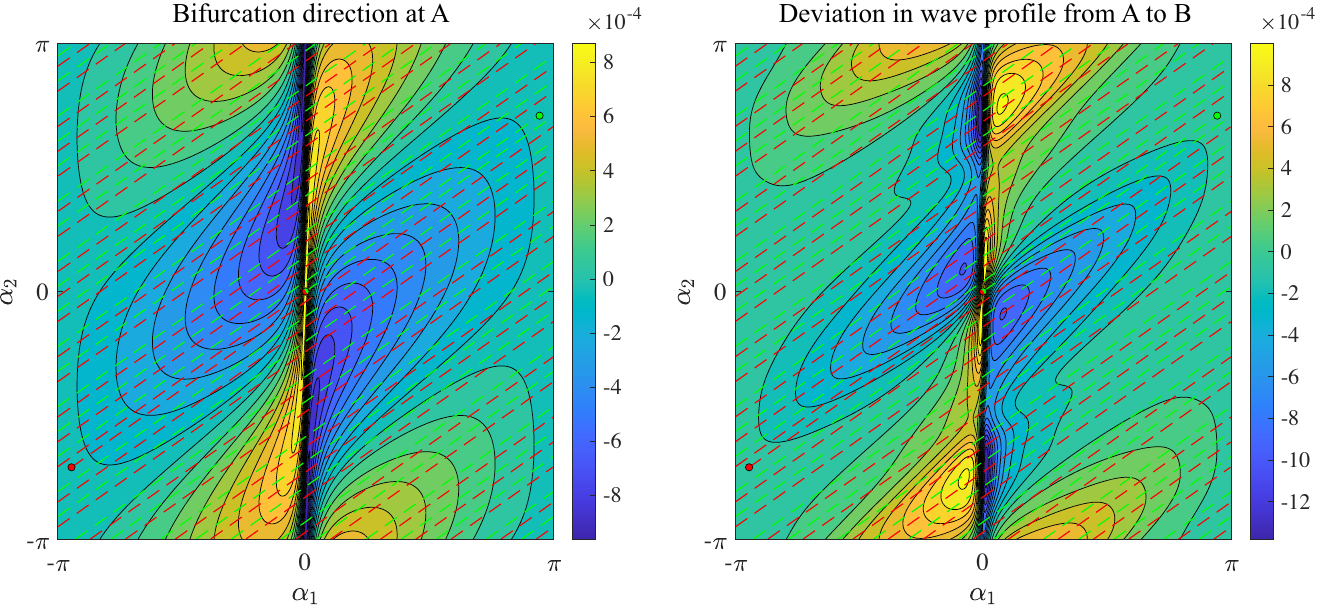}
  \caption{\label{fig:torus:big} Contour plots of the bifurcation
    direction $C\dot\eta^\qua(\alpha_1,\alpha_2)$ and the deviation
    $\eta_B^\text{dev}(\alpha_1,\alpha_2)$ from solution $A$ to
    solution $B$. The dashed lines show how the characteristic line
    $(\alpha,k\alpha)$ wraps around the periodic torus for
    $\alpha\ge0$ (green) and $\alpha\le0$ (red).  Evaluation along
    this characteristic line yields the one-dimensional quasi-periodic
    functions shown in panels (c) and (d) of
    Figure~\ref{fig:quasi:wave}.  }
\end{figure}

Figure~\ref{fig:torus:big} shows torus views of the bifurcation
direction at solution $A$, namely $\dot\eta^\qua(\alpha_1,\alpha_2)$
from \eqref{eq:q:qua:from:v}, and the deviation in the wave profile
from solution $A$ to solution $B$,
\begin{equation}
  \eta_B^\text{dev}(\alpha_1,\alpha_2) =
  \eta_B(\alpha_1,\alpha_2)-\eta_A(\alpha_1,\alpha_2),
\end{equation}
where $\eta_A(\alpha_1,\alpha_2)$ is independent of $\alpha_2$.  In
the left panel, we actually plot $C\dot\eta^\qua$, where the
normalization constant $C$ is chosen to minimize the distance from
$C\dot\eta^\qua$ to $\eta_B^\text{dev}$ in $L^2$ on the torus, which
turns out to be
\begin{equation}\label{eq:C:dir}
  C=\frac{\la\dot \eta^\qua,\eta_B^\text{dev}\ra_{L^2(\mbb
    T^2)}}{\la\dot \eta^\qua,\dot \eta^\qua\ra_{L^2(\mbb T^2)}} =
  2.121\times10^{-4}.
\end{equation}
While the bifurcation direction predicts the large-scale features of
$\eta_B^\text{dev}$, there are clear differences in the two contour
plots. In particular, the symmetry in the left panel in which
$\dot\eta^\qua(\alpha_1,\alpha_2)$ changes sign on shifting $\alpha_2$
by $\pi$, which occurs due to $\dot{\hat\eta}_{j_1,j_2}=0$ for
$j_2\not\in\{-1,1\}$, is broken in the right panel. Indeed,
$\eta_B^\text{dev}$ has a richer Fourier structure consisting of the
modes plotted in Figure~\ref{fig:fourier:big} minus the modes of
solution $A$, $\hat\eta_{j_1,j_2}=\hat\eta_{j}\delta_{j_2,0}$, where
$\hat\eta_j$ is plotted in Figure~\ref{fig:grav:bif:soln}. Replacing
$\alpha_2$ by $\alpha_2+\pi$ in $\eta_B^\text{dev}(\alpha_1,\alpha_2)$
yields $\eta_{B'}^\text{dev}(\alpha_1,\alpha_2)$, where $B'$ is the
solution at the other end of the secondary bifurcation branch in
Figure~\ref{fig:grav:bif:branch}.

\begin{remark}\label{rmk:hQP}
  For all the solutions on the path from $A$ to $B$ in
  Figure~\ref{fig:grav:bif:branch}, which correspond to negative
  values of $\theta=\hat\eta_{0,1}$, we find (by studying the
    numerical results) that the maximum and minimum values of
  $\eta(\theta)(\alpha_1,\alpha_2)$ occur at
  \begin{equation}
    \operatorname{argmax}\eta(\theta)(\alpha_1,\alpha_2) = (0,0), \qquad
    \operatorname{argmin}\eta(\theta)(\alpha_1,\alpha_2) = (\pi,0), \qquad
    (\theta<0).
  \end{equation}
  Thus, we define the wave height $h$ on this quasi-periodic
  branch as
  \begin{equation}\label{eq:h:QP:def}
    h(\theta) = \eta(\theta)(0,0) - \eta(\theta)(\pi,0).
  \end{equation}
  However, on the path from $A$ to $B'$, where $\theta>0$, the maximum
  and minimum values occur at $(0,\pi)$ and $(\pi,\pi)$, respectively,
  due to \eqref{eq:symmetry:alpha2}. So if we define
  $h_1(\theta)=\eta(\theta)(0,\pi)-\eta(\theta)(\pi,\pi)$, then the
  physical wave height is $h_1(\theta)$ for $\theta>0$ and $h(\theta)$
  for $\theta<0$. Both $h_1(\theta)$ and $h(\theta)$ can be read off
  of panels (a) and (b) of Figure~\ref{fig:grav:bif:branch}, since
  $h_1(\theta)=h(-\theta)$. So it is preferable to plot $h(\theta)$
  for positive and negative values of $\theta$, as we have done,
  rather than plotting the physical wave height, which would introduce
  a slope discontinuity at the bifurcation point in panels (a) and (b)
  of Figure~\ref{fig:grav:bif:branch}. Moreover, it would discard the
  information about $h(\theta)$ with $\theta>0$, replacing it by
  $h_1(\theta)$, which is already known from $h(\theta)$ with
  $\theta<0$.  We will simply refer to $h(\theta)$ in
  \eqref{eq:h:QP:def} as ``the wave height.''
\end{remark}

Our next goal is to plot the 1D quasi-periodic functions obtained by
evaluating the torus functions of Figure~\ref{fig:torus:big} along the
dashed red and green lines, namely
\begin{equation}\label{eq:eta:dev:1d}
  C\dot{\td\eta}(\alpha) = C\dot\eta^\qua(\alpha,k\alpha), \qquad
  \td\eta_B^\text{dev}(\alpha) = \eta_B^\text{dev}(\alpha,k\alpha).
\end{equation}
We will plot them as functions of $\alpha$ rather than in the
parametric form used in Figure~\ref{fig:grav:bif:soln}. This allows
for a simpler correspondence with the torus functions of the
  conformal mapping formulation of \eqref{eq:govern} and avoids the
complication of solutions $A$ and $B$ having slightly different
parameterizations $\td\xi_A(\alpha)$ and $\td\xi_B(\alpha)$ in the
$x$-direction. One would have to transform to a graph-based
formulation of the problem to define analogues of $C\dot{\td\eta}$
and $\td\eta_B^\text{dev}$ in \eqref{eq:eta:dev:1d} that are functions
of $x$ rather than $\alpha$. Panel (a) of
Figure~\ref{fig:alphaXi:rat} shows that $\td\eta(\alpha)$
is more sharply peaked than the physical wave profile obtained by
plotting $\td\zeta(\alpha)=\td\xi(\alpha)+i\td\eta(\alpha)$
parametrically.

We wish to plot the functions in \eqref{eq:eta:dev:1d} over many
cycles of the underlying periodic wave without losing resolution due
to excessive horizontal compression of the plot. We do this by
plotting the results on a periodic domain with a period that differs
from $2\pi$, the period of the underlying Stokes wave. To select a
useful period for the plot, we consider best rational approximations
of $k$.  Panel (b) of Figure~\ref{fig:alphaXi:rat} shows the
fractional part
\begin{equation}\label{eq:frac:part}
  \{Qk\} = Qk-\lfloor Qk\rfloor, \qquad Q=1,2,3,\dots,50
\end{equation}
of the first 50 integer multiples of the second wave number $k$, which
is $k=1/\sqrt2$ in the present calculation. Here $\lfloor
\cdot\rfloor$ is the floor function. We also write
$\lfloor\cdot\rceil$ for the function that rounds its argument up or
down to the nearest integer. Given a positive integer $Q$, the closest
rational number of the form $P/Q$ to $k$ is found by setting
$P=\lfloor Qk\rceil$. With this choice of $P$, let
\begin{equation}
  e_1(Q) = \big|P - Qk\big| = \min(\{Qk\},1-\{Qk\}), \qquad
  e_2(Q) = \left|\frac PQ -k\right| = \frac{e_1(Q)}{Q}.
\end{equation}
The black markers in the right panel of Figure~\ref{fig:alphaXi:rat}
minimize $e_1(Q)$ over previously seen values, i.e., $e_1(Q)\le e_1(q)$ for
$1\le q\le Q$. They correspond to the rational approximations
\begin{equation}
  k \approx \frac11, \qquad
  k \approx \frac23, \qquad
  k \approx \frac57, \qquad
  k \approx \frac{12}{17}, \qquad
  k \approx \frac{29}{41},
\end{equation}
where $k=1/\sqrt2$. If we had minimized $e_2(Q)$ instead, as is
usually done in defining best rational approximations, $1/2$ and
$7/10$ would be added to the list.

\begin{figure}
  \includegraphics[width=.875\textwidth]{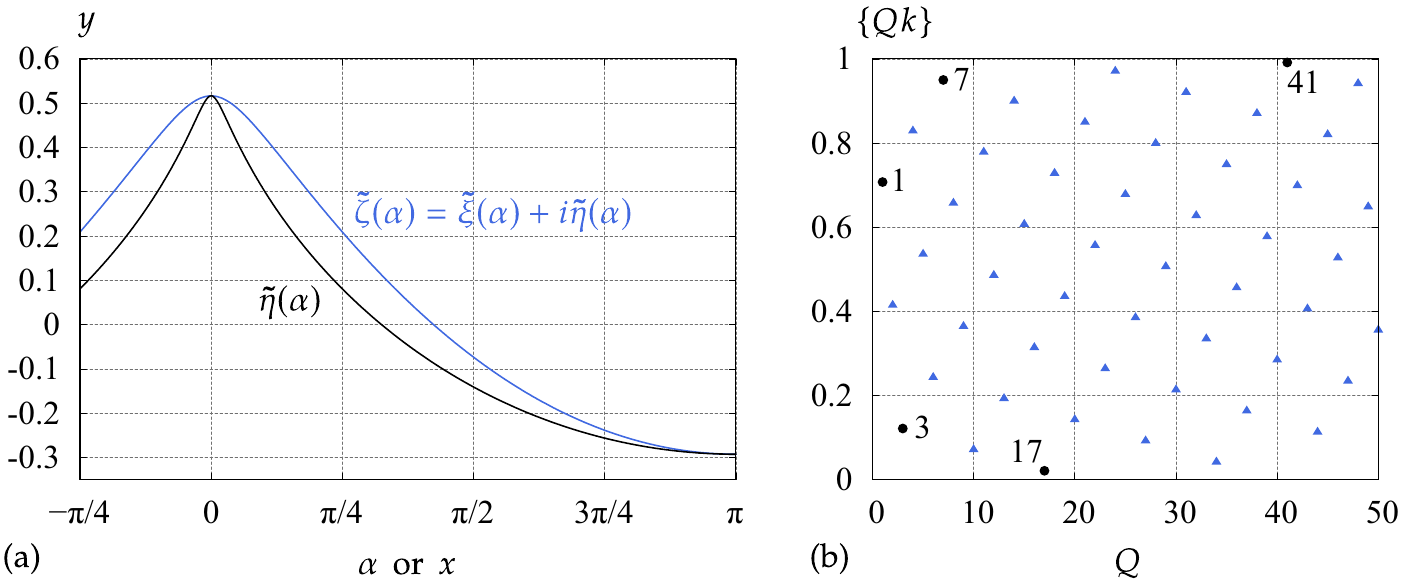}
  \caption{\label{fig:alphaXi:rat} Comparison of $\td\eta(\alpha)$ and
    the parametric plot
    $\td\zeta(\alpha)=\td\xi(\alpha)+i\td\eta(\alpha)$ for solution
    $A$, and fractional parts of the first 50 integer multiples of
    $k=1/\sqrt2$.}
\end{figure}

\begin{remark}\label{rmk:eQ}
  In the present problem, $e_1(Q)$ gives the vertical shift (divided by
    $2\pi$) of the characteristic line passing through $(0,0)$ in the
  $(1,k)$ direction after wrapping around the torus $Q$ times in the
  $\alpha_1$-direction and approximately $P$ times in the $\alpha_2$
  direction. So a small value of $e_1(Q)$ means $C\dot{\td\eta}(\alpha)$
  and $\td\eta_B^\text{dev}(\alpha)$ will nearly recur on shifting
  $\alpha$ by $2\pi Q$. We will focus on the $Q=17$, $P=12$ case.
\end{remark}

Panel (a) of Figure~\ref{fig:quasi:wave} shows $17\frac{17}{36}$
cycles of solution $A$ to the left and right of the origin. In panel
(b), we wrap this solution around a torus of period $17\pi/9$.
This is done by plotting $\td\eta_A(\alpha)$ parametrically versus
\begin{equation}\label{eq:alpha:bar}
  \bar\alpha = \opn{rem}\left( \alpha + \frac{17}{18}\pi\,,\,
    \frac{17}9\pi \right) - \frac{17}{18}\pi,
  \qquad \opn{rem}(a,b) = \Big(a/b-\lfloor a/b\rfloor\Big)b.
\end{equation}
The peak at the origin in panel (a) remains at the origin in panel
(b), but successive peaks of the $2\pi$-periodic Stokes wave are
shifted by $\Delta\alpha=2\pi-\frac{17}9\pi=\frac19\pi$
in panel (b) due to the mismatch of the period of the wave and that of
the plot domain. The labels above the peaks indicate how far one must
advance to the right in panel (a) to obtain the corresponding peak in
panel (b). For example, progressing through 8 periods of the Stokes
wave (to $\alpha=16\pi$) yields the right-most peak in panel (b). The
next peak wraps around the plot domain, so $\alpha=18\pi$ in panel (a)
gives the left-most peak of panel (b). The 17th peak in panel (a) (at
  $\alpha=34\pi$) sweeps out the same curve in panel (b) as the 0th
peak in panel (a). The labels below the peaks in panel (b) work the
same as those above the peaks, but moving left instead of right.

\begin{figure}[p]
  \begin{center}
  \includegraphics[width=.86\textwidth]{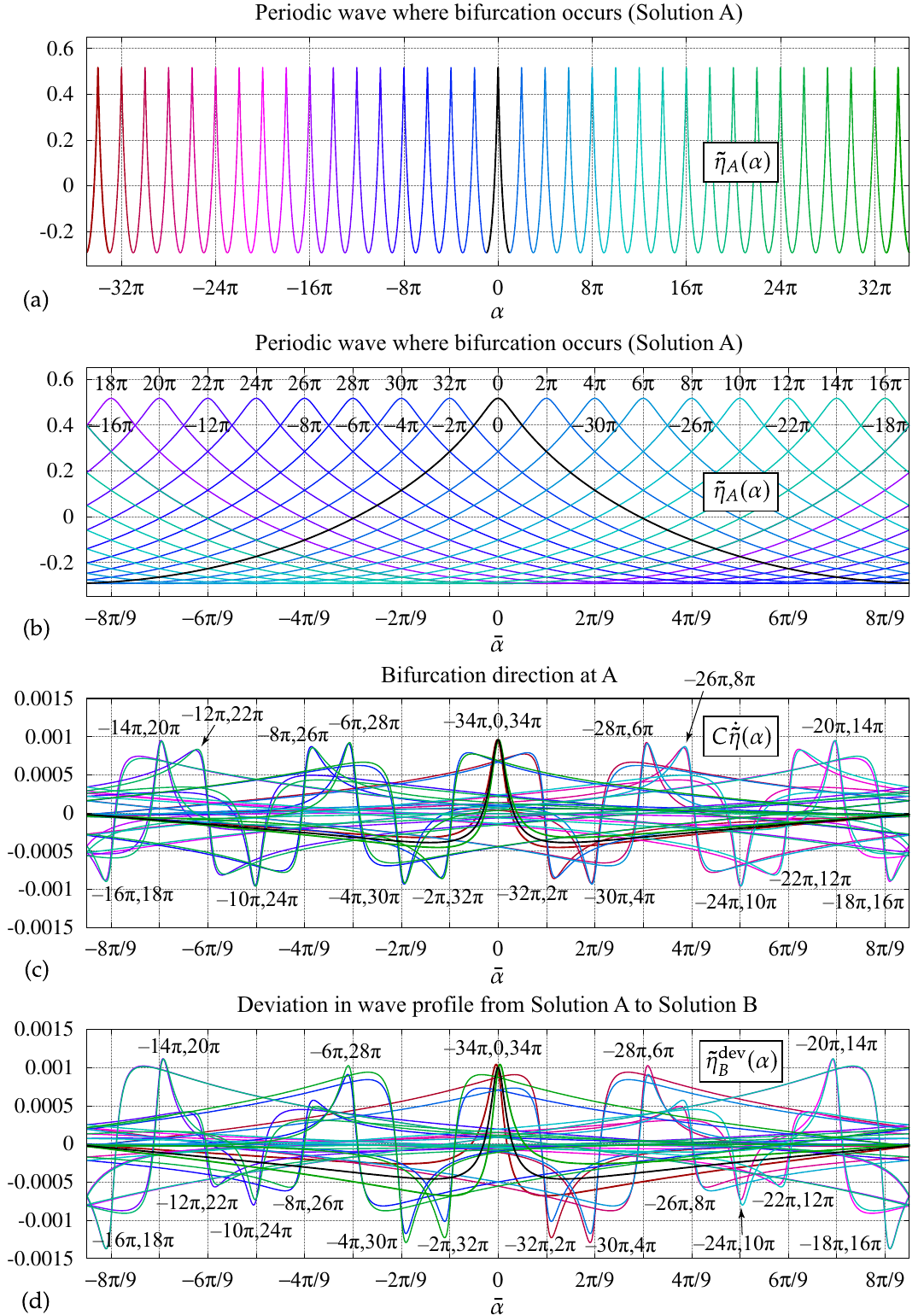}
  \end{center}
  \caption{\label{fig:quasi:wave} Plots of $\td\eta_A(\alpha)$ and the
    perturbations $C\dot{\td\eta}(\alpha)$ and
    $\td\eta_B^\text{dev}(\alpha)$ predicted by linearization about
    solution $A$ and actually occurring by following the
    secondary bifurcation branch from $A$ to $B$. In panels (b)--(d),
    we plot the functions on a domain of period $(17/9)\pi$ to stagger
    both the peaks of solution $A$ and the dominant features of the
    perturbations. The vertical dashed lines in panels (b)--(d) are
    centered on the peaks of the Stokes waves. The perturbations in
    panels (c) and (d) are color coded to match the corresponding
    peaks in panel (a) that have been perturbed.}
\end{figure}

Panels (c) and (d) of Figure~\ref{fig:quasi:wave} show the extracted
functions \eqref{eq:eta:dev:1d} for $|\alpha|\le (2\pi)\big(
  17\frac{17}{36}\big)$, wrapped around the torus of panel (b) via
\eqref{eq:alpha:bar}. We do this to better view the quasi-periodic
behavior of the traveling wave. In the same way that panel (b) shows
more detail than panel (a) about the shape of the peaks of the Stokes
wave, panels (c) and (d) show more detail than would be visible if
they were compressed horizontally to match the style of panel (a).  By
offsetting the peaks of successive cycles of the Stokes wave, the
dominant features of the linear perturbation
$C\dot{\tilde\eta}(\alpha)$ and the nonlinear perturbation
$\tilde\eta_B^\text{dev}(\alpha)$ are similarly offset.  Indeed, we
find that the perturbations change most rapidly near the peaks of the
Stokes wave, and these rapid changes are what we identify as their
dominant features. We label these features in panels (c) and (d) with
the value of $\alpha$ of the nearest peak of the Stokes wave. The
labels come in pairs that differ by $34\pi$. This is because two
points $\alpha$ separated by $34\pi$ are mapped to the same point
$\bar\alpha$ and will cross a peak of the Stokes wave together.  So
their dominant features will occur near each other when plotted versus
$\bar\alpha$. The peak at the origin has 3 labels since the 0th, 17th
and $-17$th peaks of panel (a) are mapped to the 0th peak in panel
(b).  The final $\frac{17}{36}$ of a cycle is included so that these
curves complete their cycles through the plot window rather than
stopping abruptly at $\bar\alpha=0$.

In addition to aligning their dominant features, mapping points
$\alpha$ that differ by 17 cycles to the same point $\bar\alpha$
causes these curves to be close to each other in panels (c) and
(d). By Remark~\ref{rmk:eQ}, advancing $\alpha$ through 17 cycles will
cause the torus function to be evaluated at the same value of
$\alpha_1$ and at a nearby value of $\alpha_2$, shifted up or down by
$2\pi e_1(17)=0.131$. In Figure~\ref{fig:torus:big}, the
  dashed green and red lines correspond to $\alpha\ge0$ and
  $\alpha\le0$, respectively.  Over the range $0\le\alpha\le 34\pi$,
each dashed green line is offset vertically by $0.131$ from a nearby
dashed red line. This vertical offset is equivalent to advancing
$\alpha$ by $34\pi$, i.e., by displacing $(\alpha_1,\alpha_2)$ by
$(34\pi,34\pi k)$ and mapping back to $\mbb T^2$ by periodicity,
starting at a point on the red line. The final fractional cycles of
length $(17/36)(2\pi)$ in each direction are offset vertically
  by $0.131$ from lines of the same color and terminate at the
circular green and red markers in Figure~\ref{fig:torus:big}. As noted
above, these fractional cycles are included to extend the plots in
panels (c) and (d) of Figure~\ref{fig:quasi:wave} to the end
of the plot window so they don't end abruptly at $\bar\alpha=0$.

Studying panels (c) and (d), we observe that the perturbation at the
origin, plotted in black, sharpens the Stokes wave symmetrically,
where we view following the secondary bifurcation branch as a
perturbation of solution $A$.  The other wave crests are perturbed
asymmetrically and can be sharpened, flattened or shifted right or
left, with no two perturbed in exactly the same way. Even though we
could not follow this branch very far at the scale of the bifurcation
diagram shown in Figure~\ref{fig:grav:main} using wave height and wave
speed as parameters, solution $B$ has many small scale features not
present in solution $A$. Moreover, nonlinear effects cause the
deviation of $B$ from $A$ in panel (d) to differ visibly from that predicted
by linearization, shown in panel (c).  We also see that while the
perturbations do stay reasonably close to each other when $\alpha$
increases by 17 cycles, differences are clearly visible and there
would not be much agreement after another 17 or 34 cycles.  Closer
agreement could be achieved by switching to $Q=41$ as $2\pi
e_1(41)=0.054$, but this would increase the number of peaks in panel (b)
from 17 to 41, making it more difficult to distinguish the features
that arise in panels (c) and (d).

From these results, it is natural to conjecture that this
path of quasi-periodic solutions will continue until the wave profile
develops a singularity, presumably with a $120^\circ$ corner in
physical space at the origin \cite{stokes1880theory,lhf:78}. In the
periodic case, this limiting corner wave has been proved to exist by
Amick, Fraenkel and Toland \cite{amick:82} and studied numerically by
Gandzha and Lukomsky \cite{gandzha:07}. Chen and Saffman
\cite{chen1980numerical} found that wavelength-doubling and wavelength-tripling
bifurcations also lead to families of solutions that appear (in the
  numerical simulations) to terminate with the tallest crest
sharpening to $120^\circ$ while the other crests remain rounded. In
the case of genuinely quasi-periodic traveling waves studied here, the
analogous result would be for the torus function representing the
traveling solution to develop a singularity at $(0,0)$ when the wave
height reaches a critical value.
  
To investigate this limit, it is preferable to transform our torus
functions from a conformal mapping formulation to a graph-based
formulation. Recall from \eqref{eq:govern} that $\xi$, which
represents the quasi-periodic part of the horizontal position of the
wave in the sense of \eqref{eq:xi:def}, is related to the wave profile
$\eta$ via $\xi=H[\eta]$. Here $\xi$ and $\eta$ are torus functions.
In \cite{quasi:ivp}, it is shown that if
$\pa_\alpha\big\vert_{\alpha=0}\xi(\alpha_1+\alpha,\alpha_2+k\alpha)>-1$
for $(\alpha_1,\alpha_2)\in\mbb T^2$, then the extracted wave profile
\begin{equation}
  \td\zeta(\alpha)= \overbrace{\alpha+\xi(\alpha,k\alpha)}^{\td\xi(\alpha)} + i
  \overbrace{\eta(\alpha,k\alpha)}^{\td\eta(\alpha)}
\end{equation}
is a graph and there is a torus function $\eta^\phys(x_1,x_2)$ such
that
\begin{equation}\label{eq:td:eta:phys}
  \td\eta(\alpha)=\td\eta^\phys(\td\xi(\alpha)), \qquad
  \td\eta^\phys(x) = \eta^\phys(x,k x), \qquad (\alpha,x\in\mbb R).
\end{equation}
This torus function can be computed via
\begin{equation}\label{eq:eta:phys:from:eta}
  \eta^\phys(\vec x) = \eta\big(\vec x + \vec k \mc A(\vec x)\big), \qquad
  \big(\vec x\in\mbb T^2\big),
\end{equation}
where $\vec k=(1,k)$ and $\mc{A}(\vec x)$ is the unique solution
\cite{quasi:ivp} of
\begin{equation}\label{eq:mcA:def}
  \mc A(\vec x) + \xi\big( \vec x + \vec k \mc A(\vec x) \big) = 0, \qquad
  \big(\vec x\in\mbb T^2\big).
\end{equation}
Note that the wave number ratio, $k$, which is set to $1/\sqrt2$ in
the examples presented here, is the same in physical space as in
conformal space. It is shown in \cite{quasi:ivp} that the inverse of
the mapping $\vec x = \vec\alpha + \vec k\xi(\vec \alpha)$ on $\mbb
T^2$ is $\vec\alpha=\vec x+\vec k\mc A(\vec x)$.

\begin{figure}
  \includegraphics[width=\textwidth]{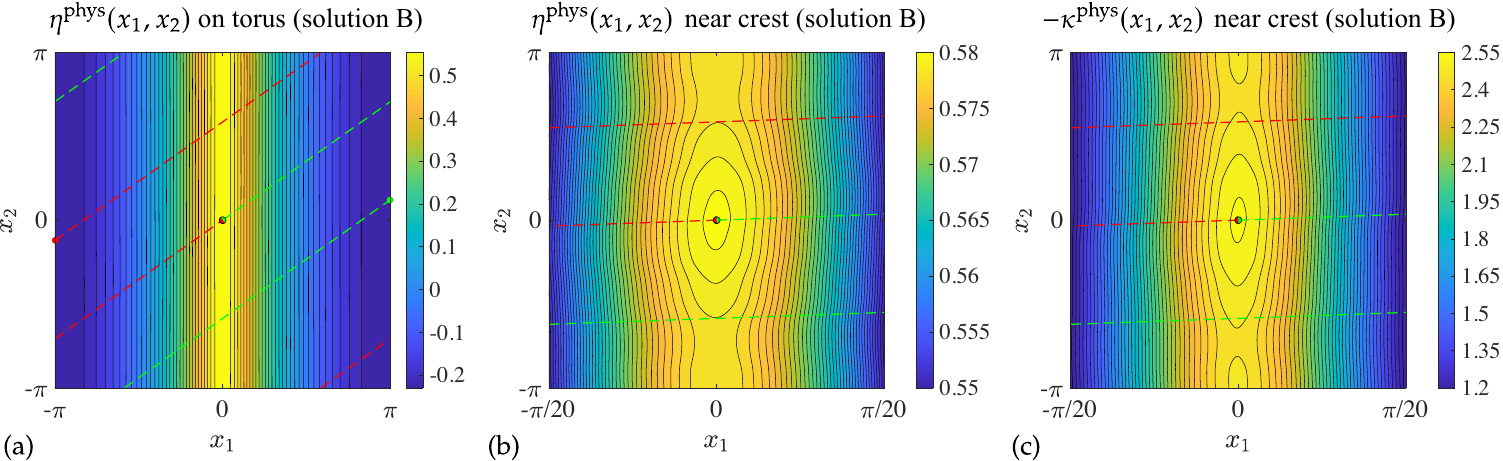}
  \caption{\label{fig:torus:kappa} Contour plots of the torus
    functions of the wave profile and the negative of the curvature
    for solution $B$ after transforming to a graph-based
    representation of the solution. The dashed lines in panels (b) and
    (c) are the same lines shown in panel (a), but we have zoomed in
    on the wave crest by restricting to $-\frac\pi{20}\le
    x_1\le\frac\pi{20}$.}
\end{figure}

Panel (a) of Figure~\ref{fig:torus:kappa} shows a contour plot of
$\eta^\phys(x_1,x_2)$ for solution $B$ of
Figures~\ref{fig:grav:bif:branch}, \ref{fig:fourier:big},
\ref{fig:torus:big} and \ref{fig:quasi:wave}. After computing the
torus function $\eta(\alpha_1,\alpha_2)$ for solution $B$ in conformal
space, we used Brent's method \cite{brent:73} to rapidly solve
\eqref{eq:mcA:def} for $\mc A(\vec x)$ on a $257\times257$ uniform
grid on $\mbb T^2$ with period cell $[-\pi,\pi]^2$ and then evaluated
$\eta^\phys(\vec x)$ via \eqref{eq:eta:phys:from:eta}. The dashed
green and red lines in panel (a) show where $\eta^\phys(x_1,x_2)$
would be evaluated to extract $\td\eta^\phys(x)$ in
\eqref{eq:td:eta:phys} over $-3\pi\le\alpha\le 3\pi$.  Because the
deviation of solution $B$ from the periodic wave $A$ is small, as seen
in Figures~\ref{fig:torus:big} and~\ref{fig:quasi:wave}, it is
difficult to see the variation of $\eta^\phys(x_1,x_2)$ with respect
to $x_2$ when plotted over the entire torus. In panel (b) of
Figure~\ref{fig:torus:kappa}, we repeat the calculation on a subset of
the torus, with $x_1\in [-\pi/20,\pi/20]$ and $x_2\in [-\pi,\pi]$.
The increased resolution achieved by zooming in on this region reveals
that the torus function $\eta^\phys(x_1,x_2)$ has a maximum at
$(0,0)$. This means the extracted wave $\td\eta^\phys(x)$ is largest
at $x=0$, where the characteristic line $(x,kx)$ passes through
$(0,0)$. In panel (c), we plot the (negative of the) curvature,
$\kappa^\phys(\vec x)=\kappa\big(\vec x + \vec k\mc A(\vec x)\big)$,
where the formula for $\kappa(\alpha_1,\alpha_2)$ is given in
\eqref{eq:govern}. These formulas imply that
$\td\kappa(x) = \kappa^\phys(x,kx)=
(\td\eta^\phys)''(x)/\big[1+\big((\td\eta^\phys)'(x)\big)^2\big]^{3/2}$.
We see that $-\kappa^\phys(x_1,x_2)$ has a maximum at $(0,0)$,
confirming that the highest peak of $\td\eta^\phys(x)$, which occurs
at $x=0$, coincides with the sharpest peak, where the curvature
is most negative.

These results are consistent with the conjecture that the maximum of
$\eta^\phys(\vec x)$ at $\vec x=(0,0)$ will continue to grow and
sharpen to form a singularity at the origin in such a way that the
extracted wave $\td\eta^\phys(x)$ forms a $120^\circ$ corner at
$x=0$. All the other peaks would remain rounded in this limit (assuming the torus
  function remains smooth except at the origin, where it is continuous
  but has a discontinuous gradient), though there
would be peaks of arbitrarily high curvature as the characteristic
line $(x,kx)$ will pass arbitrarily closely to $(0,0)$ modulo
$2\pi\mbb Z^2$ as $x\to\pm\infty$. This conjecture is highly
speculative as there is still a long way to go from solution $B$ to a
solution with a sharp corner. The curvature $\td\kappa^\phys(0)$ has
only decreased from $-2.513$ for solution $A$ to $-2.554$ for solution
$B$ and would have to approach $-\infty$ in order to form a corner
wave.  Bifurcation from solution $E$ is perhaps more promising for
reaching a genuinely quasi-periodic wave in which the crest at the
origin has a sharp corner since solution $E$ itself is closer to the
limiting $120^\circ$ periodic wave than solution~$A$. The curvature at
the crest of solution $E$ is $-52.015$.

Exploring this conjecture further numerically is currently out of
reach due to the computational cost of tracking quasi-periodic solutions on the
branch from solution $A$ past solution $B$ or computing any solution
on the path bifurcating from solution $E$. This is partly because the
conformal mapping approach is not well suited to representing nearly
singular wave profiles. The grid spacing tends to spread out precisely
where one needs mesh refinement. This is evident already for solution
$A$ in Figure~\ref{fig:alphaXi:rat} by comparing the plot of
$\td\eta(\alpha)$ versus $\alpha$, where the grid is uniformly spaced,
to that of $\td\eta(\alpha)$ versus $\td\xi(\alpha)$, which describes
the curve in physical space. The effect is much worse for solution
$E$, and in the limit that $\td\zeta(\alpha)$ forms a corner,
$\td\eta(\alpha)$ will form a cusp. Boundary integral methods
\cite{wilkening2012overdetermined} and auxiliary conformal maps
\cite{tanaka:83,lushnikov:17} are more flexible for
controlling the grid spacing but have not yet been adapted to the
quasi-periodic setting.

For temporally periodic standing water waves, Penney and
Price \cite{penney:52} conjectured that the largest-amplitude standing
wave will form a $90^\circ$ corner each time it comes to rest.  Taylor
performed wave tank experiments corroborating this conjecture but
doubted Penney and Price's analysis \cite{taylor:53}. Careful
numerical studies suggested that the limiting wave may have a corner
as sharp as $60^\circ$ \cite{mercer:92}, or even a cusp
\cite{schultz}.  Wilkening \cite{water1} increased the resolution near
the crest tip by a factor of 200 over these previous studies and
showed that the Penney and Price conjecture is false due to a
breakdown of self-similarity. Increasing the amplitude leads to
increasingly complex behavior at small scales that prevents the
emergence of a limiting standing wave
\cite{wilkening2012overdetermined}. It is an interesting open question
whether the $120^\circ$ corner wave conjecture will turn out to be
true for spatially quasi-periodic traveling waves, or whether the
torus functions will become rough on small scales, diverging in some
Sobolev norm before the crest can sharpen to a corner at the origin.

\subsection{Overturning quasi-periodic gravity-capillary waves}
\label{sec:num:cap}

In this section, we compute the two-parameter family of ``type 1''
gravity-capillary waves studied numerically by Schwartz and
Vanden-Broeck \cite{schwartz:79} and more recently by Akers, Ambrose
and Wright \cite{akers2014gravity} and search for bifurcations to
quasi-periodic traveling waves corresponding to $k = 1/\sqrt{2}$. In
particular, we obtain overturning quasi-periodic waves that bifurcate
from periodic overturning waves.

At large amplitude, the type 1 waves possess a symmetric air pocket at
$x=\pi$ that drops down into the fluid, possibly surrounded by
overhanging regions of the free surface. By contrast, type 2
  waves have two air pockets that drop down into the fluid on either
  side of $x=\pi$; see Figure~\ref{fig:type1and2}. In our
dimensionless units with the wavelength normalized to $2\pi$ and the
acceleration of gravity normalized to $g=1$, our surface tension
parameter $\tau$ agrees with the parameter $\kappa$ used by Wilton
\cite{wilton1915} and by Schwartz and Vanden-Broeck
\cite{schwartz:79}. The waves we seek, i.e., the primary branch of
type 1 periodic traveling waves, bifurcate from zero amplitude for
$\tau>1/2$. The case $\tau=1/2$ corresponds to a Wilton ripple
\cite{wilton1915,vandenBroeck:book}, where there are two solutions of
the Stokes expansion, one of which matches up with this family of type
1 waves. This family can be numerically continued to smaller values of
$\tau$, which we do, but the result is different than bifurcation from
zero amplitude at these smaller values of $\tau$. Indeed, bifurcating
from zero amplitude with $\tau\in(1/3,1/2)$ leads to ``type 2'' waves
\cite{schwartz:79} that match up with the other solution of the Wilton
ripple expansion at $\tau=1/2$. Figure~\ref{fig:type1and2} shows
  a type 1 wave and a type 2 wave with $\tau=1/2$, with amplitudes
  chosen just below the point that self-intersection occurs. We use
  $\hat\eta_{1,0}=\hat\eta_1$ for the amplitude parameter.

\begin{figure}
\begin{center}
\includegraphics[width=.75\textwidth]{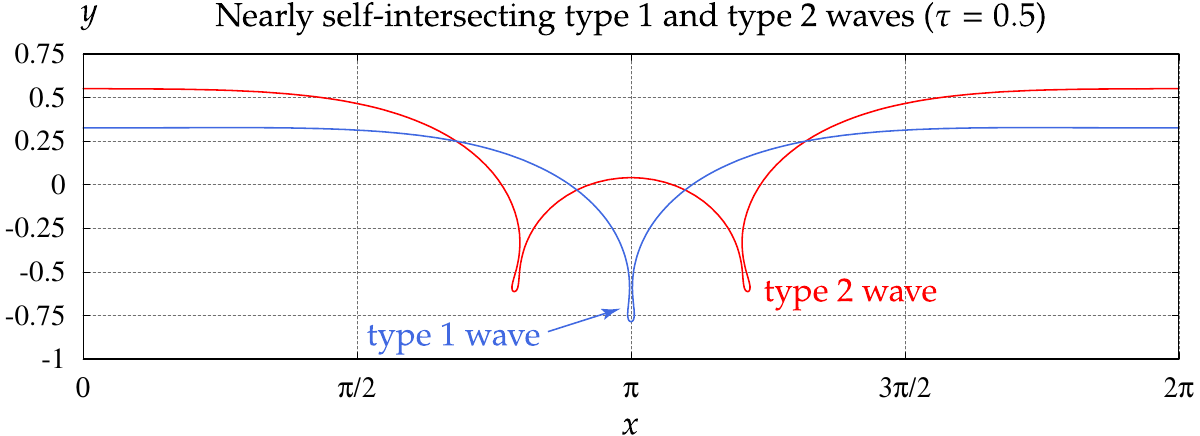}
\end{center}
\caption{\label{fig:type1and2} Nearly self-intersecting type 1 and
  type 2 periodic traveling gravity-capillary waves with surface tension
  parameter $\tau=1/2$. The amplitude parameter is $\hat\eta_1=0.26417$
  for the type 1 wave shown, and $\hat\eta_1=0.28947$ for the type 2 wave
  shown.}
\end{figure}

Panel (a) of Figure~\ref{fig:bifur_overturn} shows a contour plot of
$\chi$ evaluated on the family of type 1 waves we computed. We split
the domain of the contour plot into 3 regions
\begin{equation}\label{eq:3:regions}
  \text{region 1: } 1\le\tau\le 6.5, \quad
  \text{region 2: } 0.52\le\tau\le 0.98, \quad
  \text{region 3: } 0.05\le\tau\le 0.5.
\end{equation}
In each region, we sample $\tau$ at equal intervals of size
$\Delta\tau= 0.05$, $0.02$ and $0.01$, respectively. At each $\tau$
value of regions 1 and 2, we use $s=\hat\eta_{1,0}$ as an amplitude
parameter and sweep forward with steps of size $\Delta s=0.01$ until
the wave self-intersects to form a trapped bubble.  The conformal
mapping method can compute non-physical waves in which the free
surface crosses through itself to form an overlapping fluid region. We
use this feature to root-bracket the amplitude at which the walls of
the air bubble first meet. In more detail, once the amplitude is large
enough that the wave contains an air pocket with overhanging walls, we
compute the first zero, $\alpha_0(s,\tau)$, of $\td\xi'(\alpha)$ using
Newton's method. We then evaluate $\td\xi(\alpha_0(s,\tau))-\pi$ as we
continue to increase $s$ by $\Delta s=0.01$. Once this function is
positive, the wave has self-intersected and we have found a bracket to
use in Brent's method to find $s(\tau)$ such that
$\td\xi(\alpha_0(s(\tau),\tau))=\pi$ to double-precision accuracy.  We
then compute $\chi(s,\tau)$ at 81 values of $s$, uniformly spaced
between 0 and $s(\tau)$.  As a result, the right boundary of the
contour plot corresponds to the maximum amplitude for each $\tau$
where the air pocket closes to form a bubble.

In the third region of \eqref{eq:3:regions}, an additional step is
taken in which two numerical continuation paths are computed with
$s=\hat\eta_{1,0}$ held fixed and $\tau$ decreasing. The specific
choices of $s$ are $0.005$ and $0.006$, with starting points at
$\tau\in\{0.52,0.54\}$, computed as part of region 2 in
\eqref{eq:3:regions}. Once solutions are known with $(\tau,s)$ in the
range $0.05\le\tau\le0.5$ and $s\in\{0.005,0.006\}$, we proceed as
above to find the boundary to the right, but with $\Delta s$ decreased
to $0.001$ for the search for the initial bracket for Brent's
method. When the right boundary is found (where the air pocket pinches
  off into an air bubble), we compute $\chi$ at 81 equally spaced
points between $s=0.0017$ and the pinch-off amplitude. We re-iterate
that only for $\tau>1/2$ does the wave approach zero-amplitude when
the parameter $s$ decreases to zero. We stop at $s=0.0017$ for
$\tau\le1/2$ since a different bifurcation parameter than
$\hat\eta_{1,0}$ is needed to properly explore the limit as $s\to0^+$
when $\tau\le1/2$ on this sheet of type 1 solutions, and $s=0.0017$ is
small enough for the purpose of plotting $\chi(s,\tau)$.

\begin{figure}
\begin{center}
\includegraphics[width=.9\textwidth]{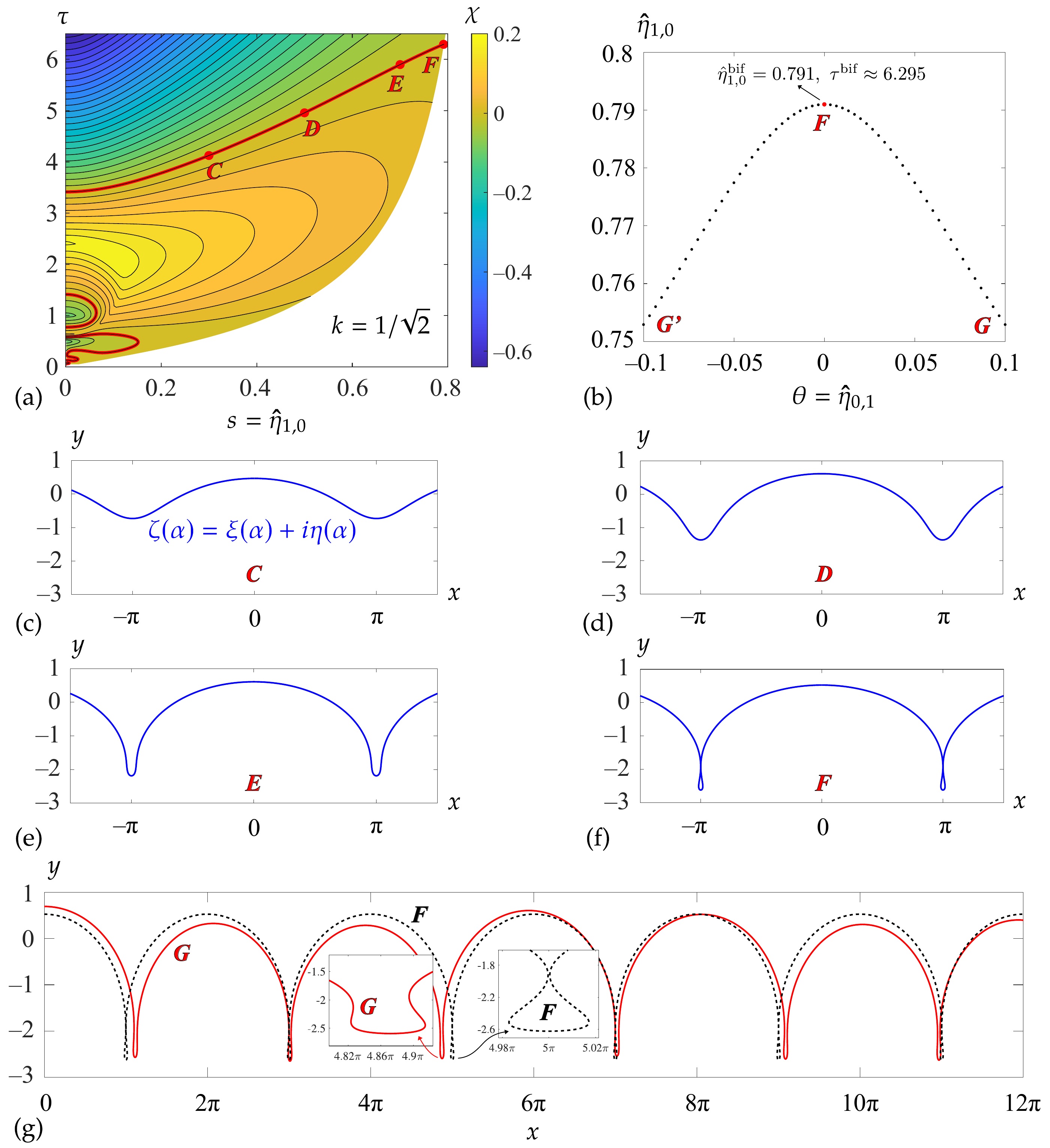}
\end{center}
\caption{\label{fig:bifur_overturn} Contour plot of $\chi(s,\tau)$ for
  type 1 gravity-capillary waves along with representative examples of
  bifurcation points on the zero contour of $\chi$ and a secondary
  branch of quasi-periodic, overhanging traveling waves bifurcating
  from solution $F$.}
\end{figure}

Because it is a two-parameter family, many solutions had to be
computed to generate the contour plot of $\chi(s,\tau)$ in panel (a)
of Figure~\ref{fig:bifur_overturn}. There are 14661 solutions
represented in the plot. The Fourier cutoffs $N_1$ and $N$ used to
compute the traveling waves and $\chi$ are as follows:
\begin{equation}
  \begin{array}{c|cccccc}
    \text{$\tau$ range} &
    [0.05,0.09] & [0.1,0.19] & [0.2,0.5] &
    [0.52,0.98]    & [1.0,2.45]  & [2.5,6.5] \\ \hline
    N_1 & 320 & 256 & 128 & 108 & 72 & 54\\
    N   & 384 & 300 & 160 & 128 & 90 & 64
  \end{array}
\end{equation}
We set $M_1=3N_1$ and $M=3N$ in all cases. We highlight the zero
contours of $\chi(s,\tau)$ in red. There are 4 values of $\tau>0.5$ at
which a red contour line reaches the $\tau$ axis, i.e., where
$\chi(0,\tau)=0$. In the small-amplitude limit, the candidate
quasi-periodic bifurcations are plane waves, enumerated by Fourier
lattice points $\vec l\in\pm\Lambda$ from \eqref{eq:varphi:def}. Plane
waves with wave numbers $k_1=1$ and $k_2=l_1+kl_2$ can co-exist as
traveling waves if they are both roots of the dispersion relation
\begin{equation}
  c^2=\frac g{k_j} + \tau k_j, \qquad (j=1,2).
\end{equation}
Eliminating $c^2$ and setting $g=1$, we have $1+\tau=\frac1{k_2}+\tau
k_2$, or $\tau=1/k_2$. In our calculation, we use Bloch theory to
restrict attention to the case $l_2\in\{\pm1\}$ when constructing $\mc
J^\qua$ in \eqref{eq:J:qua}. The possible values of $\tau$ that exceed
1/2, the Wilton ripple cutoff for type 1 waves, are then
\begin{equation}\label{eq:tau:crit}
  \tau = \frac{1}{1-k}=3.414, \quad
  \tau = \frac{1}{0+k}=1.414, \quad
  \tau = \frac{1}{2-k}=0.773, \quad
  \tau = \frac{1}{1+k}=0.586,
\end{equation}
where $k=1/\sqrt2$.  These are precisely the locations where the red
contour lines in panel (a) meet the $\tau$ axis. We find that the
contours at $1.414$ and $0.773$ form a closed loop, and the
contour at 0.586 drops down below $\tau=0.5$ into region 3 from
\eqref{eq:3:regions}, where we used numerical continuation to extend
the range of type 1 waves below the Wilton ripple cutoff.

We now focus on the remaining contour, which begins at $\tau=3.414$.
We see in Figure~\ref{fig:bifur_overturn} that this contour extends
all the way to the right boundary of the contour plot, where the type
1 waves self-intersect at a point, trapping an air bubble below. The
solutions labeled $C$, $D$, $E$ and $F$ have values of $s$ equal to
$0.3$, $0.5$, $0.7$ and $0.791$, respectively, and are plotted in
panels (c)--(f). As $s$ increases along the contour, the air pocket
at $x=\pi$ (or one of its $2\pi$-periodic translations) deepens, eventually
forming a trapped bubble. Solution $E$ has formed a deep air pocket,
but the free surface is still a graph. Solution $F$ has overhanging
walls that nearly touch. Using polynomial interpolation from the
points on the right boundary with $6.1\le\tau\le 6.5$, we find
that the red contour meets the right boundary at
\begin{equation}
  (s,\tau) = (\,0.7916457855\;,\;6.297256422\, ).
\end{equation}
Solution $F$ is very close to this, with
$(s,\tau)=(0.791,6.294747714)$. Solutions $C$, $D$ and $E$
correspond to $(s,\tau)=(0.3,4.1248233)$, $(0.5,4.9579724)$ and
$(0.7,5.8995675)$, respectively. Each of the waves $C$--$F$ bifurcates
to a quasi-periodic family with basic frequencies $k_1=1$ and
$k_2=k=1/\sqrt2$. Indeed, the entire red curve corresponding to
$\chi=0$ gives the intersection of the two-parameter family of
periodic type 1 waves and a two-parameter sheet of such QP
waves. Fixing $\tau$ or $s$ reduces the problem to a standard
one-parameter bifurcation problem of the type studied in
Section~\ref{sec:num:grav} above.  See \cite{antman:81,antman:book}
for background on the general theory of multi-parameter bifurcation
theory.

To explore the existence of spatially quasi-periodic, overhanging
traveling waves, we follow the bifurcation branch from solution $F$
using our numerical continuation algorithm. We choose $N_1 = N_2 =
64$, $M_1 = M_2 = 150$ and use $q^\bif \pm 10^{-5} \big(0,0, \dot
  {\eta}^\qua\big)$ to jump from the periodic branch to the
quasi-periodic branch. The largest Fourier coefficient of $\dot
{\eta}^\qua$ is $\dot{\hat{\eta}}^\qua_{0,1} \approx 0.6901$.  Thus,
$\theta=\hat{\eta}_{0,1}$ is a natural choice for the continuation
parameter on the QP branch. We hold $\tau$ fixed at $\tau^\bif$ in
this search.  We are able to compute the quasi-periodic continuation
path until $\theta$ reaches $\pm 0.1$.  The corresponding solution
with positive $\theta$, labeled $G$ in panel (b), is plotted in panel
(g) of Figure \ref{fig:bifur_overturn}. For this solution,
$\hat{\eta}_{1,0} \approx 0.7529$ and the objective function $f$ is
minimized to $3.7\times 10^{-25}$.  In panel~(b) of Figure
\ref{fig:bifur_overturn}, we plot the Fourier coefficients
$\hat{\eta}_{1,0}$ and $\hat{\eta}_{0,1}$ of the bifurcated
quasi-periodic solutions.  Along the quasi-periodic branch, as
$|\theta|$ increases, $|\hat{\eta}_{1,0}|$ decreases.  We also observe
that the plot is symmetric with respect to the vertical line
$\hat{\eta}_{0,1} = 0$; this is because the quasi-periodic solutions
with negative $\theta$ can be obtained from ones with positive
$\theta$ through a spatial shift in $\alpha_2$: $(\alpha_1, \alpha_2)
\mapsto (\alpha_1, \alpha_2 + \pi)$. Solutions $G$ and $G'$ are
related in this way.

In panel (g) of Figure \ref{fig:bifur_overturn}, we compare the wave
profile $\zeta=\xi+i\eta$ of solution $F$, which is periodic with
amplitude $s=0.791$, and solution $G$, which is quasi-periodic with
$\theta=0.1$.  We observe that the peaks and troughs of the QP
solution appear in a non-periodic pattern.  The peaks of solution $G$
are above those of solution $F$ near $\xi = 0, 6\pi$ and below near
$\xi = 2\pi, 4\pi, 10\pi$; the troughs of $G$ are on the left of those
of $F$ near $\xi = 5\pi, 11\pi$ and on the right near $\xi = 3\pi,
7\pi, 9\pi$. Beyond the plot window shown, the deviation of solution
$G$ from $F$ will continue to differ from one peak and trough to
the next, never exactly repeating over the real line.  We zoom in on
the troughs of the two solutions near $\xi = 5\pi$ and observe that
the trough of solution $G$ is asymmetrical and wider than that of the
periodic solution $F$.  Neither solution is self-intersecting.
Moreover, solution $G$ is further from self-intersecting than solution
$F$. This may be related to the result in panel (b) that increasing
$|\theta|$ causes $\hat\eta_{1,0}$ to decrease, and decreasing
$\hat\eta_{1,0}$ on the periodic branch increases the gap between the
overhanging walls. Specifically, solution $G$ has
$\hat{\eta}_{1,0}\approx0.7529$ while solution $F$ has
$\hat{\eta}_{1,0}\approx0.7916$. Nevertheless, this quasi-periodic
solution does exhibit overhanging regions, as shown in the inset plot.

\section{Conclusion}
\label{sec:conclusion}

We have shown that a signed version, $\chi$, of the smallest singular
value, $\sigma_\text{min}$, of the Jacobian serves as an excellent
test function to locate branch points in equilibrium problems. While
$\sigma_\text{min}$ has a slope discontinuity at each of its zeros,
the existence of an analytic or smooth SVD ensures that the function
becomes smooth when a sign is included on one side of the bifurcation
point. We show that this sign factor can be defined as the product of
the determinants of the orthogonal matrices containing the left and
right singular vectors in the standard SVD, which alleviates the need
to actually compute an analytic SVD. We also show how to compute this
sign efficiently from the bidiagonal matrix obtained in the first
phase of the standard SVD algorithm.  It is not necessary to form the
matrices of singular vectors, compute their determinants, or compute
the determinant of the Jacobian itself.

One benefit of using $\chi$ as a test function is that root bracketing
algorithms such as Brent's method can then be used to locate
bifurcation points. This is simpler than the Newton-type method
proposed by Shen \cite{shen:bif:97} to locate zeros of
$\sigma_\text{min}$ or by various authors \cite{griewank:84,
  allgower:97, beyn01, dhooge:03, bindel14} to solve minimally
extended systems. Within the constraints and philosophy of
Remark~\ref{rmk:dyn:sys}, our method is as efficient as these
alternative approaches. We also proposed a polynomial interpolation
approach using Chebyshev polynomials, which relies on the smoothness
of $\chi$ to achieve spectral accuracy.  In multi-parameter
bifurcation problems such as the gravity-capillary problem of
Section~\ref{sec:num:cap}, the zero level set of $\chi$ can be used to
visualize and compute the intersection of the primary family of
solutions with the secondary family of solutions.  This would not work
well using $\sigma_\text{min}$ or $\psi$ from \eqref{eq:J:ee} instead
of $\chi$.  We also showed that $\chi$ stops changing on further mesh
refinement once the singular vectors corresponding to the smallest
singular values are resolved.  As a result, the plots in
Figure~\ref{fig:bif:svd} and the contour plot of
Figure~\ref{fig:bifur_overturn} are smooth even though many different
mesh sizes are used in the underlying calculations plotted.  Other
test functions such as the determinant of the Jacobian or $\psi(s)$ in
\eqref{eq:J:ee} do not have this mesh independence property, and some
behave poorly on large-scale discretizations of infinite dimensional
problems.

We use this method to compute, for the first time, quasi-periodic
traveling gravity waves with zero surface tension and overhanging
traveling gravity-capillary waves. The former example yields traveling
waves that still make sense at the scale of the ocean, where the
length scale of capillary waves is so much smaller than that of
gravity waves (by a factor of $10^{-7}$) that one can set
$\tau=0$. Genuinely quasi-periodic pure gravity waves do not persist to zero
amplitude, which motivated us to search for quasi-periodic
bifurcations from large-amplitude periodic waves.  The latter example
showcases the use of $\chi(s,\tau)$ to study two-parameter bifurcation
problems in which the primary sheet is parameterized over a region
with one side bounded by singular solutions. In our case, the right
boundary of the contour plot in Figure~\ref{fig:bifur_overturn}
corresponds to type 1 waves that self-intersect to form an air
pocket. This contour plot makes it easy to visualize how the secondary
two-parameter family of quasi-periodic solutions fits together with
the primary two-parameter family of periodic traveling waves.

Once bifurcations are found, we use numerical continuation to explore
the secondary branches of quasi-periodic solutions. This becomes
computationally expensive, especially in the case of the pure gravity
wave problem. We formulated the problem of finding solutions on the
secondary branch as an overdetermined nonlinear least squares problem
and implemented a parallel algorithm employing MPI and ScaLapack to
carry out the trust-region minimization steps of the
Levenberg-Marquardt method. The largest-amplitude solution we
computed, namely solution $B$ of Figures~\ref{fig:grav:bif:branch}
and~\ref{fig:fourier:big}, required solving for $N_\text{tot}=53398$
independent Fourier modes in its 2D torus representation. We used the
Savio cluster at UC Berkeley and the Lawrencium cluster at Lawrence
Berkeley National Laboratory for these calculations.

An interesting feature of these solutions is that the 2D Fourier modes
$\hat\eta_{j_1,j_2}$ continue to exhibit visible effects of resonance
near the line $j_1+kj_2=0$ even though these are large-amplitude
solutions far from linear water wave theory.  We explain this by
noting that these modes correspond to long wavelengths when the torus
function is restricted to the characteristic line $\alpha_1=\alpha$,
$\alpha_2=k\alpha$, and the Euler equations are not strongly affected
by long wavelength perturbations. We also identified a symmetry
connecting solutions on one side of each secondary bifurcation branch
to the other. In particular, solution $B'$ in
Figure~\ref{fig:grav:bif:branch} is related to solution $B$ via
$\eta_{B'}(\alpha_1,\alpha_2) = \eta_{B}(\alpha_1,\alpha_2+\pi)$.  The
same symmetry was found in the gravity-capillary problem, e.g.,
solutions $G$ and $G'$ in Figure~\ref{fig:bifur_overturn} are also
related by this transformation. Figures~\ref{fig:quasi:wave}
and~\ref{fig:bifur_overturn} also show that when a large-amplitude
periodic traveling water wave is perturbed to create a quasi-periodic
traveling wave, each crest and trough of the infinite wave train will
undergo a different perturbation, so that no two are exactly the
same. Nevertheless, they fit together to form a single traveling wave
profile extending over the real line.

We took advantage of Bloch theory to express the Jacobian as a direct
sum of operators mapping $(\mc X_\sigma^\per,\mbb R,\mbb R)$ to
$X_\sigma^\per$ and $(\mc X_\sigma^\e{l_2},0,0)$ to $\mc
X_\sigma^\e{l_2}$. This greatly reduces the number of rows and columns
of $\mc J^\qua$ in \eqref{eq:J:qua} since $l_2$ could be set to 1
rather than varying over some range $-N_2\le l_2\le N_2$. Bloch theory
is also useful for studying the stability of traveling waves to
subharmonic perturbations \cite{longuet:78, mclean:82, mackay:86}, and
indeed the present work of searching for null vectors of the Jacobian
of the traveling wave equations can be thought of as a special case of
looking for zero eigenvalues of the dynamic stability problem. In the
present paper, we have shown that perturbations in null directions of
the linearization lead to branches of spatially quasi-periodic
traveling waves for the full water wave equations. In future work, it
would be interesting to investigate the Benjamin-Feir instability
\cite{benj:feir:67, zakharov1968stability} and other unstable subharmonic
perturbations \cite{oliveras:11, tiron2012linear, trichtchenko:16,
  murashige:20} by evolving them beyond the realm of validity of Bloch
stability theory using the dynamic version \cite{quasi:ivp} of our
spatially quasi-periodic water wave framework. Linearly stable
subharmonic perturbations would also be interesting to investigate as
they may lead to solutions of the full water wave equations that are
quasi-periodic in time
\cite{berti2016quasi,baldi2018time,berti2020traveling,feola2020trav,
  waterTS} as well as space.

\textbf{Acknowledgments.}  This work was supported in part by the
National Science Foundation under award number DMS-1716560 and by the
Department of Energy, Office of Science, Applied Scientific Computing
Research, under award number DE-AC02-05CH11231. This research used the
Lawrencium computational cluster resource provided by the IT Division
at the Lawrence Berkeley National Laboratory and the Savio
computational cluster provided by the Berkeley Research Computing
program at the University of California, Berkeley (supported by the UC
  Berkeley Chancellor, Vice Chancellor for Research, and Chief
  Information Officer).

\textbf{Declaration of Interest.}
The authors report no conflict of interest.

\appendix

\section{The effects of floating-point errors on the smallest singular value}
\label{sec:fp:err}

The SVD algorithm is backward stable \cite{demmel:book}, which leads
to a well-known estimate \cite{demmel:kahan}
\begin{equation}
  |\hat\sigma_i-\sigma_i|\le p(n)\|\mc J\|\veps, \qquad 1\le i\le n,
\end{equation}
where $\{\hat\sigma_i\}$ are the numerically computed singular values
of the $n\times n$ matrix $\mc J$, $p(n)$ is a slowly growing function of
the matrix dimension, $\|\mc J\|$ is the 2-norm, and $\veps$ is machine
precision. Following the approach of \cite{Achain}, one can show that
$p(n)\le O(n^{2.5})$ using standard backward stability estimates for
Householder transformations in the bidiagonalization phase
\cite{higham:book, demmel:book}; however, this is pessimistic. In
particular, it assumes worst-case $O(n\veps)$ errors when summing $n$
numbers. In practice \cite{lapack:user:guide}, $p(n)$ is often taken
to be 1.  In section~\ref{sec:num:grav}, the condition numbers of $\mc
J^\qua$ for solutions $A$ and $E$ are 1636 and 69074, respectively.
Taking $p(n)=1$, this leads to expected errors in $\chi(s)$ around
$(1636)(2.2\times10^{-16})=3.6\times10^{-13}$ and
$(69074)(2.2\times10^{-16})=1.5\times10^{-11}$, respectively.  Using
Brent's method, we reduced $|\chi(s)|$ to $2.9\times10^{-15}$ for
solution $A$.  However, Brent's method will report the result in which
floating-point errors combine most favorably to minimize $|\chi(s)|$,
so this value likely over-predicts the accuracy actually obtained.
Indeed, if we increase the size of $\mc J^\qua$ from 1537 to 2049 and
recompute $\chi$ without re-optimizing via Brent's method, we obtain
$|\chi|=1.5\times10^{-14}$, which is five times bigger. The flattening
of the high-frequency Chebyshev modes $\check\chi_m$ in panel (d) of
Figure~\ref{fig:bif:svd} suggests floating-point errors around
$10^{-11}$ or $10^{-12}$, which is consistent with the above condition
number estimate. Chebyshev interpolation seems less prone than Brent's
method to optimizing beyond the actual error, so the minimum value of
$|\chi(s)|$ obtained in problem $E$, namely $2.1\times10^{-12}$, may
be accurate. Additional calculations would have to be done in
quadruple-precision to fully quantify the effects of floating-point
arithmetic in double-precision, but this is beyond the scope of the
present work.

\section{Proof of Theorem~\ref{thm:bdd}}
\label{sec:proof}

Let us restate the theorem in a slightly more general form that
simplifies the proof. Theorem~\ref{thm:bdd} is recovered by setting
$\rho=\rho_1/2$ and $\sigma_1=\sigma$.

\begin{theorem}
  Suppose $q^\per=(\eta,\tau,b)$ with $\eta\in\Vp{\sigma}$ for some
  $\sigma>0$.  Suppose also that $\eta$ is real-valued and the
  resulting $J(\alpha_1,\alpha_2)$ in \eqref{eq:govern}, which is
  independent of $\alpha_2$, is non-zero for every $\alpha_1\in\mbb
  T$. Then there exists $\rho_1\in(0,\sigma]$ such that $D_q\mc
  R[q^\per]$ is a bounded operator from $\big(\mc V_{\sigma_1},\mbb
    C,\mbb C\big)$ to $\mc V_\rho$ provided that $0<\rho<\rho_1$ and
  $\sigma_1>\rho$.
\end{theorem}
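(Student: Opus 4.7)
The plan is to decompose $D_q\mc R[q^\per]\dot q$ from \eqref{eq:lin:R} into a finite sum of products of three kinds of factors: (a) ``background'' quantities built from the frozen data $\eta$, $\xi=H\eta$, $J$, $\kappa$, $1/J$ and $1/J^{3/2}$; (b) variational quantities $\dot\eta$ and $\dot\xi=H\dot\eta$ together with their first and second $\pa_\alpha$-derivatives; and (c) the scalar perturbations $\dot\tau,\dot b$. I would then bound each factor using three basic estimates on the Hilbert scale $\{\mc V_\sigma\}$: first, $H$ and $P$ act as contractions on every $\mc V_\sigma$ since their Fourier multipliers have modulus at most $1$; second, $\pa_\alpha$ and $\pa_\alpha^2$ are bounded from $\mc V_{\sigma_1}$ into $\mc V_{\rho'}$ whenever $\rho'<\sigma_1$, with operator norms controlled by $\sup_{\mb j\in\mbb Z^2}|j_1+kj_2|^r e^{-2(\sigma_1-\rho')(|j_1|+|j_2|)}<\infty$ for $r\in\{1,2\}$; and third, pointwise multiplication is bounded as a map $\mc V_{\sigma_a}\times\mc V_{\sigma_b}\to\mc V_\rho$ whenever $\rho\le\min(\sigma_a,\sigma_b)$, via Cauchy--Schwarz applied to the convolution $\widehat{fg}_{\mb j}=\sum_{\mb k}\hat f_{\mb j-\mb k}\hat g_{\mb k}$ together with the subadditivity $|j_1|+|j_2|\le(|j_1-k_1|+|j_2-k_2|)+(|k_1|+|k_2|)$.

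The key step is to show that the background quantities (a) all lie in $\mc V_{\rho_1}$ for some $\rho_1\in(0,\sigma]$. Since $\eta\in\mc V_\sigma^\per$, the functions $\pa_\alpha\eta$, $1+\pa_\alpha\xi$, $\pa_\alpha^2\eta$ and $\pa_\alpha^2\xi$ extend holomorphically to the complex strip of half-width $\sigma$ (with arbitrarily small shrinkage absorbed into the derivative estimate). Consequently $J=(1+\pa_\alpha\xi)^2+(\pa_\alpha\eta)^2$ and the numerator of $\kappa$ extend holomorphically there, each depending only on $\alpha_1$. The hypothesis that $J$ is nonzero on the real torus, together with continuity of the holomorphic extension and compactness of $\mbb T$, yields a $\rho_1\in(0,\sigma]$ on which $|J(z)|\ge\delta>0$ throughout $\{|\im z|\le\rho_1\}$; hence $1/J$ and $1/J^{3/2}$ are holomorphic on this strip, and together with $J$ and $\kappa$ they belong to $\mc V_{\rho_1}^\per\subset\mc V_{\rho_1}$.

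Given any target regularity $\rho\in(0,\rho_1)$ and any $\sigma_1>\rho$, I would pick an intermediate $\rho'\in(\rho,\min(\rho_1,\sigma_1))$, route the variational quantities into $\mc V_{\rho'}$ via estimates on $H$, $\pa_\alpha$ and $\pa_\alpha^2$, keep the background quantities in $\mc V_{\rho_1}$, and combine them with the multiplication estimate at target exponent $\rho$, absorbing $P$ at the end. Summing the finitely many terms from \eqref{eq:lin:R} yields a bound of the form
\begin{equation*}
  \|\dot{\mc R}\|_\rho \le C\big(\|\dot\eta\|_{\sigma_1}+|\dot\tau|+|\dot b|\big),
\end{equation*}
with $C$ depending on $q^\per$ through $\rho_1$, $\delta$, and the sizes of the background factors in $\mc V_{\rho_1}$. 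The main obstacle is the non-vanishing of $J$ on a complex strip of positive width: this is what forces the upper bound $\rho_1\le\sigma$ in the conclusion, and it is the only place where the real-analyticity of $\eta$ (rather than mere smoothness) is used in an essential way.
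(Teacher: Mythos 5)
Your proposal is correct and follows the same overall architecture as the paper's proof: isolate the ``background'' factors built from $\eta$ and show they belong to $\mc V_{\rho_1}$ for some $\rho_1\in(0,\sigma]$; observe that the operators $\pa_\alpha$, $\pa_\alpha^2$, $H\pa_\alpha$, $H\pa_\alpha^2$ are bounded from $\mc V_{\sigma_1}$ into a space of slightly lower regularity; then close via a multiplication estimate. The genuinely different step is the multiplication lemma. You appeal to a general two-dimensional algebra estimate $\mc V_{\sigma_a}\times\mc V_{\sigma_b}\to\mc V_\rho$ and track three exponents, whereas the paper exploits that every background factor is independent of $\alpha_2$: its Fourier support lies on the slice $\{l_2=0\}$, so the relevant convolution runs only over $l_1$ and the paper closes with a one-dimensional Young's inequality along horizontal slices, making multiplication by a background function a bounded operator on a single $\mc V_\rho$ with no intermediate exponent. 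Both are valid; the paper's slice argument is lighter here but relies on the periodic structure of the background, while yours is more portable. Your derivation of $\rho_1$ via holomorphic extension to a complex strip and compactness of $\mbb T$ to keep $J$ bounded away from zero is the same content as the paper's invocation of exponential decay of 1D Fourier coefficients of real-analytic $2\pi$-periodic functions (Lemma~5.6 of~\cite{broer:book}), just phrased on the physical side rather than the Fourier side.

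One imprecision worth flagging: the multiplication lemma as you state it, with $\rho\le\min(\sigma_a,\sigma_b)$, fails at the endpoint $\rho=\sigma_a=\sigma_b$, where Cauchy--Schwarz on the convolution gives only an $\ell^\infty$ bound on the weighted coefficients, not $\ell^2$. You need a strict gap $\rho<\sigma_a$ on at least one factor so that the kernel $e^{-2(\sigma_a-\rho)(|m_1|+|m_2|)}$ is summable over $\mbb Z^2$; Cauchy--Schwarz then yields $\|fg\|_{\mc V_\rho}\le A^{1/2}\|f\|_{\mc V_{\sigma_a}}\|g\|_{\mc V_{\rho}}\le A^{1/2}\|f\|_{\mc V_{\sigma_a}}\|g\|_{\mc V_{\sigma_b}}$ with $A=\sum_{\mb m}e^{-2(\sigma_a-\rho)(|m_1|+|m_2|)}$, which is Young's inequality in disguise. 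In your application the gap is always present, since the background factors live in $\mc V_{\rho_1}$ and you choose $\rho<\rho'<\rho_1$, so the conclusion is unaffected; just tighten the statement of the lemma to require the strict inequality.
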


\begin{proof}
Since $\eta\in\mc V_\sigma^\per$, we know $\eta(\alpha_1,\alpha_2)=
\td\eta(\alpha_1)$ is independent of $\alpha_2$, and $\td\eta(\alpha)$
is real analytic and $2\pi$ periodic. All the torus functions without
a dot in \eqref{eq:lin:R} are then independent of $\alpha_2$, and can
be replaced by the corresponding 1d extracted function (adding a
  tilde), evaluated at $\alpha_1$. The torus operators $\pa_\alpha$
and $H$ are replaced by their 1d variants when this is done.  For
example, $ J(\alpha_1,\alpha_2)=\td J(\alpha_1)$, where $\td
J(\alpha)=(1+H\pa_\alpha\td\eta)^2+(\pa_\alpha\td\eta)^2$. Since $\td
J(\alpha)$ is continuous and assumed non-zero at each
$\alpha\in[0,2\pi]$, it is bounded away from zero. Thus, each of the
functions
\begin{equation}\label{eq:mult:list}
  \begin{gathered}
    1/\big(2\td J\big),\quad \td\kappa,\quad
    b/\big(2\td J^2\big),\quad (1+H\pa_\alpha\td\eta),\quad \pa_\alpha\td\eta,\quad
    3\td\kappa/\big(2\td J\big), \quad \td J^{-3/2}, \quad \pa_\alpha^2\td\eta, \quad
    H\pa_\alpha^2\td\eta
  \end{gathered}
\end{equation}
that appears in \eqref{eq:lin:R} (after setting $\xi=H\eta$)
is real analytic and $2\pi$-periodic. As a result (see, e.g., Lemma
  5.6 of \cite{broer:book}) the 1d Fourier coefficients of each of
these functions decay exponentially. Thus, there exist $C,\rho_1>0$
such that $Ce^{-\rho_1|j|}$ is a common bound on each of these sets of
Fourier coefficients, i.e., if $\td\gamma(\alpha)=\sum_j
\hat\gamma_je^{ij\alpha}$ represents any of the functions in
\eqref{eq:mult:list}, then $|\hat\gamma_j|\le Ce^{-\rho_1|j|}$ for
$j\in\mbb Z$. Now fix $\rho$ in the range $0<\rho<\rho_1$ and let
$\sigma_1>\rho$.

We need to show that $D_q \mc R \big[q^\per \big]$ is a bounded operator
from $(\mc V_{\sigma_1},\mbb C,\mbb C)$ into $\mc V_\rho$. It suffices
to show that the restrictions to the subspaces $(0,\mbb C,\mbb C)$
and $(\mc V_{\sigma_1},0,0)$ are bounded. In the first case, we have
\begin{equation}\label{eq:dot:R:tau:b1}
  D_q \mc R \big(q^\per \big)\big(0,\dot \tau, \dot b \big) =
  P\Big[ \dot b/\big(2J\big)  - \kappa \dot \tau
    \Big].
\end{equation}
Note that $1/(2 J)$ and $\kappa$
are independent of $\alpha_2$. Letting $\gamma(\alpha_1,\alpha_2)=
\td\gamma(\alpha_1)$ represent either of these functions, we have
\begin{equation}
  \|\gamma\|_{\mc V_{\rho}}^2 \le
  C^2 + 2C^2\sum_{j=1}^\infty e^{-2(\rho_1-\rho)j}
  = C^2+ \frac{2C^2}{e^{2(\rho_1-\rho)}-1}
  \le C^2 + \frac{C^2}{\rho_1-\rho} = A^2 < \infty,
\end{equation}
where the last equality defines $A$.  So the $\mc V_\rho$ norm of the
left-hand side of \eqref{eq:dot:R:tau:b1} is bounded by $A(|\dot
  b|+|\dot\tau|)\le A\sqrt2(\dot b^2+\dot\tau^2)^{1/2}$.

Now let $\dot\eta\in\mc V_{\sigma_1}$. We need to compute $\dot R =
D_q\mc R\big[q^\per\big]\big(\dot\eta,0,0\big)$ and bound its norm in
$\mc V_\rho$ by a constant times $\|\dot\eta\|_{\mc V_{\sigma_1}}$. To
compute $\dot R$, we evaluate \eqref{eq:lin:R} in two steps.  First
the terms
\begin{equation}\label{eq:step1:list}
  \big(\pa_\alpha\dot{\xi}\big)=\big(H\pa_\alpha\dot{ \eta}\big), \qquad
  \big(\pa_\alpha\dot{ \eta}\big), \qquad
  \big(\pa_\alpha^2\dot{\eta}\big), \qquad
  \big(\pa_\alpha^2\dot{\xi}\big)=\big(H\pa_\alpha^2\dot{\eta}\big)
\end{equation}
are computed from $\dot\eta$.  The symbols of $H\pa_\alpha$,
$\pa_\alpha$, $\pa_\alpha^2$ and $H\pa_\alpha^2$ are, respectively,
$|l_1+kl_2|$, $i(l_1+kl_2)$, $-(l_1+kl_2)^2$ and
$i(l_1+kl_2)|l_1+kl_2|$, so these operators are bounded from $\mc
V_{\sigma_1}$ to $\mc V_\rho$ since $\sigma_1>\rho$.  For example,
\begin{equation}
  \big\| H\pa_\alpha\dot{\eta}\big\|^2_{\mc V_{\rho}} =
  \sum_{l_1,l_2} \Big| |l_1+kl_2|\hat\eta_{l_1,l_2}e^{\rho(|l_1|+|l_2|)}
  \Big|^2 \le B^2\sum_{l_1,l_2}
  \Big| \hat\eta_{l_1,l_2}e^{\sigma_1(|l_1|+|l_2|)} \Big|^2 = B^2
  \big\| \dot{\eta}\big\|^2_{\mc V_{\sigma_1}} ,
\end{equation}
where $B^2=\max(1,k)\max_{x\ge0}xe^{-(\sigma_1-\rho)x}=
\max(1,k)/[e(\sigma_1-\rho)]<\infty$. The second step is to
consider
\begin{equation}\label{eq:Rdot:again}
   \dot {\mc{R}} = 
  P\bigg[  - 
    \frac{b}{2J^2} \dot {J} + g \dot\eta
    -\tau \dot{\kappa}\bigg].
\end{equation}
The projection $P$ is bounded on $\mc V_\rho$, and $\|g\dot\eta\|_{\mc
  V_\rho}\le g\|\dot\eta\|_{\mc V_{\sigma_1}}$. Substitution of $\dot
J$ and $\dot\kappa$ in \eqref{eq:lin:R} into \eqref{eq:Rdot:again}
yields sums of products containing two factors from the list
\eqref{eq:mult:list} and one factor from the list
\eqref{eq:step1:list}. From the first step, we know each term in the
list \eqref{eq:step1:list} is bounded in $\mc V_\rho$ by a constant
times $\|\dot\eta\|_{\mc V_{\sigma_1}}$. The following lemma shows that
multiplication by any of the terms in \eqref{eq:mult:list} is a
bounded operator on $\mc V_\rho$. Multiplying by two of them is
then also bounded, which concludes the proof.
\end{proof}

\begin{lem} Suppose $\td\gamma(\alpha)=\sum_j\hat\gamma_je^{ij\alpha}$
  with $|\hat\gamma_j|\le Ce^{-\rho_1|j|}$ for positive constants $C$
  and $\rho_1$, and fix $\rho\in(0,\rho_1)$. Then multiplication by
  $\gamma(\alpha_1,\alpha_2)= \td\gamma(\alpha_1)$ is a bounded
  operator on $\mc V_\rho$.
\end{lem}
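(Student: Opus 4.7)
The plan is to write the product $\gamma f$ in Fourier coefficients and reduce the boundedness to Young's inequality for convolutions on $\ell^1$ and $\ell^2$, with the exponential weights absorbed via the triangle inequality for $|\cdot|$.

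First I would note that since $\gamma(\alpha_1,\alpha_2) = \td\gamma(\alpha_1)$ depends only on $\alpha_1$, its 2D Fourier coefficients are supported on the line $l_2 = 0$, namely $\hat\gamma_{j,0} = \hat\gamma_j$ and $\hat\gamma_{j,l_2} = 0$ for $l_2 \ne 0$. Consequently, for any $f \in \mc V_\rho$ with Fourier expansion $f = \sum_{l_1,l_2} \hat f_{l_1,l_2} e^{i(l_1\alpha_1+l_2\alpha_2)}$,
\begin{equation}
  \widehat{(\gamma f)}_{l_1,l_2} \;=\; \sum_{j\in\mbb Z} \hat\gamma_j\,\hat f_{l_1-j,\,l_2}.
\end{equation}
This identifies the Fourier picture of multiplication as convolution in the $l_1$ index, with $l_2$ a passive parameter.

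Next, I would handle the exponential weight by the elementary inequality $|l_1| \le |l_1-j| + |j|$, which gives
\begin{equation}
  e^{\rho(|l_1|+|l_2|)} \;\le\; e^{\rho|j|}\, e^{\rho(|l_1-j|+|l_2|)}.
\end{equation}
Defining $a_j := |\hat\gamma_j|\,e^{\rho|j|}$ and $b_{l_1,l_2} := |\hat f_{l_1,l_2}|\,e^{\rho(|l_1|+|l_2|)}$, I obtain the pointwise bound
\begin{equation}
  \bigl|\widehat{(\gamma f)}_{l_1,l_2}\bigr|\,e^{\rho(|l_1|+|l_2|)}
  \;\le\; \sum_{j\in\mbb Z} a_j\, b_{l_1-j,\,l_2}.
\end{equation}
For each fixed $l_2$, the right-hand side is a convolution in $l_1$, so Young's inequality gives $\sum_{l_1}(\sum_j a_j b_{l_1-j,l_2})^2 \le \|a\|_{\ell^1}^2 \sum_{l_1} b_{l_1,l_2}^2$. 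Summing over $l_2$ yields
\begin{equation}
  \|\gamma f\|_{\mc V_\rho}^2 \;\le\; \|a\|_{\ell^1}^{\,2}\;\|f\|_{\mc V_\rho}^2 .
\end{equation}

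Finally, I would check that $\|a\|_{\ell^1}$ is finite under the hypothesis $|\hat\gamma_j|\le Ce^{-\rho_1|j|}$ with $\rho<\rho_1$: since
\begin{equation}
  \|a\|_{\ell^1} \;=\; \sum_{j\in\mbb Z} |\hat\gamma_j|\,e^{\rho|j|} \;\le\; C\sum_{j\in\mbb Z} e^{-(\rho_1-\rho)|j|}
  \;=\; C\,\frac{e^{\rho_1-\rho}+1}{e^{\rho_1-\rho}-1} \;<\;\infty,
\end{equation}
the multiplication operator is bounded on $\mc V_\rho$ with operator norm at most $\|a\|_{\ell^1}$. There is no real obstacle here; the only thing to get right is the weighted triangle inequality together with the strict gap $\rho<\rho_1$, which is exactly what makes the $\ell^1$ norm of the weighted symbol converge and forces the convolution bound to be uniform in $f$.
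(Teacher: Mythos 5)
Your proof is correct and follows essentially the same route as the paper's: write multiplication as a convolution in the $l_1$ index (with $l_2$ passive), use the triangle inequality $|l_1|\le|l_1-j|+|j|$ to absorb the $e^{\rho|\cdot|}$ weight, and apply Young's inequality with the $\ell^1$ norm of the exponentially-weighted symbol, which converges precisely because $\rho<\rho_1$. The only cosmetic difference is that you swap the roles of $j$ and $l_1-j$ in the convolution and give the explicit closed-form value of $\sum_j e^{-(\rho_1-\rho)|j|}$ rather than the paper's cruder bound $C+2C/(\rho_1-\rho)$; the substance is identical.
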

  
\begin{proof}
Let $u\in\mc V_{\rho}$ and define
$v(\alpha_1,\alpha_2)=\td\gamma(\alpha_1)u(\alpha_1,\alpha_2)$. The
Fourier modes of $v$ are related to those of $u$ by convolution along
horizontal slices, $\hat v_{l_1,l_2} = \sum_j \hat\gamma_{l_1-j}\hat u_{j,l_2}$.
Using the bound on $|\hat\gamma_j|$, we have
\begin{equation}
  \begin{aligned}
    e^{\rho(|l_1|+|l_2|)}\big|\hat v_{l_1,l_2}\big| &\le
    \sum_j Ce^{-\rho_1|l_1-j|}
    e^{\rho(|l_1|-|j|)}\left( e^{\rho(|j|+|l_2|)}\big|\hat u_{j,l_2}\big|\right) \\
    &\le \sum_j Ce^{-(\rho_1-\rho)|l_1-j|}e^{-\rho\big(|l_1-j|-|l_1|+|j|\big)}
    \left( e^{\rho(|j|+|l_2|)}\big|\hat u_{j,l_2}\big|\right).
  \end{aligned}
\end{equation}
Since $|l_1|\le|l_1-j|+|j|$, we can use Young's inequality to conclude
\begin{equation}\label{eq:young}
  \sum_{l_1} e^{2\rho(|l_1|+|l_2|)}\big|\hat v_{l_1,l_2}\big|^2 \le
  A^2 \sum_{j} e^{2\rho(|j|+|l_2|)}\big|\hat u_{j,l_2}\big|^2, \qquad
  (l_2\in\mbb Z),
\end{equation}
where $A = \sum_m Ce^{-(\rho_1-\rho)|m|}\le
C+2C/(\rho_1-\rho)<\infty$. Summing \eqref{eq:young} over $l_2$ and
square rooting gives $\|v\|_{\mc V_\rho}\le A\|u\|_{\mc V_\rho}$, as
claimed.
\end{proof}

\bibliographystyle{abbrv}


\begin{thebibliography}{10}

\bibitem{akers2014gravity}
B.~F. Akers, D.~M. Ambrose, and J.~D. Wright.
\newblock Gravity perturbed {Crapper} waves.
\newblock {\em Proc. R. Soc. A}, 470:20130526, 2014.

\bibitem{akers2012wilton}
B.~F. Akers and W.~Gao.
\newblock Wilton ripples in weakly nonlinear model equations.
\newblock {\em Commun. Math. Sci.}, 10(3):1015--1024, 2012.

\bibitem{antman:81}
J.~C. Alexander and S.~S. Antman.
\newblock Global and local behavior of bifurcating multi-dimensional continua
  of solutions for multiparameter nonlinear eigenvalue problems.
\newblock {\em Arch. Ration. Mech. Anal.}, 76:339--354, 1981.

\bibitem{allgower2012numerical}
E.~L. Allgower and K.~Georg.
\newblock {\em Numerical continuation methods: an introduction}, volume~13.
\newblock Springer Science \& Business Media, 2012.

\bibitem{allgower:97}
E.~L. Allgower and H.~Schwetlick.
\newblock A general view of minimally extended systems for simple bifurcation
  points.
\newblock {\em Z. Angew. Math. Mech.}, 77:83--97, 1997.

\bibitem{benj1}
D.~M. Ambrose and J.~Wilkening.
\newblock Computation of time-periodic solutions of the {Benjamin}--{Ono}
  equation.
\newblock {\em J. Nonlinear Sci.}, 20(3):277--308, 2010.

\bibitem{amick:82}
C.~J. Amick, L.~E. Fraenkel, and J.~F. Toland.
\newblock On the {Stokes} conjecture for the wave of extreme form.
\newblock {\em Acta Math.}, 148(1):193--214, 1982.

\bibitem{lapack:user:guide}
E.~Anderson, Z.~Bai, C.~Bischof, S.~Blackford, J.~Demmel, J.~Dongarra, J.~D.
  Croz, A.~Greenbaum, S.~Hammarling, A.~McKenney, and D.~Sorensen.
\newblock {\em {LAPACK} Users' Guide}.
\newblock SIAM, Philadelphia, third edition, 1999.

\bibitem{antman:book}
S.~S. Antman.
\newblock {\em Nonlinear Problems of Elasticity}.
\newblock Springer, New York, 2nd edition, 2004.

\bibitem{baldi2018time}
P.~Baldi, M.~Berti, E.~Haus, and R.~Montalto.
\newblock Time quasi-periodic gravity water waves in finite depth.
\newblock {\em Invent. Math.}, 214(2):739--911, 2018.

\bibitem{benj:feir:67}
T.~B. Benjamin and J.~Feir.
\newblock The disintegration of wave trains on deep water.
\newblock {\em J. Fluid Mech.}, 27(3):417--430, 1967.

\bibitem{berti2020traveling}
M.~Berti, L.~Franzoi, and A.~Maspero.
\newblock Traveling quasi-periodic water waves with constant vorticity.
\newblock {\em Arch. Rational Mech. Anal.}, 240:99--202, 2021.

\bibitem{berti2016quasi}
M.~Berti and R.~Montalto.
\newblock {\em Quasi-periodic standing wave solutions of gravity-capillary
  water waves}, volume 263 of {\em Memoirs of the American Mathematical
  Society}.
\newblock American Mathematical Society, 2016.

\bibitem{beyn01}
W.-J. Beyn, W.~Kle\ss, and V.~Th\"ummler.
\newblock Continuation of low-dimensional invariant subspaces in dynamical
  systems of large dimension.
\newblock In B.~Fiedler, editor, {\em Ergodic Theory, Analysis, and Efficient
  Simulation of Dynamical Systems}, pages 47--72. Springer, Berlin, 2001.

\bibitem{bindel08}
D.~Bindel, J.~Demmel, and M.~Friedman.
\newblock Continuation of invariant subspaces in large bifurcation problems.
\newblock {\em SIAM J. Sci. Comput.}, 30(2):637--656, 2008.

\bibitem{bindel14}
D.~Bindel, M.~Friedman, W.~Govaerts, J.~Hughes, and Y.~Kuznetsov.
\newblock Numerical computation of bifurcations in large equilibrium systems in
  matlab.
\newblock {\em J. Comput. Appl. Math.}, 261:232--248, 2014.

\bibitem{brent:73}
R.~P. Brent.
\newblock {\em Algorithms for minimization without derivatives}.
\newblock Prentice Hall, Inc., Englewood Cliffs, New Jersey, 1973.

\bibitem{bridges1996spatially}
T.~Bridges and F.~Dias.
\newblock Spatially quasi-periodic capillary-gravity waves.
\newblock {\em Contemp. Math.}, 200:31--46, 1996.

\bibitem{broer:book}
H.~Broer and F.~Takens.
\newblock {\em Dynamical Systems and Chaos}, volume 172 of {\em Applied
  Mathematical Sciences}.
\newblock Springer, New York, 2011.

\bibitem{buffoni2000regularity}
B.~Buffoni, E.~Dancer, and J.~Toland.
\newblock The regularity and local bifurcation of steady periodic water waves.
\newblock {\em Arch. Ration. Mech. Anal.}, 152(3):207--240, 2000.

\bibitem{buffoni2000sub}
B.~Buffoni, E.~Dancer, and J.~Toland.
\newblock The sub-harmonic bifurcation of {Stokes} waves.
\newblock {\em Arch. Ration. Mech. Anal.}, 152(3):241--271, 2000.

\bibitem{mehrmann:svd}
A.~Bunse-Gerstner, R.~Byers, V.~Mehrmann, and N.~K. Nichols.
\newblock Numerical computation of an analytic singular value decomposition of
  a matrix valued function.
\newblock {\em Numer. Math.}, 60:1--39, 1991.

\bibitem{chen1979steady}
B.~Chen and P.~Saffman.
\newblock Steady gravity-capillary waves on deep water -- 1. weakly nonlinear
  waves.
\newblock {\em Stud. Appl. Math.}, 60(3):183--210, 1979.

\bibitem{chen1980numerical}
B.~Chen and P.~Saffman.
\newblock Numerical evidence for the existence of new types of gravity waves of
  permanent form on deep water.
\newblock {\em Stud. Appl. Math.}, 62(1):1--21, 1980.

\bibitem{choi1999exact}
W.~Choi and R.~Camassa.
\newblock Exact evolution equations for surface waves.
\newblock {\em J. Eng. Mech.}, 125(7):756--760, 1999.

\bibitem{chow:svd:88}
S.-N. Chow and Y.-Q. Shen.
\newblock Bifurcations via singular value decompositions.
\newblock {\em Appl. Math. Comput.}, 28:231--245, 1988.

\bibitem{Crandall:bifur:1971}
M.~G. Crandall and P.~H. Rabinowitz.
\newblock Bifurcation from simple eigenvalues.
\newblock {\em J. Funct. Anal.}, 8(2):321--340, 1971.

\bibitem{crapper}
G.~D. Crapper.
\newblock An exact solution for progressive capillary waves of arbitrary
  amplitude.
\newblock {\em J. Fluid Mech.}, 2:532--540, 1957.

\bibitem{oliveras:11}
B.~Deconinck and K.~Oliveras.
\newblock The instability of periodic surface gravity waves.
\newblock {\em J. Fluid Mech.}, 675:141--167, 2011.

\bibitem{demmel01}
J.~Demmel, L.~Dieci, and M.~Friedman.
\newblock Computing connecting orbits via an improved algorithm for continuing
  invariant subspaces.
\newblock {\em SIAM J. Sci. Comput.}, 22:81--–94, 2001.

\bibitem{demmel:kahan}
J.~Demmel and W.~Kahan.
\newblock Accurate singular values of bidiagonal matrices.
\newblock {\em SIAM J. Sci. Stat. Comput.}, 11(5):873--912, 1990.

\bibitem{demmel:book}
J.~W. Demmel.
\newblock {\em Applied Numerical Linear Algebra}.
\newblock SIAM, Philadelphia, 1997.

\bibitem{dhooge:03}
A.~Dhooge, W.~Govaerts, and Y.~Kuznetsov.
\newblock {MATCONT}: A {MATLAB} package for numerical bifurcation analysis of
  {ODE}s.
\newblock {\em ACM Trans. Math. Software}, 29:141--164, 2003.

\bibitem{dieci99}
L.~Dieci and T.~Eirola.
\newblock On smooth decompositions of matrices.
\newblock {\em SIAM J. Matrix Anal. Appl.}, 20:800--819, 1999.

\bibitem{dieci01}
L.~Dieci and M.~Friedman.
\newblock Continuation of invariant subspaces.
\newblock {\em Numer. Linear Algebra Appl.}, 8:317--327, 2001.

\bibitem{djordjevic77}
V.~D. Djordjevic and L.~G. Redekopp.
\newblock On two-dimensional packets of capillary-gravity waves.
\newblock {\em J. Fluid Mech.}, 79:703--714, 1977.

\bibitem{dyachenko2001dynamics}
A.~I. Dyachenko.
\newblock On the dynamics of an ideal fluid with a free surface.
\newblock {\em Dokl. Math.}, 63(1):115--117, 2001.

\bibitem{dyachenko1996analytical}
A.~I. Dyachenko, E.~A. Kuznetsov, M.~Spector, and V.~E. Zakharov.
\newblock Analytical description of the free surface dynamics of an ideal fluid
  (canonical formalism and conformal mapping).
\newblock {\em Physics Letters A}, 221(1-2):73--79, 1996.

\bibitem{dyachenko1996nonlinear}
A.~I. Dyachenko, V.~E. Zakharov, and E.~A. Kuznetsov.
\newblock Nonlinear dynamics of the free surface of an ideal fluid.
\newblock {\em Plasma Physics Reports}, 22(10):829--840, 1996.

\bibitem{sergey:I}
S.~Dyachenko, P.~Lushnikov, and A.~Korotkevich.
\newblock Branch cuts of {Stokes} wave on deep water. part {I}: Numerical
  solution and {Pad\'e} approximation.
\newblock {\em Stud. in Appl. Math}, 2016.

\bibitem{dyachenko:newell:16}
S.~Dyachenko and A.~C. Newell.
\newblock Whitecapping.
\newblock {\em Studies Appl. Math.}, 137:199--213, 2016.

\bibitem{dyachenko2016branch}
S.~A. Dyachenko, P.~M. Lushnikov, and A.~O. Korotkevich.
\newblock Branch cuts of {Stokes} wave on deep water. {Part} {I}: {Numerical}
  solution and {Pad\'e} approximation.
\newblock {\em Stud. Appl. Math.}, 137(4):419--472, 2016.

\bibitem{dynnikov2005topology}
I.~A. Dynnikov and S.~P. Novikov.
\newblock Topology of quasi-periodic functions on the plane.
\newblock {\em Russ. Math. Surv.}, 60(1):1, 2005.

\bibitem{feola2020trav}
R.~Feola and F.~Giuliani.
\newblock Quasi-periodic traveling waves on an infinitely deep perfect fluid
  under gravity, 2020.
\newblock arXiv:2005.08280.

\bibitem{fernando:parlett}
K.~V. Fernando and B.~N. Parlett.
\newblock Accurate singular values and differential qd algorithms.
\newblock {\em Numer. Math.}, 67:191--229, 1994.

\bibitem{friedman01}
M.~Friedman.
\newblock Improved detection of bifurcations in large nonlinear systems via the
  continuation of invariant subspaces algorithm.
\newblock {\em Int. J. Bifurcation Chaos Appl. Sci. Eng.}, 11:2277--2285, 2001.

\bibitem{gandzha:07}
I.~S. Gandzha and V.~P. Lukomsky.
\newblock On water waves with a corner at the crest.
\newblock {\em Proc. R. Soc. A}, 463:1597--1614, 2007.

\bibitem{govaerts:book}
W.~J.~F. Govaerts.
\newblock {\em Numerical Methods for Bifurcations of Dynamical Equilibria}.
\newblock SIAM, Philadelphia, 2000.

\bibitem{griewank:84}
A.~Griewank and G.~Reddien.
\newblock Characterization and computation of generalized turning points.
\newblock {\em SIAM J. Numer. Anal.}, 21:176--185, 1984.

\bibitem{guyenne:12}
P.~Guyenne and E.~I.~P. ar\u au.
\newblock Computations of fully nonlinear hydroelastic solitary waves on deep
  water.
\newblock {\em J. Fluid Mech.}, 713:307--329, 2012.

\bibitem{higham:book}
N.~J. Higham.
\newblock {\em Accuracy and stability of numerical algorithms}.
\newblock SIAM, Philadelphia, 1996.

\bibitem{iooss05}
G.~Iooss, P.~I. Plotnikov, and J.~F. Toland.
\newblock Standing waves on an infinitely deep perfect fluid under gravity.
\newblock {\em Arch. Rat. Mech. Anal.}, 177:367--478, 2005.

\bibitem{kato}
T.~Kato.
\newblock {\em Perturbation theory for linear operators}.
\newblock Springer, Berlin, 1980.

\bibitem{kinnersley76}
W.~Kinnersley.
\newblock Exact large amplitude capillary waves on sheets of fluid.
\newblock {\em J. Fluid Mech.}, 77:229--241, 1976.

\bibitem{kittel:book}
C.~Kittel.
\newblock {\em Introduction to Solid State Physics}.
\newblock John Wiley and Sons, New York, 8th edition, 2005.

\bibitem{krantz}
S.~G. Krantz.
\newblock {\em Several Complex Variables}.
\newblock AMS Chelsea Publishing, Providence, RI, 2nd edition, 2001.

\bibitem{kuznetsov:book}
Y.~A. Kuznetsov.
\newblock {\em Elements of Applied Bifurcation Theory}.
\newblock Springer-Verlag, New York, 2nd edition, 1998.

\bibitem{longuet:78}
M.~S. Longuet-Higgins.
\newblock The instabilities of gravity waves of finite amplitude in deep water.
  {II}. {Subharmonics}.
\newblock {\em Proc. R. Soc. Lond. A}, 360:489--505, 1978.

\bibitem{lhf:78}
M.~S. Longuet-Higgins and M.~J.~H. Fox.
\newblock Theory of the almost-highest wave. {Part} 2. {Matching} and analytic
  extension.
\newblock {\em J. Fluid Mech.}, 85(4):769--786, 1978.

\bibitem{lushnikov:17}
P.~Lushnikov, S.~Dyachenko, and D.~Silantyev.
\newblock New conformal mapping for adaptive resolving of the complex
  singularities of {Stokes} wave.
\newblock {\em Proc. R. Soc. A}, 473:20170198, 2017.

\bibitem{mackay:86}
R.~S. MacKay and P.~G. Saffman.
\newblock Stability of water waves.
\newblock {\em Proc. R. Soc. Lond. A}, 406:115--125, 1986.

\bibitem{mclean:82}
J.~W. McLean.
\newblock Instabilities of finite-amplitude water waves.
\newblock {\em J. Fluid Mech.}, 114:315--330, 1982.

\bibitem{mercer:92}
G.~N. Mercer and A.~J. Roberts.
\newblock {Standing waves in deep water: Their stability and extreme form}.
\newblock {\em Phys. Fluids A}, 4(2):259--269, 1992.

\bibitem{moser1966theory}
J.~Moser.
\newblock On the theory of quasiperiodic motions.
\newblock {\em Siam Review}, 8(2):145--172, 1966.

\bibitem{murashige:20}
S.~Murashige and W.~Choi.
\newblock Stability analysis of deep-water waves on a linear shear current
  using unsteady conformal mapping.
\newblock {\em J. Fluid Mech.}, 885:A41:1--27, 2020.

\bibitem{nekrasov1921steady}
A.~I. Nekrasov.
\newblock On steady waves.
\newblock {\em Izv. Ivanovo-Voznesensk. Politekhn. In-ta}, 3:52--65, 1921.

\bibitem{nocedal}
J.~Nocedal and S.~J. Wright.
\newblock {\em Numerical Optimization}.
\newblock Springer, New York, 1999.

\bibitem{penney:52}
W.~G. Penney and A.~T. Price.
\newblock Finite periodic stationary gravity waves in a perfect liquid, part
  {II}.
\newblock {\em Phil. Trans. R. Soc. London A}, 244:254--284, 1952.

\bibitem{plotnikov01}
P.~Plotnikov and J.~Toland.
\newblock Nash-moser theory for standing water waves.
\newblock {\em Arch. Rat. Mech. Anal.}, 159:1--83, 2001.

\bibitem{plotnikov1992nonuniqueness}
P.~I. Plotnikov.
\newblock Nonuniqueness of solutions of the problem of solitary waves and
  bifurcation of critical points of smooth functionals.
\newblock {\em Math. USSR-Izvestiya}, 38(2):333, 1992.

\bibitem{ruban:2005}
V.~P. Ruban.
\newblock Water waves over a time-dependent bottom: exact description for {2D}
  potential flows.
\newblock {\em Phys. Letters A}, 340:194--200, 2005.

\bibitem{schultz}
W.~W. Schultz, J.~M. Vanden-Broeck, L.~Jiang, and M.~Perlin.
\newblock Highly nonlinear standing water waves with small capillary effect.
\newblock {\em J. Fluid Mech.}, 369:253--272, 1998.

\bibitem{schwartz:79}
L.~W. Schwartz and J.-M. Vanden-Broeck.
\newblock Numerical solution of the exact equations for capillary–gravity
  waves.
\newblock {\em J. Fluid Mech.}, 96:119--139, 1979.

\bibitem{schwetlick:03}
H.~Schwetlick and U.~Schnabel.
\newblock Iterative computation of the smallest singular value and the
  corresponding singular vectors of a matrix.
\newblock {\em Linear Algegra Appl.}, 371:1--30, 2003.

\bibitem{shen:bif:97}
Y.-Q. Shen.
\newblock Computation of a simple bifurcation point using one singular value
  decomposition nearby.
\newblock {\em Computing}, 58:335--350, 1997.

\bibitem{stokes1880theory}
G.~G. Stokes.
\newblock On the theory of oscillatory waves.
\newblock {\em Trans. Cambridge Philos. Soc.}, 1880.

\bibitem{tanaka:83}
M.~Tanaka.
\newblock The stability of steep gravity waves.
\newblock {\em J. Phys. Soc. Jpn.}, 52(9):3047--3055, 1983.

\bibitem{tanveer91}
S.~Tanveer.
\newblock Singularities in water waves and {Rayleigh}--{Taylor} instability.
\newblock {\em Proc. R. Soc. Lond. A}, 435:137--158, 1991.

\bibitem{taylor:53}
G.~I. Taylor.
\newblock An experimental study of standing waves.
\newblock {\em Proc. Roy. Soc. A}, 218:44--59, 1953.

\bibitem{tiron2012linear}
R.~Tiron and W.~Choi.
\newblock Linear stability of finite-amplitude capillary waves on water of
  infinite depth.
\newblock {\em J. Fluid Mech.}, 696:402, 2012.

\bibitem{toland1996stokes}
J.~F. Toland et~al.
\newblock Stokes waves.
\newblock {\em Topological Methods in Nonlinear Analysis}, 7(1):1--48, 1996.

\bibitem{trichtchenko:16}
O.~Trichtchenko, B.~Deconinck, and J.~Wilkening.
\newblock The instability of {Wilton}'s ripples.
\newblock {\em Wave Motion}, 66:147--155, 2016.

\bibitem{pde2path}
H.~Uecker, D.~Wetzel, and J.~Rademacher.
\newblock {Pde2path} - a {Matlab} package for continuation and bifurcation in
  {2D} elliptic systems.
\newblock {\em Numerical Mathematics: Theory, Methods and Applications},
  7(1):58--106, 2014.

\bibitem{vandenBroeck:book}
J.-M. Vanden-Broeck.
\newblock {\em Gravity--Capillary Free--Surface Flows}.
\newblock Cambridge University Press, Cambridge, 2010.

\bibitem{vanden2014periodic}
J.-M. Vanden-Broeck.
\newblock On periodic and solitary pure gravity waves in water of infinite
  depth.
\newblock {\em J. Eng. Math.}, 84(1):173--180, 2014.

\bibitem{wang2013two}
Z.~Wang, J.~Vanden-Broeck, and P.~Milewski.
\newblock Two-dimensional flexural--gravity waves of finite amplitude in deep
  water.
\newblock {\em IMA Journal of Applied Mathematics}, 78(4):750--761, 2013.

\bibitem{Achain}
J.~Wilkening.
\newblock An algorithm for computing {Jordan} chains and inverting analytic
  matrix functions.
\newblock {\em Linear Algebra Appl.}, 427:6--25, 2007.

\bibitem{water1}
J.~Wilkening.
\newblock Breakdown of self-similarity at the crests of large amplitude
  standing water waves.
\newblock {\em Phys. Rev. Lett}, 107:184501, 2011.

\bibitem{waterTS}
J.~Wilkening.
\newblock Traveling-standing water waves.
\newblock {\em Fluids}, 6:187:1--35, 2021.

\bibitem{wilkening2012overdetermined}
J.~Wilkening and J.~Yu.
\newblock Overdetermined shooting methods for computing standing water waves
  with spectral accuracy.
\newblock {\em Computational Science \& Discovery}, 5(1):014017, 2012.

\bibitem{quasi:trav}
J.~Wilkening and X.~Zhao.
\newblock Quasi-periodic travelling gravity-capillary waves.
\newblock {\em J. Fluid Mech.}, 915:A7:1--35, 2021.

\bibitem{quasi:ivp}
J.~Wilkening and X.~Zhao.
\newblock Spatially quasi-periodic water waves of infinite depth.
\newblock {\em J. Nonlinear Sci.}, 31:52:1--43, 2021.

\bibitem{wilton1915}
J.~Wilton.
\newblock On ripples.
\newblock {\em Philosophical Magazine Series 6}, 29(173):688--700, 1915.

\bibitem{zakharov1968stability}
V.~E. Zakharov.
\newblock Stability of periodic waves of finite amplitude on the surface of a
  deep fluid.
\newblock {\em J. Appl. Mech. Tech. Phys.}, 9(2):190--194, 1968.

\bibitem{zeidlerAFA}
E.~Zeidler.
\newblock {\em Applied Functional Analysis: Main Principles and Their
  Applications}.
\newblock Springer, New York, 1995.

\bibitem{zufiria87}
J.~Zufiria.
\newblock Non-symmetric gravity waves on water of infinite depth.
\newblock {\em J. Fluid Mech.}, 181:17--39, 1987.

\end{thebibliography}

\end{document}